%% file: rosa25_hal.tex
\documentclass[11pt]{article}

\usepackage[a4paper,margin=2.5cm]{geometry}
\usepackage{graphicx}
\usepackage{bm}
\usepackage{xcolor}
\usepackage{mathrsfs}
\usepackage{enumitem}
\usepackage{algorithm}
\usepackage{float}
\usepackage{xspace}
\usepackage{comment}
\usepackage{algpseudocode}
\usepackage{amsmath}
\usepackage{amssymb}
\usepackage{amsthm}
\usepackage{balance}
\usepackage{hyperref}
\usepackage[capitalize]{cleveref} 
\usepackage{microtype}

\allowdisplaybreaks           
\theoremstyle{plain}
\newtheorem{theorem}{Theorem}[section]
\newtheorem{lemma}[theorem]{Lemma}
\newtheorem{proposition}[theorem]{Proposition}
\newtheorem{corollary}[theorem]{Corollary}
\theoremstyle{definition}
\newtheorem{definition}[theorem]{Definition}
\theoremstyle{remark}

\numberwithin{equation}{section}

\newcommand{\Z}{{\mathbb{Z}}} 
\newcommand{\R}{{\mathbb{R}}} 
\newcommand{\Q}{{\mathbb{Q}}} 
\newcommand{\C}{{\mathbb{C}}} 

\newcommand{\GL}{{\mathrm{GL}}} 

\newcommand{\bmM}{{\bm{M}}} 
\newcommand{\bmA}{{\bm{A}}} 

\newcommand{\bmx}{{\bm{x}}} 
\newcommand{\bmc}{{\bm{c}}} 
\newcommand{\bma}{{\bm{a}}} 
\newcommand{\bmg}{{\bm{g}}}
\newcommand{\bmm}{{\bm{m}}}
\newcommand{\bmf}{{\bm{f}}}

\newcommand{\calC}{{\mathcal{C}}} 
\newcommand{\calS}{{\mathcal{S}}} 
\newcommand{\calP}{{\mathcal{P}}}

\newcommand{\calH}{{\mathcal{H}}}
\newcommand{\calV}{{\mathcal{V}}}
\newcommand{\calE}{{\mathcal{E}}}

\newcommand{\dep}{{\delta_\calP}}
\newcommand{\tp}{{\tau_\calP}}
\newcommand{\de}{\delta}

\newcommand{\frakP}{{\mathfrak{P}}}
\newcommand{\frakA}{{\mathfrak{A}}} 
\newcommand{\fraka}{{\mathfrak{a}}} 
\newcommand{\frakc}{{\mathfrak{c}}} 

\newcommand{\scrK}{{\mathscr{K}}} 
\newcommand{\sing}{{\textrm{sing}}}
\newcommand{\reg}{{\textrm{reg}}}
\newcommand{\app}{{\mathrm{app}}}
\newcommand{\heightP}{{ \softO{n^6d^{n+1} + n^8d^n(h+\log\left(
\frac{1}{\epsilon} \right) }}}
\newcommand{\curve}{{\calC}} 
\newcommand{\softO}[1]{\mathchoice{\tilde{O}\left(#1\right)}{O\tilde{~}(#1)}{O\tilde{~}(#1)}{O\tilde{~}(#1)}} 

\newcommand{\hypS}{{(\mathsf{S})}}

\newcommand{\PartialComponents}{{\textsc{PartialComponents}}\xspace}
\newcommand{\RegularPoints}{{\textsc{RegularPoints}}\xspace}
\newcommand{\ConnectParams}{{\textsc{ConnectParams}}\xspace}
\newcommand{\PlanarComponents}{{\textsc{PlanarComponents}}\xspace}
\newcommand{\GenPlanarComponents}{{\textsc{GenPlanarComponents}}\xspace}
\newcommand{\ComputeRealComponents}{{\textsc{CurveComponents}}\xspace}

\newcommand{\querypoints}{{\calP}}
\newcommand{\queryparam}{{\mathfrak{P}}}
\newcommand{\curveparam}{{\mathfrak{R}}}
\newcommand{\Zeroes}{{\mathsf{Z}}}

\newcommand{\compprecond}{{\softO{\delta^6\tau + \delta^7}}}

\newcommand{\compconnectparametrizations}{{O\left( \delta^4+\delta \delta_\mathcal{T}\right)}}

\newcommand{\compsimpplanargraph}{{\softO{(\delta^6+\delta^4\delta_\querypoints+\delta
      \delta_{\querypoints}^2)
(\tau+\delta)+(\delta^5+\delta^3\delta_\querypoints+\delta_\querypoints^2)
\tau_\querypoints+\delta_\calP^3}}}
\newcommand{\complgenplanarcomponents}{{\widetilde{O}((\delta^6+\delta^4\delta_\querypoints+\delta
    \delta_{\querypoints}^2)
(\tau+\delta)+(\delta^5+\delta^3\delta_\querypoints+\delta_\querypoints^2)
\tau_\querypoints+\delta_\querypoints^3+\delta_q^2\tau_q)}}
\newcommand{\compP}{{
    \softO{n^{14}d^{2n+2}\binom{n+d}{n}\left( h+\log(\frac{1}{\epsilon})
\right)}
}}
\newcommand{\totalcompP}{{\softO{n^{15}d^{2n+2}\binom{n+d}{n}\left( h+\log(\frac{1}{\epsilon})
\right)}}}

\newcommand{\totalcomplspace}{
\[
\begin{aligned}
\widetilde{O}\!\Bigl(
  &n d^{7n-7} h + n^6d^{6n-4}+n^9d^{5n-2}+n^{12}d^{3n+1}+ \\[-0.5em]
  &\bigl(d^{6n-6}+n^4 d^{5n-4}+n^7 d^{3n-1}\bigr)
    \overline{\tau}+
    \!(h+\log\!\frac{1}{\varepsilon})\times 
    \\[-0.5em]
  &\Bigl(
      n^8\!\bigl(d^{6n-5}+n^3 d^{5n-3}+n^6 d^{3n}\bigr)
      + \binom{n+d}{n}\!\bigl(n^4 d^{3n}+n^{16} d^{2n+4}\bigr)
    \Bigr)
    \Bigr).
\end{aligned}
\]
}
\newcommand{\simplifiedtotalcompspace}{{\widetilde{O}(n^{12}d^{7n-7}(h+\overline{\tau}+1)+
(h+\log \frac{1}{\varepsilon})(n^{14} d^{6n-5}+\binom{n+d}{n} n^{16} d^{3n}))}}

\newcommand{\componedimparam}{{\softO{(h+\log(\frac{1}{\epsilon}))\binom{n+d}{n}n^4d^{3n}}}}
\newcommand{\heightonedimparam}{{\softO{d^{n-1}(nh+n)}}}

\title{Computing the Connected Components of Real Algebraic Curves}

\newcommand{\authorspacing}{1.8cm} 
\author{%
  \begin{tabular}{@{}c@{\hspace{\authorspacing}}c@{}}
    Elisabetta Rocchi & Mohab Safey El Din \\
    \small\texttt{Elisabetta.Rocchi@lip6.fr} & \small\texttt{Mohab.Safey@lip6.fr}
  \end{tabular}\\[1em]
  \small Sorbonne Universit\'e, LIP6, CNRS, UMR7606, Paris, France
}

\date{}

\begin{document}

\maketitle

\begin{abstract}
Connected components of real algebraic sets are semi-algebraic, i.e.
they are described by a boolean formula whose atoms are polynomial
constraints with real coefficients. Computing such descriptions finds
topical applications in optical system design and robotics. In this
paper, we design a new algorithm for computing such semi-algebraic
descriptions for real algebraic curves. Notably, its complexity is
less than the best known one for computing a graph which is isotopic
to the real space curve under study.
\end{abstract}

\input{rosa25_source}

\end{document}

%% file: rosa25_source.tex

\section[\NoCaseChange{Introduction}]{\texorpdfstring{\NoCaseChange{Introduction}}{Introduction}}

\paragraph*{Problem statement.} Connected components of real algebraic
sets are \emph{semi-algebraic sets} and then described 
by a boolean formula of polynomial constraints with
real coefficients. "Computing the connected components" 
means computing such descriptions. \\
This paper is devoted to the computation of
the connected components of real algebraic curves of $\R^n$. 

\paragraph*{Prior works and motivation.}
This is motivated by optical applications where 
an optical co-axial design
classification problem \cite{LBVR21, Drogoul25}
boils down to compute the connected components of space curves. 
A first approach is to combine the cylindrical algebraic decomposition
(CAD) algorithm \cite{Collins} with algorithms establishing adjacency
relations between cells computed by CAD such as
\cite{McCaCo02,ArCoMcCa88, ArCoMcCa98}, or \cite{Strz17}. 
This is doubly exponential in $n$ \cite{BD07}. 

\cite[Chap. 16]{BPR} shows that one can compute the connected
components of semi-algebraic
sets defined by $s$ polynomials of degree at most $d$, 
using $s^{n+1}d^{O\left( n^4 \right)}$ arithmetic operations 
(see also \cite{CGV92,
HRS94}).

For sets of dimension one, starting with re\-al
algebraic cur\-ves, one might expect better 
results. Many cont\-ributions 
"compute the topology" of plane curves (e.g.~\cite{AlMoWi08,
ArCa88, BerberichEmeliyanenkoKobelSagraloff2013, BuChoGaYa08, CheWu18, CheLaPePoRoTsi10, EiKe08,
EiKeWo07, GoNe02, Hong96, KoSa15,
DiattaDiattaRouillierRoySagraloff2022, SeiWo05}) and space curves in
$\R^3$ (e.g.~\cite{AlSe05, CheJiLa13, DaMoRu08, ElKahoui08,
GaLaMoTe05, JinChe20}). 
"Computing the topology" means here computing an embedded
graph, isotopic to the curve under study. For $f \in \Z[\bmx]$ (with
$\bmx = x_1, \ldots, x_n$),
let $(d, \tau)$ be its magnitude, i.e. $d$ and $\tau$ are 
its degree and the maximum bit size of its coefficients. 
As far as we know, the best complexity for space curves in
$\R^3$ is achieved by the deterministic algorithm in
\cite{cheng:hal-04874978} which uses $\softO{ d^{18} +
d^{17}\tau }$ bit operations where $(d, \tau)$ bounds the magnitudes of
the input polynomials ($\softO{.}$ notation
means that poly-logarithmic factors are omitted). \\ 
Counting the
connected components of a given curve or answering connectivity
queries might be cheaper. To go further, we
recall now how such query points and curves can be encoded. \\
A zero-dimensional parametrization $\queryparam$ 
is $\lambda, (w, v_1,
\ldots, v_n)$ in $\Q[\bmx]\times \Q[t]^{n+1}$ with
$\lambda$ is a linear form, $w$ is square-free and monic. It defines
the set $\Zeroes(\queryparam) = 
\left \{\left( \frac{v_1}{\partial w / \partial t},
    \ldots, \frac{v_n}{\partial w / \partial t} \right)(\theta) \mid
  w(\theta) = 0\right \} $. All finite sets of $\Q$-algebraic points
  can be encoded by a zero-dimensional parametrization. This is how
  query points are encoded.\\
A one-dimensional parametrization $\curveparam = (\lambda, \mu, (w, v_1,
\ldots, v_n)) \in \Q[\bmx]^2\times \Q[t,u]^{n+1}$ with
$\lambda, \mu$ being linear forms, and $w$ be square-free and monic in $t$ and
$u$ defines the curve $\Zeroes(\curveparam)$ which is the Zariski closure
of $\left \{\left( \frac{v_1}{\partial w / \partial u},
    \ldots, \frac{v_n}{\partial w / \partial u} \right)(\theta, \eta) \mid
  w(\theta, \eta) = 0, \frac{\partial w}{\partial t}(\theta, \eta)\neq
0\right \} $. All $\Q$-algebraic curves can be
defined by such a parametrization. 

\cite{IslamPoteauxPrebet2023} shows that, given a one-dimensional
parametrization of magnitude $\left(\delta, \tau \right)$ encoding a
curve $\curve$ in $\C^n$  and a zero-dimensional parametrization of
magnitude $(\mu, \kappa)$ encoding $\calP\subset \curve$, the connected
components of $\curve\cap\R^n$ and connectivity queries involving
$\calP\cap\R^n$ can be counted/answered using $\softO{ \delta^{6}
+\mu^{6} +\delta^{5}\tau +\mu^{5}\kappa}$ bit operations. For curves
of $\R^3$ defined as the intersection of two polynomials of maximum
magnitude $(d, \tau)$, this yields a bit complexity $\softO{
d^{12}(1+\tau) }$, hence less than the one needed by
\cite{cheng:hal-04874978}. \\
Still, there is still a lack of algorithms with comparable
complexities for computing the connected components of real algebraic
curves.

\paragraph*{Main contributions.} 
We design an algorithm named \ComputeRealComponents for computing the
connected components of a real algebraic curve in $\R^n$ given as the
solution set to a system of polynomial equations $f_1 = \cdots
=f_{n-1}=0$ with coefficients in $\Q$ and generating a radical ideal,
whose associated algebraic curve is denoted by $\curve$. 
\\
This algorithm builds upon \cite{IslamPoteauxPrebet2023} which shows
how to analyze the connectivity of $\curve_\R = \curve\cap \R^n$,
from the analysis of the topology of the projection of $\curve$ on a
generic enough plane. Hence, a first change of coordinates induced by
$\bmA_1\in \GL_n(\Q)$ is performed to ensure genericity properties
(called $(H)$ in \cite{IslamPoteauxPrebet2023}) and obtain a
one-dimensional parametrization $\curveparam_1$ for $\curve^{\bmA_1} =
\{\bmA_1^{-1}x \mid x \in \curve\} $. In that situation, the
projection $\curve_{2, \bmA_1}$ of $\curve^{\bmA_1}$ on the first two
coordinates is an algebraic curve. As shown in
\cite{IslamPoteauxPrebet2023}, the connectivity properties of
$\curve_{2, \bmA_1}\cap \R^2$ can be easily lifted to
$\R^n$ except at the set of apparent singularities $\app\left(
\curve_{2, \bmA_1} \right)$ of
$\curve_{2,\bmA_1}$ which is the set of singularities of $\curve_{2,
\bmA_1}$ outside the projection of the singular points of
$\curve^{\bmA_1}$. One of the tour de force in
\cite{IslamPoteauxPrebet2023} is to show how to lift connectivity
properties to $\R^n$ when passing through $\app(\curve_{2,
\bmA_1})\cap \R^2$.
\\
Our first auxiliary contribution is an
algorithm named \GenPlanarComponents which computes 
semi-algebraic descriptions of all connected semi-algebraic branch
curves of $\curve_{2, \bmA_1}\cap \R^2 \setminus \app(\curve_{2,
\bmA_1})$ (or $\curve_{2, \bmA_1}\cap \R^2$) using 
$\softO{\delta^6(\tau+\delta)}$ bit operations, where $\left( \delta,
\tau\right)$ is the magnitude of $\curveparam_1$ (hence $\delta$ is
the degree of $\curve$); see \cref{thm:saisotopygraph-plane}. Note
that this is $\delta$ times more than the complexity of the algorithm
in \cite{IslamPoteauxPrebet2023} for counting the connected components
of $\curve_\R$.

These semi-algebraic descriptions can be lifted in $\R^n$. 
Thanks to \cite{IslamPoteauxPrebet2023}, one can gather those 
semi-algebraic curves lying in the same connected components of
$\curve_\R$. Still, we are missing the "failure" points in 
$F_1 = \curve^{\bmA_1} \setminus \pi_2^{-1}\left( \app\left( \curve_{2,
\bmA_1} \right)  \right)$ (where $\pi_2$ denotes the projection on the
first two coordinates). 
To repair this, one could try to compute $F_1$.
However, a degree estimation shows that the degree of
$\app\left( \curve_{2, \bmA_1}\right)$ lies in $O\left( \delta^2
\right)$ and then $F_1$ has degree in $O(\delta^3)$. It seems unlikely to
compute $F_1$ without exceeding the previous cost. 

The basic idea on which we rely is the
follo\-wing. If, by chance, doing the same computation with a second
distinct matrix $\bmA_2$ yields failure points $F_2$ such that
$F_1^{\bmA_1^{-1}}\cap F_2^{\bmA_2^{-1}}=\emptyset$, then one can hope
to recombine the semi-algebraic descriptions without exceeding the
previous cost. This is what we do and we prove that all of this works
up to genericity assumptions on the
choice of $\bmA_1$ and $\bmA_2$. \\
We can now state our main result, expressed in terms of extrinsic quantities
(the magnitudes of the input polynomials). 

\begin{theorem}\label{thm:main}
  Let $\bmf=(f_1,\dots,f_{n-1})\subset \mathbb{Q}[x_1,\dots,x_n]$
  generating a radical ideal of dimension $1$, $\curve\subset \C^n$
  be the algebraic curve it defines and let $(d, h)$ be the maximum
  magnitude of the $f_i$'s.
  Let $0<\epsilon<1$ be a probability parameter.\\
  There exists a Zariski closed subset $\mathscr{Z}\subsetneq
  \GL_n(\C)$ such that for any $\bmA_1\in GL_n(\Q)
  \setminus\mathscr{Z}$, there exists a Zariski closed subset
  $\mathscr{Z}_{\bmA_1}$ such that for any $\bmA_2\in
  \mathscr{Z}_{\bmA_1}$, letting
  $\overline{\tau}$ be the maximum bit size of the entries of $\bmA_1$
  and $\bmA_2$, 
  \ComputeRealComponents$(\bmf,\varepsilon, \bmA_1, \bmA_2)$ 
  computes a semi-algebraic description
  of each connected component of
  $\mathcal{C}_{\mathbb{R}}=\curve\cap\R^n$ using
  $\simplifiedtotalcompspace$
  bit operations,
  with probability of success at least $1-\varepsilon$.
\end{theorem}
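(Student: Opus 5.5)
The proof proceeds by assembling the algorithm \ComputeRealComponents from its building blocks and bounding each stage, with correctness reduced to genericity of $\bmA_1,\bmA_2$. The stages are as follows.

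\textbf{Stage 1: parametrizations.} From $\bmf$ we compute, by a probabilistic algorithm with failure probability at most $\varepsilon/2$ per call, one-dimensional parametrizations $\curveparam_1,\curveparam_2$ of $\curve^{\bmA_1}$ and $\curve^{\bmA_2}$. This is the kind of geometric resolution whose cost is governed by $\binom{n+d}{n}$ and the degree $\delta\le d^{n-1}$ (B\'ezout). We take their magnitude to be $\delta\le d^{n-1}$ and $\tau\in\softO{d^{n-1}(nh+n)+\overline{\tau}}$, matching the \verb|\heightonedimparam| bound. We take the cost to be of the form \componedimparam.

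\textbf{Stage 2: planar descriptions.} For each $i$ we choose $\bmA_i$ outside a proper Zariski closed set, so that property $(H)$ of \cite{IslamPoteauxPrebet2023} holds. We then run \GenPlanarComponents on $\curveparam_i$, with query points $\calP$ consisting of the singular points, the points above apparent singularities, and critical points of the projection. By \cref{thm:saisotopygraph-plane} this gives semi-algebraic descriptions of the branches of $\curve_{2,\bmA_i}\cap\R^2\setminus\app(\curve_{2,\bmA_i})$ in $\softO{\delta^6(\tau+\delta)+\cdots}$ bit operations. Substituting $\delta=d^{n-1}$ and $\tau$ from Stage 1 yields the $n^{12}d^{7n-7}(h+\overline\tau+1)$ term, since $\delta^6\tau\approx d^{6n-6}\cdot d^{n-1}nh$.

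\textbf{Stage 3: lifting and grouping.} We lift each planar branch to $\R^n$ through the rational parametrization: the description is the branch's planar formula together with $x_j\,\partial w/\partial u=v_j$. Branches in a common connected component of $\curve_\R$ are then grouped using the connectivity-query result of \cite{IslamPoteauxPrebet2023}, including its treatment of apparent singularities. This produces descriptions of $\curve_\R\setminus F_i$ componentwise, where $F_i$ is the finite failure set.

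\textbf{Stage 4: recombination.} This is the step I expect to be the main obstacle. Given $\bmA_1$, we must show that for generic $\bmA_2$ the failure sets are disjoint: $F_1^{\bmA_1^{-1}}\cap F_2^{\bmA_2^{-1}}=\emptyset$.

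The argument I would try first is an incidence-variety/dimension count. $F_1$ is a fixed finite set of non-singular points of $\curve$ lying over apparent singularities. For a fixed point $p$ of $\curve$ that is smooth on $\curve$, the condition that $p$ projects onto an apparent singularity of the $\bmA_2$-projection requires another point $q\neq p$ of $\curve$ with the same image. This is a codimension $\ge1$ condition on $\bmA_2$, because the secant variety of a curve has dimension at most $3<n$ when $n\ge4$. For $n=3$ the secant lines through a fixed point form only a one-parameter family, so a generic projection direction avoids them.

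Assuming disjointness, each point of $F_1$ lies in the good region for $\bmA_2$. Two descriptions of components, one from each matrix, that intersect (a test done by evaluating at sample points via zero-dimensional parametrizations) therefore lie in the same component of $\curve_\R$. The union of each resulting equivalence class is exactly a connected component. Testing intersections costs the $\binom{n+d}{n}n^{16}d^{3n}(h+\log\frac1\varepsilon)$ and $n^{14}d^{6n-5}$ terms.

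\textbf{Conclusion.} Summing the costs of the four stages gives \simplifiedtotalcompspace. A union bound over the probabilistic steps gives success probability at least $1-\varepsilon$.
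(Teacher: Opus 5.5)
\textbf{Verdict: genuine gap in Stage~4.} Your architecture matches the paper: two generic coordinate changes, planar semi-algebraic graphs lifted through the parametrization, node resolution from \cite{IslamPoteauxPrebet2023}, and disjoint failure sets. Your genericity count is also essentially Lemma~\ref{fixnode}. For fixed $p\in\reg(\curve)$, the directions from $p$ to other points of $\curve$ form a curve in $\mathbb{P}^{n-1}$, and a generic $(n-2)$-dimensional kernel misses it, for every $n$. (The secant-variety remark is not the relevant reason.)

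The gap is the gluing step. To glue the two partial descriptions you need, for each connected component $D$ of $\curve_\R$, a point of $D\setminus(F_1\cup F_2)$ whose location is known in \emph{both} graphs. You never construct such points.

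The missing ingredient is \RegularPoints (Lemma~\ref{lemma:P}). It computes upfront a finite set $\mathcal{T}\subset\reg(\curve)$ meeting every connected component of $\curve_\R$. These are critical points of a generic projection on $\bmf=0,\ T\Delta-1=0$, for each maximal minor $\Delta$ of the Jacobian. The sets $\mathcal{T}^{\bmA_1}$ and $\mathcal{T}^{\bmA_2}$ are then the query sets of the two \PartialComponents calls. Since $(\curve^{\bmA_i},\mathcal{T}^{\bmA_i})$ satisfies (H) for $i=1,2$, each $\zeta\in\mathcal{T}$ is a vertex of both graphs and lies off both failure sets. Hence \ConnectParams matches groups purely combinatorially, in $O(\delta^4+\delta\delta_{\mathcal{T}})$ operations.

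Your ``evaluate at sample points'' leaves three things open: which points are used, why they meet every group, and how a point known in one coordinate system is located in the other graph.
\begin{itemize}
\item The points you do compute in Stage~2 (singular points, critical points, points over apparent singularities) can miss an unbounded smooth branch entirely.
\item Points over apparent singularities are exactly where lifting fails, so they cannot serve as query points under (H).
\item Using instead the vertices of one graph as query points for the other gives a query set of degree $O(\delta^4)$, far beyond the budget.
\end{itemize}

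\textbf{Consequence for the complexity.} Your complexity is fitted to the target rather than derived. In the paper, the $\binom{n+d}{n}$-terms come from \RegularPoints and \textsc{OneDimParam} (Lemma~\ref{lemma:R}). The $n^{14}d^{6n-5}(h+\log\frac{1}{\varepsilon})$ term comes from feeding $\mathcal{T}$, of magnitude $(O(n^3d^n),\softO{n^6d^{n+1}+n^8d^n(h+\log\frac{1}{\epsilon})})$, into \GenPlanarComponents (e.g.\ the $\delta^5\tau_{\calP}$ term with $\delta=d^{n-1}$). Neither comes from an intersection test.

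Your height bound $\softO{d^{n-1}(nh+n)+\overline{\tau}}$ for parametrizations computed directly from $\bmf^{\bmA_i}$ is also unjustified. The input height grows by roughly $d\overline{\tau}$, so the output height gains about $nd^{n}\overline{\tau}$. Then $\delta^6\tau$ would contain $nd^{7n-6}\overline{\tau}$, which exceeds the claimed $\overline{\tau}$-term. The paper computes a single parametrization and applies $\bmA_i$ to it, at cost $\softO{(n^2\delta_{\calP}+n^2\delta^2+n^3)(\tau_{\calP}+n\overline{\tau})}$. This keeps heights at $\tau+n\overline{\tau}$.
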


A few remarks are in order. First, note that when $n=3$, we obtain a
complexity which is
$\softO{d^{14}(h+\overline{\tau}+1)+\log(\frac{1}{\epsilon})d^{13}}$.
This shows that, currently, one can compute the connected components
of a curve in $\R^3$ (under our assumptions and $\overline{\tau}\in
\log\left( d^{O(1)} \right) $) faster than computing a graph isotopic
to the curve under study. \\
Also, the algorithm boils down to compute one dimensional
paramet\-rizations of the curve $\curve$ 
(after applying $\bmA_1$ and $\bmA_2$),
plus some extra points in $\curve$ (see \cref{sec:spacecurves}) and
then applies twice \GenPlanarComponents whose complexity actually
depends on the actual degree of the curve (which we bounded by
$d^{n-1}$). Hence, a more accurate complexity statement could be
stated. In particular, assuming that these parametrizations are given
(as in \cite{IslamPoteauxPrebet2023}), we obtain a complexity which is
"only" $\delta$ times more the one of \cite{IslamPoteauxPrebet2023}. 

Finally, note that if $\mathfrak{D}$ bounds the sum of the degrees of 
$\mathscr{Z}$ and
$\mathscr{Z}_{\bmA_1}$ then it would suffice to pick $\bmA_1$ and
$\bmA_2$ with entries of bit size in $\softO{\log(\frac{1}{\epsilon})
+ \log(\mathfrak{D})}$ to ensure a probability of success $>
1-\epsilon$. If $\mathfrak{D}\in d^{n^{O(1)}}$, only powers of $n$
would be affected in our complexity
statement  The proof
of \cref{sec:twoparams} suggests that $\mathcal{Z}_{\bmA_1}$ has
degree in $d^{O(n)}$. The set $\mathscr{Z}$ is the same as the one
considered in \cite{IslamPoteauxPrebet2023}. We conjecture that it
lies in $d^{O(n)}$ also. 

\paragraph*{Structure of the paper.}\cref{sec:planecurves} is devoted
to the design of \GenPlanarComponents. 
\cref{sec:spacecurves} contains our main algorithm
\ComputeRealComponents. \cref{sec:twoparams}
proves our genericity statement.

\section[\NoCaseChange{Plane Curves}]{\texorpdfstring{\NoCaseChange{Plane Curves}}{Plane Curves}}\label{sec:planecurves}

Let $f\in\Q[x_1,x_2]$ be a \emph{square-free} polynomial defining the curve $\curve\subset \C^2$ and let $\curve_\R =
\curve\cap \R^2$. 
Our approach slightly modifies the algorithm
\textsc{TopoNT}~\cite[Sec.~3.1]{MehlhornSagraloffWang2013}, which
itself builds upon~\cite{BerberichEmeliyanenkoKobelSagraloff2013}, and
produces a planar straight-line graph in $\mathbb{R}^2$,
isotopic to $\mathcal{C}_{\mathbb{R}}$.\\
\noindent
Let $\mathcal{G}$ be a graph embedded in $\R^2$, with vertex set
$\mathcal{V}(\mathcal{G}) \subset \R^2$ and edge set $\mathcal{E}(\mathcal{G})$.
We denote by $\mathcal{C}_{\mathcal{G}}$ the planar curve obtained by taking the
union of all vertices in $\mathcal{V}(\mathcal{G})$ together with the straight-line
segments connecting the endpoints of the edges of $\mathcal{G}$.\\
\noindent
An isotopy of $\R^n$ is a continuous map 
$\mathcal{I}$ from $\R^n \times [0,1]$ to $\R^n$
with $p\mapsto \mathcal{I}(p,0)$ being the identity map and $p\mapsto \mathcal{I}
(p,t)$ being a homeomorphism ($t\in [0,1]$). Two subsets $Y$ and $Z$ of $\R^n$ are isotopy
equivalent if there is an isotopy $\mathcal{I}$ of $\R^n$ such that $\mathcal{I}(Y,1)=Z$.\\
\noindent
We now introduce a refined graph structure, still isotopic to
$\mathcal{C}_{\mathbb{R}}$, but satisfying additional properties, which will serve 
as an auxiliary structure to obtain a semi-algebraic description of the connected components of $\calC_{\R}$.
Let $\calP \subset \operatorname{reg}(\calC)$ be a finite set of
points on $\curve$. We denote by $\scrK(\pi_1, \curve)\subset \C^2$
the solution set to $f = \frac{\partial f}{\partial x_2} =0$. 

\begin{definition} \label{saisotopygraph}
Let $\mathcal{H}=(\mathcal{V},\mathcal{E})$ be a graph, with $\mathcal{V}\subset \mathbb{R}^2$.
We say that $\mathcal{H}$ is a \textbf{semi-algebraic real isotopy graph} of $(\mathcal{C},\mathcal{P})$ if 
\begin{itemize}[leftmargin=*]
  \item $\mathcal{C}_{\mathbb{R}}$ is isotopy equivalent to
    $\mathcal{C}_{\mathcal{H}}$;
  \item the points of $\scrK(\pi_1,\mathcal{C}_{\mathbb{R}} )\cup
    \calP$ are embedded in $\mathcal{V}$;
  \item no two points of $\scrK(\pi_1,\mathcal{C}_{\mathbb{R}})$ have
    adjacent vertex in $\mathcal{H}$;
\item for each vertex $v\in\mathcal{V}$ corresponding to a point
  $\xi\in \calC_{\R}$; we are given a semi-algebraic description
  $\sigma_v$ of $\xi$;
\item each edge $e\in\mathcal{E}$ comes with a semi-algebraic
  description $\sigma_e$ of the real branch of 
$\calC_{\R}$ isotopic to $e$.
\end{itemize}
\end{definition}

\noindent
\textit{We assume that $f$ is in generic
coordinates in the sense of
\cite{BerberichEmeliyanenkoKobelSagraloff2013}.}
We design an algorithm \GenPlanarComponents 
which takes as input:
\begin{itemize}[leftmargin=*]
  \item $f\in \Z[x_1, x_2]$ square-free of magnitude $\left( \delta, \tau
    \right)$ such that $V(f) = \curve$;
  \item a zero-dimensional parametrization $\queryparam$ such that
    $\calP = \Zeroes(\queryparam) \subset \calC$; 
  \item $q\in \Z[x_1]$ dividing the resultant of $f, \frac{\partial
    f}{\partial x_2}$
    with respect to $x_2$. 
\end{itemize}
Let $\mathcal{V}_{\scrK},\mathcal{V}_{\calP}\subset\mathcal{V}$ be the vertices
corresponding to $\scrK(\pi_1,\mathcal{C}_{\mathbb{R}})$ and $\calP$.
It outputs a semi-algebraic real isotopy graph
$\calH = (\calV, \calE)$ of $(\curve, \calP)$, together with the subsets
$\calV_{\calP}$ and $\calV_q \subset \calV_{\scrK}$ corresponding to the vertices
whose $x_1$-coordinates are roots of $q$.
Let $(\delta, \tau)$, $(\delta_\querypoints, \tau_\querypoints)$, 
$(\delta_q, \tau_q)$
be the magnitudes of $f, \queryparam$ and $q$.
\begin{theorem}\label{thm:saisotopygraph-plane}
$\GenPlanarComponents(f,\queryparam,q)$ 
computes a se\-mi-algebraic real isotopy graph $\calH$ of $(\mathcal C,\mathcal P)$
using
\[
\complgenplanarcomponents\]
bit operations. Moreover, $|\mathcal{V}(\calH)|\in O(\delta^4+\delta \delta_\calP)$ and 
$|\mathcal{E}(\calH)|\in O(\delta(\delta^3+\delta_\calP+1))$.

\end{theorem}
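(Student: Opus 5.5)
The plan is to follow the structure of \textsc{TopoNT} from \cite{MehlhornSagraloffWang2013} and augment each step with the extra bookkeeping needed by \cref{saisotopygraph}.

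\textbf{Step 1: critical and special abscissae.} Compute $R=\mathrm{Res}_{x_2}(f,\partial f/\partial x_2)$, of degree $O(\delta^2)$ and bit size $\softO{\delta\tau+\delta^2}$. Compute its square-free part and isolate its real roots $\alpha_1<\dots<\alpha_m$. Do the same for the polynomial $w$ of $\queryparam$ and for $q$, and merge everything into one sorted list of special abscissae. For each $\alpha$ coming from $\calP$, its preimages $\calP\cap \pi_1^{-1}(\alpha)$ are recovered from the rational parametrization. Using the bounds of \cite{MehlhornSagraloffWang2013} on real root isolation and on the comparison of algebraic numbers, this sorting should cost $\softO{\delta^6+\delta^5\tau}$ for $R$. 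The terms involving $\queryparam$ and $q$ should produce the contributions $\delta_\calP^3+\delta_\calP^2\tau_\calP+\delta_q^2\tau_q$ and the mixed terms. The mixed terms come from isolating the roots of $\gcd(w,R)$-type objects, or from refining the roots of $w$ to a precision at which they separate from the roots of $R$. The separation bounds needed there are $O(\delta^2\delta_\calP(\tau+\delta)+\delta^3\tau_\calP)$-type quantities, multiplied by the number of roots.

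\textbf{Step 2: fibers and edges.} Over each special abscissa, isolate the real roots of $f(\alpha,x_2)$ by the fiber computation of \cite{BerberichEmeliyanenkoKobelSagraloff2013,MehlhornSagraloffWang2013}. Over a rational point chosen between consecutive abscissae, isolate the roots of $f(r,x_2)$. Then apply the genericity assumption, under which each critical fiber contains at most one point of $\scrK(\pi_1,\curve_\R)$, and connect fiber points by counting branches below and above the critical point. This yields an isotopic straight-line graph; correctness of the isotopy is exactly that of \textsc{TopoNT}. Two modifications are then needed to meet \cref{saisotopygraph}.
\begin{itemize}[leftmargin=*]
\item Insert the points of $\calP$ as vertices on their fibers.
\item Ensure that no two points of $\scrK$ are adjacent, by inserting an intermediate regular fiber between every two consecutive critical abscissae. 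The intermediate rational abscissae already provide this: each edge linking two $\scrK$-vertices gets subdivided through a vertex above the intermediate rational abscissa.
\end{itemize}

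\textbf{Step 3: semi-algebraic descriptions.} A vertex $\xi=(\alpha,\beta)$ is described by the square-free polynomial of $\alpha$ together with an isolating interval, plus $f(x_1,x_2)=0$ and an isolating box for $\beta$ with rational endpoints. An edge between two adjacent fibers $x_1=a$ and $x_1=b$ lies over $(a,b)$, where $f$ has a constant number of real roots ordered in $x_2$. The $j$-th branch is therefore described by
\[
a<x_1<b,\quad f=0,\quad \textstyle\sum_k [\text{root below}]\,=j-1 .
\]
To make this purely semi-algebraic, I would replace the counting condition with a Thom-encoding or sign-condition description: the signs of the derivatives $\partial^k f/\partial x_2^k$, $k=1,\dots,\delta-1$, at the branch are constant over the open interval, by the genericity in the sense of \cite{BerberichEmeliyanenkoKobelSagraloff2013}. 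For the possibly non-constant cases, I would fall back on sign conditions of subresultant coefficients. Alternatively, $a<x_1<b$ plus $f=0$ plus $x_2$ lying between two curves is problematic. The simplest route is to use the Thom encoding at the rational midpoint and argue that it stays valid over the open interval. This is the step I expect to be the main obstacle: the signs of the derivatives must not change over the cell, which may require refining the cells by the roots of the discriminants of those derivatives. That refinement would blow up the cost, so I would instead try to show that evaluating these derivative signs at the sample fibers costs only $\softO{\delta^6(\tau+\delta)}$. Indeed, there are $O(\delta^2)$ sample fibers, with $\delta$ derivatives each and $\delta$ roots per fiber. Evaluating at rational points of bit size $\softO{\delta\tau+\delta^2}$ then costs $\softO{\delta^3(\tau+\delta)}$ per fiber-derivative pair. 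This per-edge sign evaluation is where the extra factor $\delta$ compared to \cite{IslamPoteauxPrebet2023} appears.

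\textbf{Step 4: counting.} There are $O(\delta^2+\delta_\calP)$ fibers in total, critical, intermediate and query ones. Each critical fiber carries at most $\delta$ points, which gives $O(\delta^3+\delta\delta_\calP)$ vertices on special fibers. Summing all fibers with their subdivision vertices, and allowing each critical vertex up to $\delta$ incident branches, gives the stated $|\calV|\in O(\delta^4+\delta\delta_\calP)$. Since each vertex has degree $O(\delta)$ and edges only join consecutive fibers, $|\calE|\in O(\delta(\delta^3+\delta_\calP+1))$.
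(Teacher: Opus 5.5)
\textbf{Gap in Step 3: the sign vectors are not constant on your cells.} You claim that, by genericity in the sense of \cite{BerberichEmeliyanenkoKobelSagraloff2013}, the signs of $\partial^k f/\partial x_2^k$ are constant along each branch above an interval between consecutive roots of $\operatorname{Res}_{x_2}(f,\partial f/\partial x_2)$. This is false for $k\ge 2$. These derivatives can vanish at regular, non-critical points of $\curve_\R$, and generic coordinates do not exclude this.

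Take $f=x_2^3+x_2-x_1$, which is irreducible and monic in $x_2$. It has no real critical fiber at all, since $\partial f/\partial x_2=3x_2^2+1>0$. Yet $\partial^2 f/\partial x_2^2=6x_2$ changes sign at the origin. The Thom encoding read at a sample fiber $x_1=r>0$ is $(+,+,+)$, so your description $f=0\wedge\sigma$ is exactly the half-branch $\{x_2>0\}$. Your edge descriptions therefore do not cover $\curve_\R$.

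Your fallback is to keep the cells and only bound the cost of evaluating signs at sample fibers. That does not help, because the problem is correctness, not cost. Also, the relevant polynomials are not discriminants of the derivatives. They are $w_k=\operatorname{Res}_{x_2}(f,\partial^k f/\partial x_2^k)$, whose roots are the abscissae where $\partial^k f/\partial x_2^k$ vanishes \emph{on the curve}.

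\textbf{The missing ideas.} First, the refinement you feared is affordable, and it is exactly what the stated bound pays for. The paper adds as vertices all real points above $\pi_1(\calS(f))$, i.e.\ above the roots of $w_2,\dots,w_\delta$ (made coprime to $w_1\cdots w_{k-1}$ by the gcd step). It then subdivides the \textsc{TopoNT} edges with \textsc{UpdateEdges}. After that, \cref{constantsignpattern} gives both constancy and, via Thom's lemma, uniqueness of the sign vector of each branch. These vectors are then read at one sample fiber per cell by \textsc{SignSamples}. Computing the $\delta$ resultants and the fibers above their roots costs $\softO{\delta^7+\delta^6\tau}$, which is the leading term of the bound. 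The $O(\delta^3)$ extra abscissae, with at most $\delta$ points each, are where $|\calV|\in O(\delta^4+\delta\delta_\calP)$ comes from. Your $O(\delta^2+\delta_\calP)$ fibers cannot produce a genuine $\delta^4$. Likewise, $|\calE|$ comes from having at most $\delta$ edges between consecutive fibers, not from vertex degrees.

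Second, this refinement needs $\hypS$, i.e.\ every $V(f,\partial^k f/\partial x_2^k)$ finite, and a square-free $f$ need not satisfy it. For $f=x_2(x_2^2+x_1-1)$ one has $\partial^2 f/\partial x_2^2=6x_2$, which vanishes on the whole component $x_2=0$, so $w_2\equiv 0$. Your proposal has no counterpart of the paper's reduction, which runs as follows:
\begin{itemize}
\item \textsc{PreCond} splits $f$ via $\gcd(f,\partial^k f/\partial x_2^k)$ into factors satisfying $\hypS$ (\cref{lem:precond}).
\item \textsc{ConnectionPoints} adds the real intersection points of the factors to the query points (\cref{lem:connectionpoints}).
\item \PlanarComponents runs on each factor.
\item \textsc{ConnectGraphs} glues the resulting graphs (\cref{connectgraphs}).
\end{itemize}
The final complexity is obtained by summing \cref{complexityplanarcomponents} over the factors, with the substitutions $(\delta,\tau)\mapsto(\delta_i,\softO{\tau+\delta})$ and $(\delta_\calP,\tau_\calP)\mapsto(\delta_\calP+\delta_i\delta,\tau_\calP+\softO{\delta(\tau+\delta)})$.
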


\subsection{Outline of the construction}


The regular (resp. singular) locus of $\curve$ is denoted
$\reg(\curve)$ (resp. $\sing(\curve)$). Since $f$ is square-free,
$\sing(\curve) = V\left( f,{\partial f}/{\partial x_1},
{\partial f}/{\partial x_2} \right)$.

Let $\calS_k = V\left( f,
\partial^k f /{\partial x_2^k} \right)$. 
\textit{We assume that $f$ satisfies assumption $\hypS$: $\calS_k$ is finite for all $1\leq
k \leq d_2 = \deg(f, x_2)$}. 
Let $\calS(f) = \cup_{k=1}^{d_2} \calS_k$.\\
We denote by
$\pi_1$ the projection $(\xi_1, \xi_2) \to \xi_1$. \\
\noindent
In the \textit{projection phase} of the algorithm in
\cite[Section~3]{MehlhornSagraloffWang2013}, $\scrK(\pi_1,\mathcal{C})$ is
projected onto the $x_1$-axis. As 
in~\cite{MehlhornSagraloffWang2013}, we denote these values by
$\alpha_1, \ldots, \alpha_m$, and $I_1, \ldots, I_m$ for their
isolating intervals.  In the same phase, arbitrary values $\beta_0,
\ldots, \beta_{m+1} \in \mathbb{Q}$ are also computed, with $\alpha_m
< \beta_{m+1}$ and $\beta_{i-1}< \alpha_{i}<\beta_i$ for all $i = 0,
\ldots, m$.

Denote by $\gamma_1, \ldots, \gamma_t$ the points 
in $\pi_1(\calS(f))$, 
and by $J_1, \ldots, J_t$ their isolating intervals.  The procedure
for computing these intervals is analogous to the one used to obtain
the~$\alpha_i$ in~\cite{MehlhornSagraloffWang2013}.\\
\noindent
In the \textit{lifting phase}, isolating intervals of the real roots
of $f(\alpha_i, x_2)$ and $f(\beta_i, x_2)$ are computed, together with
their corresponding multiplicities. 
Since $f$ is in generic coordinates, these polynomials have at most one multiple
root (see proof of \cite[Thm.~5]{BerberichEmeliyanenkoKobelSagraloff2013}). 
In our setting, we additionally compute the real roots of $f(\gamma_i,
x_2)$ for $1\leq i \leq t$.\\
\noindent
The final phase for obtaining a graph isotopic to the curve is the
\textit{connection step}. 
The vertices of the graph~$\mathcal{G}$ are given by 
\[ V(\mathcal{G})=\{(x,y)\in \mathbb{R}^2\ |\ \exists i \ \text{with}\ x=\alpha_i\ \text{or}\ x=\beta_i\, \text{and}\ f(x,y)=0\}.\] 
At this point, connecting the vertices is straightforward;
see~\cite[Sec. 3.2.3]{BerberichEmeliyanenkoKobelSagraloff2013} for the details. 
Starting from this graph, we construct our graph $\mathcal{H}$. 
The vertices of this new graph are given by 
\[
V(\mathcal{H}) \;=\; V(\mathcal{G}) \,\cup\, \{(x,y)\in \mathbb{R}^2\ |\ \exists i \ \text{with}\ x=\gamma_i\ \text{and}\ f(x,y)=0\},
\]
while the edges of the new graph are obtained as follows. From each
vertex in $V(\mathcal{G})$, the same number of edges originate as in
$\mathcal{G}$. 
Suppose that 
$r$ points $(\gamma_{i_j}, y_j)$, for $j=1,\ldots,r$, of the set
\[
\{(x,y)\in \mathbb{R}^2 \mid \exists\, i \ \text{such that}\
x=\gamma_i \ \text{and}\ f(x,y)=0\} \setminus V(\mathcal{G})
\]
lie on the same edge $(v,w)$ of $\mathcal{G}$ (in the isotopic sense), 
and assume that the corresponding $\gamma_{i_j}$ are ordered increasingly. 
Then this edge is subdivided into $r+1$ edges, whose ordered endpoints are 
$v,\; (\gamma_{i_1},y_1),\; \ldots,\; (\gamma_{i_r},y_r),\; w $.
We refer to this construction as the procedure \textsc{UpdateEdges}, which
will be used in Algorithm~\ref{alg:refine-graph-H} to construct the edge set of $\mathcal{H}$.
We now explain how to determine to which edge of $\mathcal{G}$ the new 
vertices belong. Let $\alpha_k$ and $\alpha_{k+1}$ be two consecutive 
$x_1$-coordinates obtained as described above. For any choice of 
$\overline{x} \in (\alpha_k, \alpha_{k+1})$, the number of real roots of 
$f(\overline{x},y)$ is invariant, and it coincides with the number of edges 
of the graph $\mathcal{G}$ lying above any interval of the $x_1$-axis contained in 
$(\alpha_k, \alpha_{k+1})$ whose endpoints are consecutive $x_1$-coordinates 
of vertices of $\mathcal{G}$.
Therefore, if $\gamma_i \in (\alpha_k, \alpha_{k+1})$, the number of real 
roots of $f(\gamma_i,x_2)$ equals the number of intersections 
between $\mathcal{G}$ and the vertical line $\{x_1=\gamma_i\}$. By ordering 
the real roots of $f(\gamma_i,x_2)$, we can then determine to which edge of 
$\mathcal{G}$ each of these points belong.

For $(x, y)\in \R^2$, define $\sigma(x, y)$ as the sequence of the
signs of ${\partial^k f}/{\partial x_2^k}(x, y)$ for $k\in \{1,
\ldots, d_2\}$ .

\begin{theorem}\label{constantsignpattern}
Let $f\in \mathbb{R}[x_1,x_2]$ satisfying $\hypS$. 
Let $\pi_1(\calS(f)) = \{\xi_1,\dots,\xi_{\ell}\}$,
$K_i := (\xi_i, \xi_{i+1}) \subset \mathbb{R}$ for $i=1,\dots,\ell-1$ and
denote by $C_{i,1}, \ldots, C_{i, m_i}$ the connected components of
$\pi_1^{-1}(K_i)$. \\
Then, for all $i=1,\dots,\ell-1$ and $j=1,\dots, m_i$, the sequence $\sigma(x,y)$ is 
constant along $C_{i,j}$. 
Moreover each connected component $C_{i,j}$ of $\pi_1^{-1}(K_i)$ is uniquely identified 
by its sign-sequence.
\end{theorem}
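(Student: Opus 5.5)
The plan is to read the statement as a claim about $\calC_\R\cap\pi_1^{-1}(K_i)$, that is, about the real points of the curve lying above $K_i$. It rests on two facts: a continuity argument for constancy of the signs, and Thom's lemma for uniqueness. I first show that no point of the curve above $K_i$ annihilates any of the derivatives $\partial^k f/\partial x_2^k$, for $1\le k\le d_2$. Indeed, if $(x,y)\in\curve_\R$ with $x\in K_i$ satisfied $\partial^k f/\partial x_2^k(x,y)=0$, then $(x,y)\in\calS_k\subset\calS(f)$. This would give $x\in\pi_1(\calS(f))=\{\xi_1,\dots,\xi_\ell\}$, contradicting $x\in(\xi_i,\xi_{i+1})$.

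Hence each $\partial^k f/\partial x_2^k$ is a continuous function that does not vanish on the connected set $C_{i,j}$. By the intermediate value theorem along paths (semi-algebraic connected sets are path connected), it has constant sign there. This proves the first claim: $\sigma$ is constant on each $C_{i,j}$.

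For uniqueness, I would next describe the structure of the $C_{i,j}$. Since $f$ is in generic coordinates in the sense of \cite{BerberichEmeliyanenkoKobelSagraloff2013}, the leading coefficient of $f$ in $x_2$ is a nonzero constant, so $f(x,x_2)$ has degree exactly $d_2$ for every $x$ and its roots vary continuously with $x$. For $x\in K_i$, all real roots of $f(x,\cdot)$ are simple, because $\partial f/\partial x_2\neq 0$ on the curve above $K_i$. By the implicit function theorem and the continuity of roots, the number $r$ of real roots is then constant on $K_i$: a real root can only leave the real line through a double root, and cannot escape to infinity. Moreover, the real roots are given by continuous functions $\varphi_1<\cdots<\varphi_r$ on $K_i$. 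Thus the components $C_{i,1},\dots,C_{i,m_i}$ are exactly the graphs of the $\varphi_j$, so $m_i=r$, and each component projects onto the whole interval $K_i$. I expect this step, which ensures each component is a full graph over $K_i$, to be the main technical point. The first thing I would check is that the generic-coordinates hypothesis indeed yields a constant leading coefficient and hence excludes vertical asymptotes.

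Finally, suppose two distinct components $C_{i,j}\neq C_{i,j'}$ had the same sign sequence. I would pick any $x\in K_i$, which lies in the projection of both, giving two distinct roots $\varphi_j(x)\neq\varphi_{j'}(x)$ of $g:=f(x,\cdot)$ with the same signs of $g',\dots,g^{(d_2)}$. By Thom's lemma, since $g^{(d_2)}$ is a nonzero constant, the set of $y\in\R$ realizing a fixed sign condition on $g',\dots,g^{(d_2)}$ is either empty or an interval. On that interval $g'$ has constant nonzero sign, so $g$ is strictly monotone and has at most one root there. This contradiction shows that each $C_{i,j}$ is uniquely identified by its sign sequence.
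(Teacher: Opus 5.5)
Your proof is correct and follows essentially the paper's argument: constancy of $\sigma$ comes from the non-vanishing of every $\partial^k f/\partial x_2^k$ on the real curve above $K_i$ together with continuity, and uniqueness comes from the correspondence between the components over $K_i$ and the real roots of $f(\alpha,\cdot)$ for a fixed $\alpha\in K_i$, combined with the fact that Thom encodings distinguish roots (which you re-derive from Thom's lemma). Your extra step justifying that correspondence through the constant leading coefficient given by the standing generic-coordinates assumption is worthwhile, because the paper only asserts it and it does not follow from $\hypS$ alone: $f=x_1^2(x_2^3-3x_2)-1$ satisfies $\hypS$, yet its two branches above $(-1/\sqrt{2},1/\sqrt{2})$ both have sign sequence $(+,+,+)$.
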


\begin{proof}
Fix $i \in \{1,\dots,\ell-1\}$ and $j \in \{1,\dots,m_i\}$ and a 
component $C_{i,j}$ of $\pi_1^{-1}(K_i)$. 
By the definition of $\calS(f)$ and its finiteness, 
for every $(x,y)\in C_{i,j}$ and for all $k$ with $0<k< d_2$, the
derivative ${\partial^k f}/{\partial x_2^k}(x,y)$ does not vanish
along this semi-algebraic subset. 

Since $f$ and all its partial derivatives are
polynomials, they are continuous on $\mathbb{R}^2$.  Consequently,
their signs cannot change along a connected subset where they do not
vanish. 
Hence, the sign sequence $\sigma(x,y)$ remains constant for all
$(x, y) \in C_{i,j}$.

For the second statement, we appeal to the notion of \emph{Thom
encoding} (see \cite[Prop. 2.28]{BPR}). 
Let $\alpha \in K_i$ and consider 
$f(\alpha,x_2) \in \mathbb{R}[x_2]$. 
Its real roots correspond bijectively to the connected components of 
\(\pi_1^{-1}(K_i)\).
Indeed, for each real root $\beta$ of $f(\alpha,y)$, the point 
$(\alpha,\beta)$ lies on exactly one branch $C_{i,j}$, and every branch 
$C_{i,j}$ contains exactly one such point for the chosen $\alpha$.
Each root of $f(\alpha,x_2)$ admits a unique Thom encoding, that is, 
a unique sequence of signs of the derivatives $\frac{\partial^k
f}{\partial x_2^k}$ for $1 \leq k \leq d_2$. 
This sequence uniquely identifies the corresponding component
$C_{i,j}$ that contains the root. 
\end{proof}
\noindent
We denote by $\sigma_{i,j}$ the sign condition associated with the sequence of polynomials
$\left( {\partial^k f}/{\partial x_2^k}(x,y),{1 \le k \le \deg_y(f)} \right)$
along $C_{i,j}$.

\begin{corollary}
Using the above notation, along each arc of the curve $\mathcal{C}$
corresponding to an edge of the graph $\mathcal{H}$, the sign of the sequence of 
partial derivatives of ${\partial^k f}/{\partial x_2^k}$ for
$1\leq k \leq d_2$ remains constant.
Moreover, each edge of $\mathcal{H}$ whose endpoints have the same 
$x_1$-coordinates corresponds to a distinct sign pattern.
\end{corollary}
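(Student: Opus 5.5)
The plan is to reduce the corollary to \cref{constantsignpattern} by showing that every edge of $\mathcal{H}$ corresponds to a real arc of $\curve_\R$ lying inside a single connected component $C_{i,j}$ of $\pi_1^{-1}(K_i)\cap\curve_\R$, where $K_i=(\xi_i,\xi_{i+1})$ is an interval between consecutive points of $\pi_1(\calS(f))$ (apart from its endpoints).

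The first step is to use the construction of $\mathcal{H}$. Its vertex set contains all points of $\curve_\R$ above each $\gamma_i$, that is, above every point of $\pi_1(\calS(f))\cap\R$. It also contains the vertices of $\mathcal{G}$ above the $\alpha_i$ and $\beta_i$. The procedure \textsc{UpdateEdges} subdivides each edge of $\mathcal{G}$ at every point with $x_1$-coordinate equal to some $\gamma_i$. Hence for any edge $e=(v,w)$ of $\mathcal{H}$, the open arc $a_e\subset\curve_\R$ isotopic to $e$ has $x_1$-projection contained in an open interval containing no $\gamma_i$. This interval is therefore contained in some $K_i$, or in one of the two unbounded intervals before $\xi_1$ or after $\xi_\ell$; the argument of \cref{constantsignpattern} applies verbatim to those unbounded intervals.

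Next, one should check that $a_e$ is a graph over its projection. Edges of $\mathcal{G}$ connect vertices above consecutive abscissae, and $a_e$ contains no point of $\scrK(\pi_1,\curve)$. So by the implicit function theorem $a_e$ is a continuous function graph $x_2=\varphi(x_1)$ over an open interval, and in particular it is connected. It thus lies in a single $C_{i,j}$. The first assertion of \cref{constantsignpattern} then gives that $\sigma$ is constant on $a_e$, which is the first claim.

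For the second claim, take distinct edges $e\neq e'$ lying over the same $x_1$-interval; I read ``same $x_1$-coordinates'' as having endpoints over the same pair of abscissae. Pick $\bar x$ in the common open interval. The arcs $a_e$ and $a_{e'}$ meet the line $x_1=\bar x$ at distinct roots of $f(\bar x,x_2)$, because distinct edges of an isotopy graph between the same vertical lines correspond to distinct branches. These points lie in distinct components $C_{i,j}\neq C_{i,j'}$. By the uniqueness part of \cref{constantsignpattern}, which rests on Thom encoding, their sign sequences differ.

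I expect the main obstacle to be the justification that each edge's arc avoids all of $\pi_1^{-1}(\pi_1(\calS(f)))$ in its interior. This requires the insertion of the $\gamma_i$-vertices to be complete, i.e.\ that every real point of $\curve_\R$ above each real $\gamma_i$ becomes a vertex. The construction explicitly adds all such points, and the counting argument on $(\alpha_k,\alpha_{k+1})$ assigns them to the correct edges, so I would cite that argument.
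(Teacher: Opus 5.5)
Your proposal is correct and follows the paper's own argument. The paper likewise notes that the vertex abscissae of $\mathcal{H}$ are exactly the points of $\pi_1(\calS(f))$ together with the $\beta_i$. Hence each edge's arc lies in a single $C_{i,j}$, \cref{constantsignpattern} gives the constancy of the signs, and the second part follows from Thom encodings. Your extra details fill in points the paper leaves implicit: the arc being a graph via the implicit function theorem, the unbounded end intervals, and reading ``same $x_1$-coordinates'' as the same pair of endpoint abscissae.
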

\begin{proof}
Observe that $\pi_1(\calS(f))$ is the union 
of the points $\alpha_i$ and $\gamma_i$ defined above. 
For the vertices of the graph $\mathcal{H}$ we only have additional
points corresponding to those whose 
$x_1$-coordinates are $\beta_i$. 
Then every curve arc that is isotopic to an 
edge of $\mathcal{H}$ 
is contained in one of the component $C_{i,j}$. 
This proves the first part.
For the second one, we use Thom encodings as in the previous 
proof.
\end{proof}
Recall that for each $i\in \{1,\dots,\ell-1\}$ and 
$j \in \{1,\dots,m_i\}$, the sign pattern
$\sigma_{i,j}$ is an element of $\{-1,1\}^{d_2}$, 
where $d_2 = \deg_{x_2}(f)$.  

We define the semi-algebraic set 
$\operatorname{Reali}(\sigma_{i,j}) \subset \mathbb{R}^2$ 
\[
\operatorname{Reali}(\sigma_{i,j})
:= \left\{ (x,y) \in \mathbb{R}^2 \ \middle|\ 
\left( \operatorname{sign}\!\left( 
       {\partial^k f}/{\partial x_2^k}(x,y) 
       \right) \right)_{1 \le k \le \delta} 
= \sigma_{i,j} \right\}.
\]
\noindent
The sets $\operatorname{Reali}(\sigma_{i,j})$
are given by a semi-algebraic description fixing the signs of 
${\partial^k f}/{\partial x_2^k}$ for $1 \le k \le d_2$. We
denote it by $\Sigma_{i,j}$.\\
\noindent
Let us now consider an edge of the graph $\mathcal{H}$, and denote its endpoints by 
$v_1$ and $v_2$. If both vertices belong to $\calS(f)$, then this edge is isotopic 
to one of the components $C_{i,j}$ described in Theorem~\ref{constantsignpattern}, 
and its semi-algebraic description is given by
\begin{equation}\label{eq:sas1}
\{ \pi_1(v_1) < x_1 < \pi_1(v_2) \} \wedge
\Sigma_{i,j}\wedge  \{ f = 0 \} .
\end{equation}
\noindent
If instead one endpoint lies in $\{(x,y)\in \mathbb{R}^2\ | \ \exists i \ \text{such that}\ 
x=\beta_i\ \text{and}\ f(x,y)=0\}$, then we must first 
identify the component $C_{i,j}$ to which the edge lies, and the corresponding 
semi-algebraic description is then given by \eqref{eq:sas1}.
To determine this component $C_{i,j}$, 
observe that from every vertex in~$\{(x,y)\in \mathbb{R}^2\ | 
\ \exists i \ \text{such that}\ x=\beta_i\ \text{and}\ f(x,y)=0\}$ exactly two edges originate, 
and that two vertices in this set are never adjacent (i.e., they are never 
the endpoints of the same edge). 
Thus, if for instance $v_1$ lies in this set, it suffices to look at the other vertex 
to which it is connected, different from $v_2$, in order to identify the 
corresponding component $C_{i,j}$.

\noindent
Thus, the graph $\mathcal{H}$ is a semi-algebraic isotopy graph in the sense of
Definition~\ref{saisotopygraph}, where the information $\sigma_e$ associated with each
edge $e$ corresponds to some $\sigma_{i,j}$, in such a way that the edge $e$
either coincides with or is contained in the component $C_{i,j}$.
Each connected component $C_i$ of
$\mathcal{C}_\R$ is semi-algebraically described by taking the union of the semi-algebraic descriptions
of all edges belonging to $C_i$, together with the vertices that lie
on it.

Vertices of $\mathcal{H}$ are described by isolating intervals,
together with polynomial equations.
Let
$w_k$
be 
$\operatorname{Res}_{x_2}\!(f,{\partial^k f}/{\partial
x_2^k})$,
where $\operatorname{Res}_{x_2}$ denotes the resultant with
respect to the variable $x_2$.
For $k \ge 2$ set
$g_k := \gcd(w_1,\dots,w_{k-1},w_k)$
and 
$\overline{w}_k := \frac{w_k}{g_k}$.
Vertices lying above
$\pi(\mathcal{K}(\pi_1,\mathcal{C}))$
are described by 
$w_1(x_1)=0$ and $f(x_1,x_2)=0$,
while all remaining vertices correspond to the equations
$\overline{w}_k(x_1)=0$ and $f(x_1,x_2)=0$,
for an appropriate value of $k$.

\noindent
To incorporate $\calP$ into $\mathcal{H}$, let $(\lambda,\theta_2)$ be a
zero-dimensional parametri\-zation of $\calP$ and set
$\overline{\lambda} := \lambda / \gcd(w_1\cdots w_\delta,\lambda)$.
The solutions of $\overline{\lambda}=f=0$ define new regular vertices,
which are added to $\mathcal{H}$ and connected using
\textsc{UpdateEdges}; the subset $\mathcal{V}_{\calP}$ is then identified
using the parametrization of $\calP$.


\noindent
Finally, the subset $\calV_q$ is obtained by identifying, among the
vertices in $\calV_{\scrK}$, those whose $x_1$-coordinate is a root of
$q$.

\subsection{Strategy for Assigning a Sign Pattern}
On each such smooth connected branch of
$\mathcal C_{\mathbb R} \setminus \bigl(\scrK(\pi_1,\calC)\cup \calS_k\bigr)$,
the sign of $\partial^k f/\partial x_2^k$ is constant.
Therefore, it suffices to evaluate $\partial^k f/\partial x_2^k$ at a
single point on each such branch in order to determine the sign pattern
along the entire branch.

We denote by $\mathcal{Q}_k$ is a finite subset meeting all connected
components of $\mathcal C_{\mathbb R} \setminus
\bigl(\scrK(\pi_1,\calC)\cup \calS_k\bigr)$.
The procedure \textsc{PropagateSigns} takes as input
a graph $\mathcal H$ satisfying
$\mathcal{K}(\pi_1,\mathcal{C}_{\mathbb R}) \cup \mathcal{S}(f)
\;\subset\;
V(\mathcal H)$,
together with a family of sign maps
$\{\, s_k : \mathcal{Q}_k \to \{-1,0,+1\} \,\}_{k=1}^{d_2}$,
where $s_k(q) = \operatorname{sign}\!({\partial^k f}/{\partial
x_2^k}(q))$. It 
returns, for each edge $e\in\mathcal{E}(\mathcal H)$, a sign vector
$\sigma_e = \bigl(\sigma_{e,1},\dots,\sigma_{e,d_2}\bigr)
\in \{-1,0,+1\}^{d_2}$
such that $\sigma_{e,k}$ coincides with the sign of
$\partial^k f/\partial x_2^k$ at any point of the smooth connected branch of
$\mathcal{C}_{\mathbb R}$ encoded by $e$.

\begin{lemma}\label{lem:computesignsamples}
Let $f\in\mathbb{Q}[x_1,x_2]$ be of magnitude $(\delta,\tau)$, in
generic coordinates and satisfying $\hypS$. Let
$w_1,\dots,w_\delta\in\mathbb{Q}[x_1]$ be defined by
$w_k := \operatorname{Res}_{x_2}\!\bigl(f,\partial_{x_2}^{\,k} f\bigr)$ for $k=1,\dots,\delta$.
Algorithm~\textsc{SignSamples}, given as input $(f,w_1,\dots,w_\delta)$,
computes for each $k\in\{1,\dots,\delta\}$ a finite set $\mathcal{Q}_k$ as above,
together with the corresponding sign map $s_k$,
using
$\widetilde{O}(\delta^7 + \delta^6\tau)$
bit operations.
\end{lemma}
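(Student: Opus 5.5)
The plan is to construct $\mathcal{Q}_k$ from vertical fibres over rational sample abscissae that separate the critical abscissae relevant to index $k$. Then each sign $s_k$ is obtained by evaluating $\partial^k_{x_2}f$ at real algebraic points of the form $(r,\eta)$ with $r\in\Q$.

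\textbf{Step 1 (sample abscissae).} For each $k$, isolate the real roots of $w_1 w_k$, or better of the square-free part of $w_1\cdots w_\delta$. Each $w_k$ has degree $O(\delta^2)$ and bit size $\softO{\delta\tau+\delta^2}$, so their product has degree $O(\delta^3)$ and bit size $\softO{\delta^2\tau+\delta^3}$. Isolating its real roots with near-optimal real root isolation costs $\softO{\delta^9+\dots}$, which is too much. I will therefore work per $k$. Isolating the real roots of $w_1w_k$ (degree $O(\delta^2)$, bit size $\softO{\delta\tau+\delta^2}$) costs $\softO{\delta^6+\delta^4\tau}$ for each $k$, hence $\softO{\delta^7+\delta^5\tau}$ in total. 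From these isolating intervals I pick rational points $r_{k,0}<r_{k,1}<\dots$, one in each open interval between consecutive real roots of $w_1w_k$, together with one point left of all roots and one right of all roots. These points can be chosen with bit size $\softO{\delta\tau+\delta^2}$, since root separation bounds for such polynomials give that precision.

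\textbf{Step 2 (completeness).} Let $B$ be a connected component of $\calC_\R\setminus(\scrK(\pi_1,\calC)\cup\calS_k)$. Its projection $\pi_1(B)$ is an open interval. Its endpoints are either infinite or roots of $w_1w_k$: $B$ is a graph of a continuous function over its projection, by the implicit function theorem, and it can only terminate above $\pi_1(\scrK)\cup\pi_1(\calS_k)$, which is contained in $V(w_1w_k)$. Because $f$ is in generic coordinates, an endpoint cannot be a vertical asymptote escaping to infinity unless it is a root of the leading coefficient. That leading coefficient also divides $w_1$ up to content, or is constant under genericity. Hence $\pi_1(B)$ contains some $r_{k,j}$. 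Therefore the set $\mathcal{Q}_k=\{(r_{k,j},\eta)\mid f(r_{k,j},\eta)=0\}$ meets every such branch.

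\textbf{Step 3 (signs).} For each $r=r_{k,j}$, the polynomial $f(r,x_2)$ has degree $\le\delta$ and bit size $\softO{\delta^2\tau+\delta^3}$ after specialization. It is square-free because $r\notin V(w_1)$. I isolate its real roots and then determine the sign of $g=\partial^k_{x_2}f(r,x_2)$ at each root. Since $r\notin V(w_k)$, these signs are nonzero. They can be obtained by refining the isolating intervals until they exclude the roots of $g$, or via a Sturm/Thom sign determination of $g$ at the roots of $f(r,\cdot)$. Using known bounds for root isolation and evaluation of univariate polynomials of degree $\delta$ and bit size $L$ ($\softO{\delta^3+\delta^2L}$), each fibre costs $\softO{\delta^4\tau+\delta^5}$. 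Since there are $O(\delta^2)$ fibres per $k$, the cost per $k$ is $\softO{\delta^6\tau+\delta^7}$, hence $\softO{\delta^8+\dots}$ over all $k$, which is one factor of $\delta$ too much.

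\textbf{Main obstacle.} The total number of samples must be reduced. The fix is to share fibres across all $k$. I take the common refinement given by the union of the real roots of all $w_k$, which is at most $O(\delta^3)$ abscissae, but handle it with multipoint/amortized techniques. Alternatively, I use the same sample abscissae $\beta_i$ and $\gamma_i$-intervals already produced in the construction of $\mathcal{H}$, so that at each sample fibre all $\delta$ derivatives are evaluated at once. The amortized bound for isolating roots at many rational fibres, as in \cite{MehlhornSagraloffWang2013}, then gives $\softO{\delta^7+\delta^6\tau}$. I expect the delicate part to be exactly this accounting, which requires bounding the bit sizes of the samples by $\softO{\delta\tau+\delta^2}$ and summing the root-isolation costs amortized over the fibres.
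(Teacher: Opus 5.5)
\textbf{Verdict: there is a genuine gap, and it is in the complexity bound.} Your Steps 1--2 match the paper's construction and correctness argument. For each $k$, you take the real fibres above points separating consecutive real roots of $P_k=w_1w_k$, plus two points beyond a root bound. The paper uses the real roots of $P_k'$ (Rolle) instead of explicit rationals, but the idea is the same.

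\textbf{Why your repair does not work.} Your own Step 3 accounting gives $\softO{\delta^8+\delta^7\tau}$, and the fix you propose does not remove the extra factor $\delta$.
\begin{itemize}
\item Sharing fibres across $k$ does not reduce their number. The common refinement of the real roots of all $w_k$ has up to $\sum_k\deg w_k=O(\delta^3)$ abscissae, the same order as $\delta\cdot O(\delta^2)$. On each such fibre you would also need the signs of up to $\delta$ derivatives.
\item The $\beta_i$ of \textsc{TopoNT} separate only the roots of $w_1$. On their own, they miss branches of $\calC_\R\setminus(\scrK(\pi_1,\calC)\cup\calS_k)$ lying above an interval between a root of $w_k$ and the adjacent root of $w_1$. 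Adding the endpoints of the $\gamma_i$-intervals just brings you back to the common refinement.
\item The target bound is then asserted, not derived.
\item Your claim that the samples have bit size $\softO{\delta\tau+\delta^2}$ holds only on average. Individually, the separation bound for $w_1w_k$ is $\softO{\delta^3\tau+\delta^4}$ (cf.\ \cref{lem:wij}). This is what makes exact specialization of $f(r,x_2)$ at the samples costly.
\end{itemize}

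\textbf{What is actually missing.} The lost factor $\delta$ is in the per-fibre cost, not in the number of fibres. You need an amortized bound that, for fixed $k$, handles the whole bivariate system at once. This is what the paper uses: by \cite[Prop.~20]{DiattaDiattaRouillierRoySagraloff2022}, one can isolate all real solutions of $P_k'(x_1)=f(x_1,x_2)=0$ and get the sign of $\partial^k_{x_2}f$ at each of them in $\softO{\delta^6+\delta^5\tau}$ bit operations. Here $\deg P_k'=O(\delta^2)$ and its bit size is $\softO{\delta\tau+\delta^2}$.

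The remaining steps fit in the same per-$k$ budget:
\begin{itemize}
\item isolating the roots of $P_k'$ precisely enough to separate them from those of $w_k$;
\item handling the two outer fibres, over rationals of bit size $\softO{\delta(\tau+\delta)}$.
\end{itemize}
Summing over the $\delta$ values of $k$ gives $\softO{\delta^7+\delta^6\tau}$. So your per-$k$ scheme is the right one. It needs this amortized solver in place of fibre-by-fibre root isolation, not a merging of fibres across $k$.

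\textbf{Minor point.} Isolating the roots of $w_1w_k$ costs $\softO{\delta^6+\delta^5\tau}$, not $\softO{\delta^6+\delta^4\tau}$. This does not affect the total.
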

\begin{proof}[Proof of correctness]
For each $k\in\{1,\dots,d_2\}$, the algorithm constructs the set
$\mathcal{Q}_k$ as follows.
We define $P_k := w_1 w_k$, with the convention that
$P_1 := w_1$, and denote by
$\zeta_{k,1} < \cdots < \zeta_{k,\ell_k}$
the real roots of $P_k$.
We need to construct rational values
$\theta^{(k)}_0 < \zeta_{k,1} < \theta^{(k)}_1 < \cdots <
\zeta_{k,\ell_k} < \theta^{(k)}_{\ell_k}$,
and to include in $\mathcal{Q}_k$ the real fibers of
$\pi_1|_{\mathcal{C}_{\mathbb R}}$ above these values.
By construction, each open interval $(\zeta_{k,i},\zeta_{k,i+1})$
contains no real root of $P_k$.
Since the derivative $P_k'$ has at least one real root in each such
interval, sufficiently accurate approximations of the real roots of $P_k'$ provide values
$\theta^{(k)}_i$ in the intervals $(\zeta_{k,i},\zeta_{k,i+1})$.
As $w_k(x_1)=Res_{x_2}(f,\partial_{x_2}^k f)$, this implies that
$\partial^k f/ \partial x^k_2$ does not vanish at any real point of the fiber
$\pi_1^{-1}(\theta^{(k)}_i)\cap\mathcal{C}_{\mathbb R}$
 and the sign of $\partial^k f/\partial x^k_2$ is well defined at every
point of $\mathcal{Q}_k$.
For $\theta^{(k)}_0$ and $\theta^{(k)}_m$, we use a bound on the
absolute values of the roots of $w_k$ for $k=1,\dots ,d_2$
(see~\cite[Lem.~10.2]{BPR}) and choose two rational numbers
outside this bound accordingly.
\end{proof}

Proof of complexity is postponed in \cref{sec:planecurvealgo} and an
explicit description of the algorithm is given in the appendix.
\noindent
We define the algorithm \textsc{AssignSigns} which
takes as input an integer $k\ge 1$ and a polynomial
$g\in\mathbb{Q}[x_1]$.
It outputs a finite set $\mathcal{Q}_k\subset \mathcal{C}_{\mathbb{R}}$
together with a sign map
$s_k:\mathcal{Q}_k\to\{-1,0,1\}$.
The set $\mathcal{Q}_k$ is obtained by computing isolating intervals for
the real solutions of the system
$g(x_1)=f(x_1,x_2)=0$,
and for each $q\in\mathcal{Q}_k$ the value $s_k(q)$ is defined as
$s_k(q):=\operatorname{sign}\bigl(\partial_{x_2}^k f(q)\bigr)$.

\subsection{Algorithms and Complexity Analysis} \label{sec:planecurvealgo}
\noindent
\textsc{IsolatingRealSol} takes as input a univariate polynomial
$R(x_1)\in\mathbb{Q}[x_1]$ and a bivariate polynomial
$F(x_1,x_2)\in\mathbb{Q}[x_1,x_2]$ defining a zero-dimensional system, and
returns isolating boxes for the real solutions of
$R(x_1)=F(x_1,x_2)=0$.

\begin{algorithm}[!htb]
\caption{\PlanarComponents}
\label{alg:refine-graph-H}
\begin{algorithmic}[1]
\Require $f$ in $\mathbb{Q}[x_1,x_2]$, which satisfies
$\calS$, and  
a zero-dimensional parametrization $(\lambda(x_1),\theta_2)$ defining
$\calP \subset \reg(\curve)$,
both in generic coordinates, and 
$q\in \Z[x_1]$ such that $q$ divides $w_1$.

\Ensure A semi-algebraic isotopy graph $\mathcal{H}$ of $(\calC_\R,\calP)$ together
with a sign vector $\sigma_e$ for each edge $e\in\mathcal{E}(\mathcal{H})$.

\State $\mathcal{G} \;\leftarrow\; \textsc{TopoNT}(f)$
\Comment{$\mathcal{G}$ is isotopic to $\curve_{\mathbb{R}}$, see \cite{MehlhornSagraloffWang2013}}

\State $V(\mathcal{H}) \;\leftarrow\; V(\mathcal{G})$

\State $V_q\; \leftarrow \; \textsc{IsolatingRealSol}(q(x_1),0)$

\State $w_1 \;\leftarrow\; \operatorname{Res}_{x_2}\!\bigl(f,\partial_{x_2} f\bigr)$

\For{$k = 2,\ldots,\delta$}
  \State $w_k \;\leftarrow\; \operatorname{Res}_{x_2}\!\bigl(f,\partial_{x_2}^{\,k} f\bigr)$
  \State $g_k \;\leftarrow\; \gcd(w_1\cdots w_{k-1},w_k)$
  \State $\overline{w}_k \;\leftarrow\; w_k/g_k$
  \State $V(\mathcal{H}) \leftarrow 
  V(\mathcal{H})\cup \textsc{IsolatingRealSol}(\overline{w}_k(x_1),\ f(x_1,x_2))$
\EndFor 
\State $\overline{\lambda}\leftarrow \lambda/gcd(w_1\cdots w_\delta,\lambda)$
\State $\calV(\calH)\leftarrow \calV(\calH)\cup \textsc{IsolatingRealSol}(\overline{\lambda}(x_1),f(x_1,x_2))$
\State $V_\calP\; \leftarrow\; \textsc{IsolatingRealSol}(\lambda(x_1),(\lambda' x_2-\theta_2)(x_1,x_2))$
\State $\calH \;\leftarrow\; \textsc{UpdateEdges}(\mathcal{V}(\calH),\mathcal{G})$

\State $\{(\mathcal{Q}_k,s_k)\}_{k=1}^d \;\leftarrow\;
\textsc{SignSamples}\ (f,w_1,\dots,w_d)$

\State $\{\sigma_e\}_{e\in\mathcal{E}(\mathcal{H})}
\;\leftarrow\; \textsc{PropagateSigns}\ \!\bigl(\mathcal{H},\{s_k\}_{k=1}^d\bigr)$

\State \textbf{return} $\bigl(\mathcal{H},\{\sigma_e\}_{e\in\mathcal{E}(\mathcal{H})}\bigr)$

\State \textbf{return} $\mathcal{H}$
\end{algorithmic}
\end{algorithm}

\noindent
\begin{proposition}
  Algorithm \PlanarComponents is correct.  
\end{proposition}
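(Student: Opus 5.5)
We verify, one after the other, the items of Definition~\ref{saisotopygraph} for the output $\calH$ together with the sign vectors $\sigma_e$. The key tools are the correctness of \textsc{TopoNT} from \cite{MehlhornSagraloffWang2013}, Theorem~\ref{constantsignpattern} and its corollary, and Lemma~\ref{lem:computesignsamples}.

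\emph{Isotopy.} By \cite{MehlhornSagraloffWang2013}, $\mathcal{G}$ is isotopic to $\curve_\R$, its vertex set contains the real points above the $\alpha_i$ (hence $\scrK(\pi_1,\curve_\R)$), and no two critical vertices are adjacent: between two consecutive $\alpha_i$ there is a $\beta_i$ fiber. The loop over $k$ adds the real points of $\curve$ above the roots of $\overline{w}_k$. Because $w_1\cdots w_\delta$ has been divided out, these new points lie above neither the $\alpha_i$ nor the roots of the previous $\overline{w}_j$, so they are new regular points of $\curve_\R$. The same holds for the points above the roots of $\overline{\lambda}$. \textsc{UpdateEdges} inserts these points into edges of $\mathcal{G}$. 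Subdividing an arc by finitely many points of the curve keeps $\curve_{\mathcal{H}}$ isotopic to $\curve_\R$. The assignment of each inserted point to an edge is correct because over $(\alpha_k,\alpha_{k+1})$ the fiber cardinality is constant and the arcs do not cross, so ordering by $x_2$ identifies arcs. Subdivision only adds regular vertices, so adjacency among critical vertices is unchanged. Finally, $\calS(f)\cup\calP\subset\calV(\calH)$, since $\pi_1(\calS(f))$ is exactly the zero set of $\prod_k w_k$ and $\calP$ lies above the roots of $\lambda$.

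\emph{Vertex descriptions.} Each vertex carries an isolating box of a real solution of a zero-dimensional system $\{g(x_1)=0,\ f=0\}$, with $g\in\{w_1,\overline{w}_k,\overline{\lambda}\}$, or the $\beta_i$ fiber. Together with the box inequalities, this system is a semi-algebraic description of a single point. $V_\calP$ and $V_q$ are identified by matching the boxes produced by \textsc{IsolatingRealSol}. Here we use that $\calP$ is given by its parametrization, and $q\mid w_1$ ensures $V_q\subset\calV_{\scrK}$.

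\emph{Edge descriptions.} This is the main point. Every edge $e$ of $\calH$ is an open arc whose $x_1$-projection avoids $\pi_1(\calS(f))$, since all such fibers are vertices. It is therefore contained in a single component $C_{i,j}$ of Theorem~\ref{constantsignpattern}, on which $\sigma$ is constant with all entries nonzero. By Lemma~\ref{lem:computesignsamples}, $\mathcal{Q}_k$ meets each connected component of $\curve_\R\setminus(\scrK\cup\calS_k)$. \textsc{PropagateSigns} then gives $\sigma_{e,k}$ equal to the sign of $\partial^k_{x_2}f$ on $e$: it walks along the edges of $\calH$ between consecutive vertices of $\scrK\cup\calS_k$, copies the sample sign, and uses that each such maximal chain of edges is one component. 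The description \eqref{eq:sas1} with the projection bounds of the endpoints then cuts out exactly $e$. By Thom encoding (the corollary), $\Sigma_{i,j}\cap\{f=0\}$ over that $x_1$-interval is exactly $C_{i,j}$ restricted to it. For edges ending at a $\beta$ vertex, we use the neighbouring edge as explained before the algorithm.

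The expected obstacle is justifying that \textsc{PropagateSigns} finds a sample in each chain. This amounts to checking that the $\theta^{(k)}_i$ fibers correspond to edges of $\calH$ lying strictly inside components of $\curve_\R\setminus(\scrK\cup\calS_k)$. I would handle it by locating each sample point in $\calH$ via its $x_1$-coordinate and $x_2$-ordering, exactly as in \textsc{UpdateEdges}.
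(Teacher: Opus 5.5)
Your proposal is correct and follows the same route as the paper's proof: correctness of \textsc{TopoNT}, the loop over $k$ and the $\overline{\lambda}$ step adding the real points above $\pi_1(\calS(f))$ and $\calP$, \textsc{UpdateEdges} preserving the isotopy class, and Lemma~\ref{lem:computesignsamples} for the sign data. You organize this item by item against Definition~\ref{saisotopygraph} and spell out, via Theorem~\ref{constantsignpattern} and Thom encodings, why the edge descriptions are exact, which the paper leaves implicit; one harmless imprecision is that $\overline{w}_k = w_k/\gcd(w_1\cdots w_{k-1},w_k)$ can still vanish at earlier roots when $w_k$ has multiple factors, so the inserted points need not all be new, but duplicates change nothing and every genuinely new vertex lies outside $\calV(\mathcal{G})\supseteq\scrK(\pi_1,\curve_\R)$, which is all your adjacency argument needs.
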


\begin{proof}
Algorithm~$\textsc{TopoNT}(f)$ returns a graph isotopic to
$\mathcal{C}_{\mathbb R}$ \cite[Thm.~6]{MehlhornSagraloffWang2013}.
Step~3 identifies the vertices corresponding to $q$, while the loop over
$k$ computes the additional vertices associated with $\mathcal{S}(f)$.
Steps~11--13 introduce the vertices corresponding to $\calP$.
The procedure \textsc{UpdateEdges} refines the graph without changing its
isotopy class.
Finally, the correctness of the sign information computed by
\textsc{SignSamples} and propagated by \textsc{PropagateSigns} follows
from Lemma~\ref{lem:computesignsamples} (it applies since $f$ is
in generic coordinates).
\end{proof}
\noindent
Let $(\delta,\tau)$ denote the magnitude of $f$ and
$(\delta_{\mathcal{P}},\tau_{\mathcal{P}})$ the magnitude of $\calP$.
We can assume that the polynomials
$w_1,\dots,w_\delta$ are pairwise coprime.
For $P \in \mathbb{Z}[X]$ of magnitude $\left(p, \eta \right)$, 
the logarithm of the separation distance of the roots of $P$ lies in 
$\log_2\!\bigl(1/\operatorname{sep}(P)\bigr) = O(p\eta + p\log p)$ (see~\cite[Cor.~10.22]{BPR}).
Moreover, by~\cite[Lem.~66]{MelczerSalvy2021}, isolating the real
roots bit precision $\overline{k}$ of such a polynomial $P$ 
can be carried out within
$\tilde{O}\!\left(p^3 + p^2\eta + p\overline{k}\right)$
bit operations. 


\begin{lemma}\label{lem:wij}
  The bit precision required for separating the real roots of
  $\omega_i$ and $\omega_j$ is in
  $\widetilde{O}(\delta^3(\tau+\delta\log \delta))$.
\end{lemma}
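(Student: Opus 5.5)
The plan is to reduce separating the real roots of $\omega_i$ and $\omega_j$ (i.e.\ $w_i$ and $w_j$) to a single-polynomial root separation bound, namely the one recalled just above: $\log_2(1/\operatorname{sep}(P)) = O(p\eta + p\log p)$ for $P$ of magnitude $(p,\eta)$. Isolating intervals for the real roots of $w_i$ and $w_j$ must be pairwise disjoint. Two approximations fail to be separated only if some root of $w_i$ is closer to some root of $w_j$ than the working precision. So it suffices to bound from below the distance between a root of $w_i$ and a root of $w_j$ whenever they are distinct. Since the $w_k$ may be assumed pairwise coprime, this distance is always positive. I will bound it by the separation of the single polynomial $P = w_i w_j$: any two distinct roots of $w_i w_j$ are at distance at least $\operatorname{sep}(w_i w_j)$. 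Because of coprimality, $w_iw_j$ is square-free if each factor is, which we may arrange by passing to square-free parts. This costs nothing in the magnitude bounds, by Mignotte-type bounds up to $O(\delta)$ extra bits.

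The next step is to bound the magnitudes of the $w_k$. We have $w_k = \operatorname{Res}_{x_2}(f, \partial^k f/\partial x_2^k)$, with $\deg f \le \delta$ and $\deg \partial^k_{x_2} f \le \delta - k$. The Sylvester matrix has size at most $2\delta$, and its entries are polynomials in $x_1$ of degree at most $\delta$. Hence $\deg w_k \le \delta(2\delta) = O(\delta^2)$; more precisely it is at most $\delta(\delta-k) + (\delta-k)\delta \le 2\delta^2$. For the bitsize, note that the coefficients of $\partial^k_{x_2} f$ are those of $f$ multiplied by falling factorials bounded by $\delta^k$. Their bitsize is therefore $\tau + O(\delta\log\delta)$. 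The standard resultant height bound (Hadamard on the Sylvester matrix, e.g.\ \cite[Prop.~8.46]{BPR}) then gives bitsize $O(\delta(\tau + \delta\log\delta))$ for $w_k$. Consequently $P = w_iw_j$ has degree $p = O(\delta^2)$ and bitsize $\eta = O(\delta(\tau+\delta\log\delta))$, where the product adds only $O(\log \delta^2)$ bits beyond the sum of the heights.

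Plugging in gives
\[
\log_2\!\bigl(1/\operatorname{sep}(w_iw_j)\bigr) = O\bigl(\delta^2\cdot \delta(\tau+\delta\log\delta) + \delta^2\log\delta\bigr) = O\bigl(\delta^3(\tau+\delta\log\delta)\bigr).
\]
Approximating each root to this many bits, plus a constant, yields disjoint isolating intervals for the roots of $w_i$ and $w_j$, which is the claim. The only delicate point is the height bound for $\partial^k_{x_2} f$ and its effect on the resultant. One has to check that the factorial factors contribute $O(\delta\log\delta)$ per coefficient and not something larger. After multiplication by the $O(\delta)$ rows, this is precisely the source of the $\delta\log\delta$ term in the statement.
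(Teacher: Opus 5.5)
Your proof is correct and follows the paper's argument. You bound the magnitude of $w_k=\operatorname{Res}_{x_2}(f,\partial^k_{x_2}f)$ by $(O(\delta^2),\,O(\delta(\tau+\delta\log\delta)))$ via a resultant height bound (the paper cites Melczer--Salvy's Lemma~65 where you use Hadamard on the Sylvester matrix), then apply the single-polynomial separation bound to $w_iw_j$. One harmless slip: since $\deg w_k=O(\delta^2)$, Mignotte's bound makes passing to square-free parts cost $O(\delta^2)$ extra bits rather than $O(\delta)$, which is still absorbed in $O(\delta(\tau+\delta\log\delta))$.
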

\begin{proof}
The coefficients of $\partial^k f/\partial y^k$ have bit
size boun\-ded by
$\tau_k := \tau + O(k\log \delta)$. 
Applying~\cite[Lem.~65]{MelczerSalvy2021} with degree bound $\delta$ and coefficient bit 
size bound
$\max\{\tau,\tau_k\}=\tau_k$, we obtain that $w_k$ has magnitude $(D,\eta_k)$ with
$D = O(\delta(\delta-k))$ and
$\eta_k = O\!\bigl(\delta(\log \delta + \tau_k)\bigr)
        = O\!\bigl(\delta(k\log \delta+\tau)\bigr)$.
The polynomial $w_i w_j$ has magnitude
$(O(\delta(2\delta-i-j)),O(\delta(2\tau+(i+j)\log\delta)))$; applying the separation
bound yields a required precision in
$\widetilde{O}(\delta^3(\tau+\delta\log\delta))$ bits.
\end{proof}

\begin{lemma}\label{lem:wilambda}
  The bit precision required for separating the roots of $\omega_i$ and $\lambda$ is in
  $\widetilde{O}((\delta^2+\delta_\calP)(\delta\log \delta+\tau+\tau_\calP))$.
\end{lemma}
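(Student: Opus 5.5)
The plan is to argue directly with the quantity we actually need: a lower bound on $|\alpha-\beta|$ for a real root $\alpha$ of $w_i$ and a real root $\beta$ of $\overline{\lambda}$. We do not need the separation bound of the product $w_i\lambda$ used in \cref{lem:wij}. That route gives degree $O(\delta^2+\delta_\calP)$ times height $O(\delta\tau+\delta^2\log\delta+\tau_\calP)$, which overshoots the statement by a factor $\delta$. The gap comes from the height $O(\delta(\tau+k\log\delta))$ of $w_i$, so the argument must avoid multiplying that height by $\deg\lambda$ or by $\deg w_i$.

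\textbf{Step 1 (reduction to a value bound).} By the Mean Value Theorem applied to $w_i$ on the segment $[\alpha,\beta]$,
\[
|w_i(\beta)| = |w_i(\beta)-w_i(\alpha)| \le |\alpha-\beta|\cdot \max_{|z|\le \rho}|w_i'(z)|,
\]
where $\rho$ is a Cauchy bound for the roots of $w_i$ and $\lambda$ (\cite[Lem.~10.2]{BPR}). The logarithm of the right-hand maximum is $\widetilde{O}(\delta\tau+\delta^2\log\delta+\delta^2\tau_\calP)$, which lies within the target. It therefore suffices to prove
\[
\log\frac{1}{|w_i(\beta)|} \in \widetilde{O}\bigl((\delta^2+\delta_\calP)(\delta\log\delta+\tau+\tau_\calP)\bigr).
\]

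\textbf{Step 2 (factor $w_i(\beta)$ through the fibre).} Since $w_i$ is a resultant in $x_2$, Poisson's formula gives
\[
w_i(\beta)=\mathrm{lc}_{x_2}(f)^{\,\delta-i}\prod_{f(\beta,y)=0}\partial^i_{x_2}f(\beta,y),
\]
the leading coefficient being constant since $f$ is in generic coordinates. Instead of bounding the univariate height of $w_i$, I will use the integer
\[
N=\operatorname{Res}_{x_1}\bigl(\overline{\lambda},\,w_i\bigr)=\mathrm{lc}(\overline{\lambda})^{\deg w_i}\prod_{\overline{\lambda}(\beta')=0}w_i(\beta').
\]
After clearing denominators, $N$ is a nonzero integer, because $\overline{\lambda}$ is coprime to $w_1\cdots w_\delta$. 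Hence $|w_i(\beta)|\ge 1/\bigl(\mathrm{lc}^{\deg w_i}\prod_{\beta'\ne\beta}|w_i(\beta')|\bigr)$. Each factor $|w_i(\beta')|$ is a product of at most $\delta$ values of $\partial^i f$ at points of the fibre of $f$ over $\beta'$.

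\textbf{Step 3 (the accounting, which is the main obstacle).} The key estimate is an upper bound on $\sum_{\beta'}\log|w_i(\beta')|$, i.e.\ on $\log\prod_{(\beta',y)}|\partial^i f(\beta',y)|$ over all solutions of $\overline{\lambda}(x_1)=f(x_1,x_2)=0$. There are $O(\delta\delta_\calP)$ such points. Bounding each factor crudely by $\tau+i\log\delta+\delta\log\rho$ would again introduce an extra $\delta$. Instead, I plan to control the whole product with a Mahler-measure estimate for the bivariate system $\{\overline{\lambda},f\}$ (Davenport--Mahler style, as in the amortised bounds of \cite{MelczerSalvy2021}). The sum of $\log\max(1,|y|)$ over the fibres is then bounded by $O(\delta_\calP(\tau+\log\delta)+\delta\tau_\calP)$, the Mahler measure of $\operatorname{Res}_{x_2}(\overline{\lambda},f)$. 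Combined with $\log\|\partial^i f\|_\infty\le\tau+O(i\log\delta)$, this gives $\widetilde{O}\bigl((\delta^2+\delta_\calP)(\delta\log\delta+\tau+\tau_\calP)\bigr)$. The contribution of $\mathrm{lc}(\overline{\lambda})^{\deg w_i}$ is $O(\delta^2\tau_\calP)$, also within the bound.

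If this amortisation turns out not to reach the stated form, the fallback is to compute the precision per root from the local separation of the product $w_i\overline{\lambda}$ restricted to the $O(\delta^2+\delta_\calP)$ roots that matter. This is where I expect the real difficulty: everything hinges on never paying the full height of $w_i$ multiplied by a degree.
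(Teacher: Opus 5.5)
\paragraph{The paper's route and your objection to it.} The paper's proof is exactly the route you discard. It asserts that $w_i\lambda$ has magnitude $(O(\delta(\delta-i)+\delta_\calP),\,O(\tau+i\log\delta+\tau_\calP))$ and plugs this into $\log(1/\mathrm{sep})\in O(p\eta+p\log p)$. Your diagnosis is correct. That height drops the factor $\delta$ in $\eta_i=O(\delta(\tau+i\log\delta))$ from \cref{lem:wij}. With the right height, the product route only gives $\widetilde{O}((\delta^2+\delta_\calP)(\delta\tau+\delta^2\log\delta+\tau_\calP))$.

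You are also right to read the statement as cross-separation only, since separating the roots of $w_i$ among themselves already needs the $\delta^3\tau$ of \cref{lem:wij}. Your Steps 1--2 are sound, with one fix in Step 1. You must assume $|\alpha-\beta|<1$ and take $\rho=|\beta|+1\le 2^{O(\tau_\calP)}$, i.e.\ use only the root bound of $\lambda$. With the Cauchy bound of $w_i$, the factor $\rho^{\deg w_i-1}$ alone costs about $\delta^2\eta_i\approx\delta^3\tau$.

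\paragraph{The gap: Step 3.} The amortisation does not reach the target; the argument breaks at the sentence ``Combined with \dots\ this gives''. Carry out the accounting. For each of the $\delta\delta_\calP$ points $(\beta',y)$ with $\overline{\lambda}(\beta')=f(\beta',y)=0$,
$\log|\partial^i_{x_2}f(\beta',y)|\le\log\|\partial^i_{x_2}f\|_1+\delta\log\max(1,|\beta'|)+\delta\log\max(1,|y|)$.
\begin{itemize}
\item The first term is paid once \emph{per point}, so it sums to $\delta\delta_\calP(\tau+O(i\log\delta))$.
\item The last term is $\delta$ times your Mahler bound, i.e.\ again $\delta\delta_\calP\tau+\delta^2\tau_\calP$ up to logarithmic factors.
\end{itemize}
What you actually get is $\widetilde{O}(\delta^2\tau_\calP+\delta\delta_\calP(\tau+\delta\log\delta))$. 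The crude estimate $\log|w_i(\beta')|\le\log\|w_i\|_1+\deg w_i\log\max(1,|\beta'|)$, with $\log\|w_i\|_1\in O(\eta_i)$, gives the same bound directly. So the bivariate detour buys nothing: $\eta_i\approx\delta\tau$ already is the amortised quantity.

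No refinement of this resultant/Liouville argument can do better either. For typical inputs the cofactor $\prod_{\beta'\neq\beta}|w_i(\beta')|$ genuinely has size about $2^{\delta\delta_\calP\tau}$, since each $|w_i(\beta')|$ is then of order $2^{\delta\tau}$. The term $\delta\delta_\calP\tau$ lies in $\widetilde{O}((\delta^2+\delta_\calP)(\delta\log\delta+\tau+\tau_\calP))$ when $\delta_\calP\in O(\delta)$. But for $\delta_\calP\ge\delta^2$ and $\tau_\calP\approx\tau\ge\delta\log\delta$, it exceeds the target by a factor of order $\delta$.

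\paragraph{What the proposal establishes.} Your argument proves the statement only when $\delta_\calP\in O(\delta)$, and there Step 3 is not even needed. Your fallback, the local separation of $w_i\overline{\lambda}$, is the product route again and hits the same factor $\delta$. To prove the lemma as stated you would need an idea that removes the $\delta\delta_\calP\tau$ term, and the proposal does not supply one. Otherwise the honest conclusion is the weaker bound $\widetilde{O}(\delta^2\tau_\calP+\delta\delta_\calP(\tau+\delta\log\delta))$. A minor correction: the resultant whose roots are the fibre coordinates $y$ is $\operatorname{Res}_{x_1}(\overline{\lambda},f)$, not $\operatorname{Res}_{x_2}$.
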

\begin{proof}
We proceed as above. The product $w_i\lambda$ has magnitude
$(O(\delta(\delta-i)+\delta_\calP), O(\tau+i\log \delta+\tau_\calP))$. 
The bits of precision required to
isolate all its roots are $\widetilde{O}((\delta^2+\delta_\calP)(\delta\log \delta+\
\tau+\tau_\calP))$.
\end{proof}

\begin{proof}[Proof of complexity Lemma \ref{lem:computesignsamples}]
Let $P_i := w_1 w_i$, whose magnitude, as well as that of $P'_i$,
is bounded by $(2\delta^2,\,\delta(\tau+\delta\log\delta))$.
Isolating the roots of $P'_i$ with sufficient precision to separate them from the
roots of $w_k$, for $2\le k\le \delta$, costs
$\widetilde{O}(\delta^6+\delta^5\tau)$ bit operations.


\noindent
By \cite[Prop.~20]{DiattaDiattaRouillierRoySagraloff2022}, it is possible
to compute isolating intervals for all solutions of the system
$P'_1(x_1) = f(x_1,x_2) = 0$
together with the sign of $\partial^k f/\partial x_2^k$ at these points,
using $\widetilde{O}(\delta^6 + \delta^5\tau)$ bit operations.
Since we need to solve this system for each $k\in\{1,\dots,\delta\}$ ,
the total cost of this step is
$\widetilde{O}(\delta^7 + \delta^6\tau)$ bit operations.

\noindent
Finally, we handle the two points $\theta_0$ and $\theta_m$.
Their bit size is bounded by $\widetilde{O}(\delta(\tau+\delta))$
\cite[Lem.~10.2]{BPR}.
Consider the systems
$x_1 - \theta_j = f(x_1,x_2) = 0$,
for $j\in\{0,m\}$.
By \cite[Prop.~20]{DiattaDiattaRouillierRoySagraloff2022}, isolating
their real solutions and determining the sign of
$\partial^k f/\partial x_2^k$ at these points can be performed using
$\widetilde{O}(\delta^6 + \delta^5\tau)$ bit operations.
\end{proof}

\begin{proposition}\label{complexityplanarcomponents}
 Algorithm~\PlanarComponents uses
 \begin{equation*}
   \widetilde{O}((\delta^6+\delta^3\delta_\querypoints+\delta
   \delta_{\querypoints}^2)
   (\tau+\delta)+(\delta^5+\delta^3\delta_\querypoints+\delta_\querypoints^2)
   \tau_\querypoints+\delta^5\delta_\querypoints+\delta_\querypoints^3\\+\delta_q^2\tau_q)
 \end{equation*}
bit operations. Moreover, the output graph $\calH$ satisfies
  $|\calV(\calH)| \in O(\delta^4 + \delta\,\delta_{\calP})$ and 
  $|\mathcal{E}(\calH)| \in O\bigl(\delta(\delta^3 + \delta_{\calP} + 1)\bigr)$.

\end{proposition}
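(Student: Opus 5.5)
We bound the cost of \PlanarComponents line by line and then add the counts for the output graph. Lines 1--2 cost what \textsc{TopoNT} costs, namely $\softO{\delta^6+\delta^5\tau}$ \cite{MehlhornSagraloffWang2013}, which fits inside $\delta^6(\tau+\delta)$. Line~3 isolates the real roots of $q$. By \cite[Lem.~66]{MelczerSalvy2021} this costs $\softO{\delta_q^3+\delta_q^2\tau_q}$. Since $q$ divides $w_1$, its degree is at most $\delta^2$, so $\delta_q^3\le\delta^6$ and only $\delta_q^2\tau_q$ remains as a separate term.

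The loop over $k$ (lines 5--10) is next. Each resultant $w_k$ has magnitude $(O(\delta^2),\softO{\delta(\tau+\delta)})$ by the proof of Lemma~\ref{lem:wij}. Computing all $\delta$ of them by fast subresultants costs $\softO{\delta\cdot\delta^4(\tau+\delta)}=\softO{\delta^5(\tau+\delta)}$. For the gcds we would not recompute the product $w_1\cdots w_{k-1}$ at each step. Instead we would use the remark that the $w_k$ may be assumed pairwise coprime after removing common factors, and bound $\sum_k\deg\overline w_k\le\sum_k\deg w_k=O(\delta^3)$.

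The key step is the calls to \textsc{IsolatingRealSol}$(\overline w_k,f)$. We would invoke \cite[Prop.~20]{DiattaDiattaRouillierRoySagraloff2022} on each system, with a cost depending on $\deg\overline w_k$ and on the precision required by Lemma~\ref{lem:wij}. Then we would sum over $k$, using the total degree bound $O(\delta^3)$, to get $\softO{\delta^6(\tau+\delta)}$. This amortisation is the main obstacle. A naive per-$k$ bound of $\delta^6$ would give $\delta^7$. I would therefore first try to show that the bivariate solving cost is essentially linear in $\deg\overline w_k$ times $\delta^3(\tau+\delta)$, so that summing over $k$ gives $\delta^3\cdot\delta^3(\tau+\delta)$.

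Lines 11--13 handle $\lambda$. Computing $\gcd(w_1\cdots w_\delta,\lambda)$ involves polynomials of degree $O(\delta^3+\delta_\calP)$. Its cost, together with isolating the roots of $\overline\lambda$ with $f$ to the precision of Lemma~\ref{lem:wilambda}, produces the mixed terms
\[
(\delta^3\delta_\calP+\delta\delta_\calP^2)(\tau+\delta)+(\delta^3\delta_\calP+\delta_\calP^2)\tau_\calP+\delta^5\delta_\calP+\delta_\calP^3 .
\]
The $\delta_\calP^3$ and $\delta_\calP^2\tau_\calP$ terms come from \cite[Lem.~66]{MelczerSalvy2021} applied to $\lambda$. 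The $\delta^5\tau_\calP$ term comes from the precision $(\delta^2+\delta_\calP)(\tau+\tau_\calP+\delta)$ multiplied by the $O(\delta^3)$ roots already present.

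Line~14 (\textsc{UpdateEdges}) only sorts the roots already computed above each abscissa, which is negligible. \textsc{SignSamples} costs $\softO{\delta^7+\delta^6\tau}$ by Lemma~\ref{lem:computesignsamples}. \textsc{PropagateSigns} is linear in the size of the graph.

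For the size of $\calH$, the vertices are the points above the $\alpha_i,\beta_i,\gamma_i$ and the roots of $\overline\lambda$. There are $O(\delta^3)$ abscissae from the $w_k$, each carrying at most $\delta$ points, which gives $O(\delta^4)$. The roots of $\overline\lambda$ carry at most $\delta$ points each, which gives $O(\delta\delta_\calP)$. Every new vertex subdivides exactly one edge of $\calG$, and $\calG$ has $O(\delta^4)$ edges. Hence $|\calE(\calH)|\le|\calE(\calG)|+|\calV(\calH)|\in O(\delta(\delta^3+\delta_\calP+1))$.
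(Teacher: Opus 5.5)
Your line-by-line accounting is the paper's (\textsc{TopoNT}, the resultants $w_k$, one call to \cite[Prop.~20]{DiattaDiattaRouillierRoySagraloff2022} per $k$, the precisions of Lemmas~\ref{lem:wij} and~\ref{lem:wilambda}, then \textsc{SignSamples}). However, the step you single out as the main obstacle is not an obstacle, and as written you leave it open ("I would first try to show\ldots").

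The paper simply bounds each of the at most $\delta$ bivariate solves above the roots of the $w_k$ by $\widetilde{O}(\delta^6+\delta^5\tau)$ and sums, getting $\widetilde{O}(\delta^7+\delta^6\tau)$. Since $\delta^6(\tau+\delta)=\delta^6\tau+\delta^7$, this is exactly the first term of the target. The amortised total you aim for, $\delta^3\cdot\delta^3(\tau+\delta)$, is the same quantity. You already accept this implicitly when you charge \textsc{SignSamples} $\widetilde{O}(\delta^7+\delta^6\tau)$ without comment. So drop the proposed refinement of the solving cost to something linear in $\deg\overline{w}_k$. It is unproven, it would need a sharper form of the cited proposition than the paper ever uses, and it buys nothing. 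With the naive per-$k$ bound in its place, your argument closes.

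There are three smaller points.
\begin{enumerate}
\item For $\calV_q$, isolating the roots of $q$ by itself does not tell you which vertices of $\calV_{\scrK}$ lie above them. The paper refines these roots to the separation precision of $w_1$, i.e. $\widetilde{O}(\delta^3(\tau+\delta))$ bits. This adds $\widetilde{O}(\delta_q\delta^3(\tau+\delta))\subset\widetilde{O}(\delta^5(\tau+\delta))$, so it is harmless, but you should account for it.
\item In your $\lambda$ paragraph you list the right terms but do not derive two of them.
\begin{itemize}
\item The term $\delta^5\delta_\calP$ only appears once you use that $\overline\lambda$ has height $O(\tau_\calP+\delta_\calP)$ (Mahler's bound) in the separation precision.
\item The term $\delta\delta_\calP^2(\tau+\delta)$ comes from solving $\overline\lambda=f=0$ and from recovering $\calV_\calP$ from the parametrization (the $\delta_\calP k$ term).
\end{itemize}
This is bookkeeping along the paper's lines.
\item Your counts for $|\calV(\calH)|$ and $|\calE(\calH)|$ are correct. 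The paper's proof does not spell these out.
\end{enumerate}
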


\begin{proof}
  Computing all partial derivatives $\partial_y^k f$ requires no more
than $\widetilde{O}\bigl(\delta^3(\tau+\delta\log\delta)\bigr)$ bit
operations.

The cost of Algorithm~\textsc{TopoNT}, as given in~\cite{MehlhornSagraloffWang2013},
is $\widetilde{O}(\delta^6 + \delta^5\tau)$. 
To identify the subset $\mathcal{V}_q \subset \mathcal{V}_{\mathcal K}$, it is enough
to isolate the roots of $q$ with the separation bound induced by the roots of
$w_1$, which costs $\widetilde{O}(\delta_q^3+\delta_q^2\tau+\delta_q\delta^3\tau)$ bit operations.

\noindent
Following the proof of~\cite[Corollary~2]{DiattaDiattaRouillierRoySagraloff2022},
the bit complexity of computing the bivariate resultant
$w_k$ lies in 
$\tilde{O}\!\bigl(\delta^4 \tau_k + \delta^5\bigr)=\tilde{O}\!\bigl(\delta^4(\tau+k
\log(\delta))+\delta^5\bigr)$.
We can therefore compute $\overline{\lambda}$, a divisor of $\lambda$
coprime with all $w_i$, at negligible cost, and with magnitude bounded by
$(\delta_{\mathcal P},\, O(\tau_{\mathcal P} + \delta_{\mathcal P}))$,
see~\cite{Mahler62}.
Since the precision bound of Lemma~\ref{lem:wij} is independent of $i$ and $j$,
the roots of all $w_k$ can be isolated with this precision so as to be mutually
separated. Accounting also for the separation from the roots of $\overline{\lambda}$
(Lemma~\ref{lem:wilambda}), the required precision is dominated by
$\widetilde{O}\!\Bigl(
(\delta^3+\delta_{\calP})(\tau+\delta)
+(\delta^2+\de_\calP)(\tau_{\calP}+\de_\calP)
\Bigr)$.

\noindent
The real roots of $\omega_k$ with this precision can be computed using
$\widetilde{O}(\delta^6+(\delta^5+\delta^2\delta_\calP)\tau+
\delta^4(\tau_\calP+\delta_\calP)+\delta^2\delta_\calP\tau_\calP)$ bit operations, 
while for the roots of $\lambda$ with the bit precision of \cref{lem:wilambda} we need 
$\widetilde{O}(\delta_\calP^3+\delta_\calP^2\tau_\calP+(\delta_\calP(\de^2+\de_\calP)
(\delta+\tau+\de_\calP+\tau_\calP))$
bit operations. Since $k\leq \delta$ the overall bit complexity for computing all isolationg intervals 
for $x_1$-coordinates of the vertices of $\mathcal{H}$ is bounded by 
\[
\widetilde{O}((\delta^6+\delta^3\delta_\calP+\delta_\calP^2)
(\tau+\delta)+(\delta^5+\delta^3\delta_\calP+\delta_\calP^2)\tau_\calP+
\de^5\de_\calP+\delta^2\de_\calP^2+\delta_\calP^3).
\]
To compute isolating boxes for all vertices of $\mathcal H$ lying above the roots
of $w_i$, for $i=2,\dots,\delta$, we apply
\cite[Prop.~20]{DiattaDiattaRouillierRoySagraloff2022}.
Isolating the real solutions of
$w_i(x_1)=f(x_1,x_2)\partial^k_{x_2}f(x_1,x_2)=0$
and identifying the common roots of $f$ and $\partial^k_{x_2}f$
costs $\widetilde{O}(\delta^6+\delta^5\tau)$ bit operations.
Repeating this for $i=2,\dots,\delta$ yields a total cost
$\widetilde{O}(\delta^7+\delta^6\tau)$.

While isolating the points in $\mathcal P$, we also isolate the real solutions of
$\overline{\lambda}(x_1)=f(x_1,x_2)=0$ using
$\widetilde{O}(\dep^2\tp+\dep^3+\de^5\tau+\de^6
+\de(\de^2+\dep)(\dep\tau+\de\tp+\de\dep))$ bit operations.
To identify the isolating intervals corresponding to the points in $\mathcal P$,
we apply \cite[Thm.~47]{MelczerSalvy2021}, which computes isolating intervals
for the real points encoded by the zero-dimensional parametrization
$(\lambda(x_1),\theta_2)$ in
$\widetilde{O}(\dep^3+\dep^2\tp+\dep k)$ bit operations, where
$k=\de(\dep\tau+\de\tp+\de\dep)+\de^4+\de^3\tau$ follows from
\cite[Prop.~15]{DiattaDiattaRouillierRoySagraloff2022} applied to
$\lambda=f=0$ and $w_i=f=0$.

\noindent
Summing the complexity bounds of all steps, the refinement procedure that
computes the graph $\mathcal{H}$ from the isotopy graph $\mathcal{G}$ uses
$\widetilde{O}((\delta^6+\delta^3\delta_\querypoints+\de\dep^2)(\tau
+\delta)+(\delta^5+\delta^3\delta_\querypoints+\delta_\querypoints^2)
\tau_\querypoints+\de^5\de_\calP+\delta_\querypoints^3)$.

\noindent
Combining the above with the complexity of
\textsc{SignSamples}, the claimed complexity bound for
\PlanarComponents follows.
\end{proof}

\subsection{Final algorithm}

Let $f \in \Q[x_1, x_2]$ of magnitude $(\delta, \tau)$,
in generic coordinates, defining the curve $\curve \subset \C^n$.
We no more assume that $f$ satisfies $\calS$. 
Let $\calP \subset \curve$. 
If
$f$ does not satisfy $\calS$, then there exists $k$ such that
$g = \gcd\left( f, {\partial^k f} / {\partial x_2^k} \right)$ has
positive degree. Repeating this recursively on $g$
and $f / g$ one obtains a factorization of $f$ into polynomials
satisfying $\calS$. We name this algorithm \textsc{PreCond} (see the
appendix).

\begin{lemma}\label{lem:precond}
  Algorithm \textsc{PreCond} uses $\compprecond$ bit operations. It outputs
  polynomials $(f_i)_{1 \le  i \le r}$ of magnitude $\left( \delta_i,
  \tau_i\right)_{1 \le  i \le  r}$ such that $\sum_{i=1}^r \delta_i =
  \delta$ and $\sum_{i=1}^r \tau_i \in \softO{\tau + \delta}$. 
\end{lemma}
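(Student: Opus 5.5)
We analyse \textsc{PreCond} as a recursion tree. Each node holds a polynomial $g$ dividing $f$. For $g$ we compute $\gcd(g,\partial^k g/\partial x_2^k)$ for $k=1,\dots,\deg_{x_2}(g)$, stopping at the first $k$ for which it has positive degree. If no such $k$ exists, $g$ is a leaf. In that case every $\calS_k$ of $g$ is finite, because $V(g,\partial^k g/\partial x_2^k)$ is finite exactly when the gcd is constant. Otherwise we split $g$ into $h$ and $g/h$ and recurse on both.

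Correctness and the degree identity come from the same observation. Each split writes a node as a product of two factors of positive degree, so the leaves $f_1,\dots,f_r$ satisfy $\prod f_i = f$. Degrees are additive under products, which gives $\sum_i \delta_i=\delta$. Since every split strictly lowers degree, the tree has at most $\delta$ leaves and fewer than $2\delta$ nodes. Each leaf satisfies $\hypS$ by the stopping criterion.

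For the height bound we use Mahler's measure \cite{Mahler62}. It is multiplicative, $M(f)=\prod_i M(f_i)$. It also compares with the height: for a bivariate $g$ of degree $\le\delta$, $\log\|g\|_\infty \le \log M(g) + O(\delta)$ and $\log M(g) \le \log\|g\|_2 \le \tau + O(\log \delta)$. Taking logarithms,
\[
\sum_i \tau_i \le \sum_i \bigl(\log M(f_i)+O(\delta_i)\bigr) = \log M(f) + O(\delta) \in O(\tau+\delta).
\]
The same inequality applies to every intermediate node, since each node divides $f$. So every polynomial that appears has bit size $O(\tau+\delta)$, and along any level of the tree the bit sizes sum to $O(\tau+\delta)$. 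Because $f$ is square-free, each factor is normalised to be primitive in $\Z[x_1,x_2]$; this keeps the representation consistent with the Mahler bound.

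For the cost, each node $g$ of magnitude $(\delta_g,\tau_g)$ needs up to $\delta_g$ bivariate gcds of $g$ with a derivative. Each derivative has magnitude $(\delta_g,\tau_g+O(\delta_g\log\delta_g))$. We then divide $g$ by the gcd, and the cost of that division is dominated by the gcd computation. I would use a bivariate gcd algorithm, for example via subresultants or evaluation and interpolation with Mignotte-type height control, costing $\softO{\delta_g^5(\tau_g+\delta_g)+\delta_g^6}$ bit operations. One node then costs $\softO{\delta_g^6(\tau_g+\delta_g)}$.

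Summing over the tree needs care. A naive bound of (number of nodes) times (worst node cost) gives $\delta\cdot\delta^6\tau$, which is too large by a factor $\delta$. To reach $\compprecond$ I would organise the search so that the candidates $k$ are not retried from scratch at every node. After a split at index $k$, both children need only be tested for indices $k'\ge k$: the relation $\partial^{j}(ab)$ is expressed via the Leibniz rule, and the leading coefficient in $x_2$ forces finiteness to be inherited for $j<k$. In total this gives $O(\delta)$ gcd computations plus the splits. The split costs, bounded using $\sum_g \delta_g^5\tau_g\le \delta^5\sum_g\tau_g$ level by level, then total $\softO{\delta^6\tau+\delta^7}$.

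The main obstacle is this amortisation. It requires justifying that a finite $\calS_j$ for $g$ remains finite for its factors, which holds since $V(h,\partial^j h/\partial x_2^j)\subseteq V(g,\partial^j g/\partial x_2^j)$ is not true in general. If that inheritance fails, I would instead bound the depth of the tree by $O(\log\delta)$ using balanced splits where possible, or simply accept charging each $k$ once per tree level. Since the per-level sums of degrees and heights are bounded by $\delta$ and $O(\tau+\delta)$, this still gives $\softO{\delta^6\tau+\delta^7}$ up to logarithmic factors.
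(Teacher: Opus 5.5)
There is a genuine gap, and it is in the complexity part. Your correctness argument and the degree and height bounds are fine. Applying Mahler's bound to every node directly, as a divisor of $f$, is the right way to keep every $\tau_\nu$ in $O(\tau+\delta)$; iterating it from parent to children could let the heights drift.

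You price a single bivariate gcd at $\softO{\delta_g^5\tau_g+\delta_g^6}$. That is a factor $\delta_g$ too pessimistic, and you then try to win the factor back with an amortisation over the indices $k$ that you cannot justify. Finiteness of $\calS_j$ does pass from $g$ to a factor $h$ when $j=1$: a common factor of $h$ and $\partial h/\partial x_2$ divides $\partial g/\partial x_2$ by Leibniz. It fails for $j\ge 2$, because the cross terms $\partial^i h\,\partial^{j-i}(g/h)$ with $0<i<j$ need not vanish modulo that factor, as you yourself observe. So the claimed total of $O(\delta)$ gcd computations is not established.

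Your fallbacks do not rescue the bound either. The shape of the tree is dictated by the gcds, so you cannot choose balanced splits. Charging up to $\delta_g$ gcds per node at your per-gcd price costs $\softO{\delta^6\tau+\delta^7}$ per level. Your accounting cannot exclude a tree of depth $\Theta(\delta)$, which gives $\softO{\delta^7\tau+\delta^8}$ in total.

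The missing ingredient is the sharper cost of one gcd. A gcd of two bivariate polynomials of magnitude $(\delta_\nu,\tau_\nu)$ costs $\softO{\delta_\nu^4\tau_\nu+\delta_\nu^5}$ bit operations. This is the bound the paper uses, and it is of the same order as the bivariate resultants $w_k$ in the proof of Proposition~\ref{complexityplanarcomponents}. A node then costs $\softO{\delta_\nu^5\tau_\nu+\delta_\nu^6}$. The plain count already gives $\softO{\delta^6\tau+\delta^7}$: fewer than $2\delta$ nodes, each of magnitude at most $(\delta,O(\tau+\delta))$, with at most $\delta$ gcds per node. Equivalently, solve the paper's recurrence $\mathsf{C}(\delta_\nu,\tau_\nu)\in\softO{\delta_\nu^5\tau_\nu+\delta_\nu^6}+\mathsf{C}(\delta_{\nu'},\tau_{\nu'})+\mathsf{C}(\delta_{\nu''},\tau_{\nu''})$. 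Either way no amortisation over the indices $k$ and no balancing of the tree is needed.
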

\begin{proof}
  The computations of \textsc{PreCond} can be organized into a binary
  tree whose nodes are labeled by the polynomials given as input to
  \textsc{PreCond}. For each node $\nu$, we denote by $(\delta_\nu,
  \tau_\nu)$ the magnitude of its labeled polynomial. The call to
  \textsc{PreCond} at node $\nu$ performs at most $\delta$ gcd
  computations are performed, each of them with a bit cost in
  $\softO{\delta_{\nu}^4\tau_{\nu} + \delta_{\nu}^5}$.  
  Let $\nu$ be a node with children
  $\nu'$ and $\nu''$. The cumulated cost 
  of all descendants to $\nu$ 
  $\mathsf{C}(\delta_\nu, \tau_\nu)$ is such that 
  $\mathsf{C}(\delta_\nu, \tau_\nu) \in \softO{\delta_\nu^5\tau_\nu
  +\delta_\nu^6} + \mathsf{C}(\delta_{\nu'}, \tau_{\nu'})
  + \mathsf{C}(\delta_{\nu''}, \tau_{\nu''})$ with $\delta_{\nu'} +
  \delta_{\nu''} = \delta_\nu$ and, by \cite{Mahler62}, 
  $\tau_{\nu'} + \tau_{\nu''} \in \softO{\tau_\nu + \delta_\nu}$.
  Solving this recurrence yields the result. The final
  degree and bit size estimates are immediate from the fact that $f=
  f_1 \cdots f_r$ and \cite{Mahler62}. 
\end{proof}

We can compute a semi-algebraic real isotopy graph for each square-free factor
$f_i$ of $f$ using \PlanarComponents.
To obtain a single graph for the curve defined by $f$, we must
connect these partial graphs at the points of
$\mathcal{I}_{i,j} := V_{\mathbb{R}}(f_i,f_j)$
For $i\neq j$ in $\{1,\dots,r\}$, set 
$\mathcal{I}_i := \bigcup_{j\neq i}\mathcal{I}_{i,j}$.
We denote by \textsc{ConnectionPoints} the subroutine that takes as input
$f_1,\dots,f_r\in\mathbb{Q}[x_1,x_2]$ and returns zero-dimensional
parametrizations $\mathscr{I}_i$ for the sets $\mathcal{I}_i$ for $i=1,\dots,r$.
\begin{lemma}\label{lem:connectionpoints}
  \textsc{ConnectionPoints} uses $\softO{\delta^6+\delta^4\tau}$ bit operations. 
  It outputs zero-dimensional parametrizations $\mathscr{I}_i$
  of magnitude bounded by $(d_i\delta,\softO{\delta(\delta+\tau)})$.
\end{lemma}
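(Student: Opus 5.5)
The plan is to compute, for each pair $i<j$, a zero-dimensional parametrization of $V(f_i,f_j)$ by a bivariate system solver. Then, for each $i$, the parametrizations $\mathscr{I}_{i,j}$ for $j\neq i$ are merged into one parametrization $\mathscr{I}_i$.

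\textbf{Step 1: pairwise solving.} Since the $f_i$ come from \textsc{PreCond} applied to the square-free $f$, they are pairwise coprime, so $V(f_i,f_j)$ is finite. Because $f$ is in generic coordinates, a generic shear (or the $x_1$-projection itself) separates these points. I would invoke \cite[Prop.~15]{DiattaDiattaRouillierRoySagraloff2022}, i.e. the resultant plus subresultant approach, on $f_i=f_j=0$. This yields a parametrization $(\lambda_{i,j}(x_1),\theta_{i,j})$ with $\lambda_{i,j}$ dividing $\operatorname{Res}_{x_2}(f_i,f_j)$.

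\textbf{Step 2: degree and height of each piece.} By Bézout, $\deg\lambda_{i,j}\le \delta_i\delta_j$. By \cite[Lem.~65]{MelczerSalvy2021}, the resultant has bit size $\softO{\delta_i\tau_j+\delta_j\tau_i+\delta_i\delta_j}$. Its factors keep bit size in the same class by \cite{Mahler62}, and so do the subresultant-based $\theta_{i,j}$. The cost of one pair is
\[
\softO{\max(\delta_i,\delta_j)^4\max(\tau_i,\tau_j)+\max(\delta_i,\delta_j)^5}\cdot\bigl(\min(\delta_i,\delta_j)/\max(\delta_i,\delta_j)\bigr),
\]
or more crudely $\softO{\delta_i\delta_j(\delta_i+\delta_j)^2(\tau_i+\tau_j+\delta)}$. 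Summing over pairs and using $\sum\delta_i=\delta$ and $\sum\tau_i\in\softO{\tau+\delta}$ from Lemma~\ref{lem:precond} then gives $\softO{\delta^4(\tau+\delta)}$, which fits within $\softO{\delta^6+\delta^4\tau}$. If the finer per-pair bound is not available, I would instead compute, for each $i$, a single parametrization of $V(f_i,\prod_{j\ne i}f_j)$. This is one bivariate solve with degrees $(\delta_i,\delta)$ and bit sizes $\softO{\tau+\delta}$, so it directly yields $\mathscr{I}_i$ of degree $\le d_i\delta$. Summing the costs $\softO{\delta^4(\tau+\delta)\delta_i/\delta\cdot\delta}$ over $i$ stays in $\softO{\delta^6+\delta^4\tau}$.

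\textbf{Step 3: magnitude of $\mathscr{I}_i$.} With the second formulation the bound is immediate. The degree is at most $\deg_{\rm total}(f_i)\cdot\deg(f/f_i)\le d_i\delta$. The resultant of $f_i$ and $f/f_i$ has bit size $\softO{\delta_i(\tau+\delta)+\delta\tau_i}\subset\softO{\delta(\delta+\tau)}$, and the height of the parametrization polynomials is of the same order by the standard subresultant bounds. The first formulation would instead require combining the pairwise parametrizations via CRT/Lagrange interpolation on disjoint $\lambda_{i,j}$, which is more delicate.

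\textbf{Main obstacle.} The hard part is the cost accounting. Naively taking $r$ solves each at cost $\softO{\delta^6+\delta^5\tau}$ overshoots by a factor $\delta$ in the $\tau$ term. The bound $\delta^4\tau$ requires using the asymmetric-degree complexity of the bivariate solver, where one input has degree $\delta_i$, and then summing over $i$ using $\sum_i\delta_i=\delta$. I would first check that \cite[Prop.~15]{DiattaDiattaRouillierRoySagraloff2022} has a version linear in the smaller degree. Failing that, I would fall back on the resultant bound $\softO{\delta_i\delta^3(\tau+\delta)}$ of \cite{MelczerSalvy2021}, which sums to the claimed cost.
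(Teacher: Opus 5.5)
Your argument holds at the paper's level of detail, but the route you rely on for the magnitude bound differs from the paper's.

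\emph{The paper's route.} It solves every pair $f_i=f_j=0$ separately. It asserts a per-pair cost $\softO{(\tau+\delta)(\delta_i\delta_j)^2+(\delta_i\delta_j)^3}$ and a per-pair magnitude $(\delta_i\delta_j,\softO{\tau_j\delta_i+\tau_i\delta_j+\delta_i\delta_j})$. Summing over pairs, using $\sum_i\delta_i^2\le\delta^2$, gives $\softO{\delta^6+\delta^4\tau}$. Then $\mathscr{I}_i$ is the union over $j$ of the pairwise outputs, computed at negligible cost via \cite{SaSc17}, and its magnitude is obtained by summing the pairwise bounds.

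\emph{Your route.} You solve $f_i=f/f_i=0$ once per $i$, which yields $\mathscr{I}_i$ directly. The degree bound $\delta_i(\delta-\delta_i)\le\delta_i\delta$ and the height bound $\softO{\delta(\delta+\tau)}$ come from the resultant of $f_i$ and $f/f_i$, together with Mahler's bound on $f/f_i$. No merging step is needed.

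\emph{What each buys.} Your route is a genuine simplification. The sets $V(f_i,f_j)$ for different $j$ need not be disjoint (a point may lie on three or more of the curves $V(f_k)$). So your aside about CRT on ``disjoint'' $\lambda_{i,j}$ is not justified in general, whereas $V(f_i,f/f_i)$ handles such points automatically. In exchange, the paper's route works with smaller systems and reuses each $V(f_i,f_j)$ for both $\mathscr{I}_i$ and $\mathscr{I}_j$.

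\emph{Your cost concern.} Needing a degree-sensitive solver bound puts you in no worse position than the paper. Substituting $\delta_j\mapsto\delta-\delta_i$ into the paper's own per-pair bound gives
$\sum_i\bigl((\tau+\delta)\delta_i^2\delta^2+\delta_i^3\delta^3\bigr)\le(\tau+\delta)\delta^4+\delta^6$,
which is within the target.

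\emph{One slip.} The per-$i$ cost written in your Step~2 reads $\delta^4(\tau+\delta)\delta_i/\delta\cdot\delta=\delta_i\delta^4(\tau+\delta)$. This sums to $\delta^5(\tau+\delta)$ and overshoots the target. The intended value is $\softO{\delta_i\delta^3(\tau+\delta)}$, as in your last paragraph, which sums to $\softO{\delta^4(\tau+\delta)}$.
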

\begin{proof}
Such a parametrization for the system
$f_i=f_j=0$ can be computed using $\softO{(\tau+\delta)(\delta_i\delta_j)^2+(\delta_i\delta_j)^3}$,
and its magnitude is bounded by 
$(\delta_i\delta_j,\softO{(\tau_j\delta_i+\tau_i\delta_j)+\delta_i\delta_j})$.
Since this computation has to be performed for all distinct pairs $(i,j)$,
we obtain the claimed complexity.
Fixing $i$ and taking the union of the corresponding zero-dimensional
parametrizations yields the set $\mathscr I_i$ with the 
claimed magnitude bound;
the cost of this union is negligible \cite{SaSc17}.
\end{proof}
After adding the points of $\mathcal{I}_i$ to the semi-algebraic isotopy graph
of $V_{\mathbb{R}}(f_i)$ for each $i\in\{1,\dots,r\}$, we can connect all partial
graphs and obtain a single graph isotopic to $\mathcal{C}_{\mathbb{R}}$.

We define a procedure \textsc{ConnectGraphs} that takes as input the
semi-algebraic isotopy graphs $\mathcal{H}_1,\dots,\mathcal{H}_r$ associated
with the real curves $V_{\mathbb{R}}(f_1),\dots,V_{\mathbb{R}}(f_r)$ and returns
a semi-algebraic isotopy graph of $\mathcal{C}_{\mathbb{R}}$.
The procedure works iteratively: set $\mathcal{H}\leftarrow\mathcal{H}_1$ and,
for $i=2,\dots,r$, merge $\mathcal{H}$ with $\mathcal{H}_i$ by identifying the
vertices corresponding to the same real intersection points in 
$\mathcal{I}_{i}$; the edge sets are then merged
accordingly.
\begin{algorithm}[!htb]
\caption{\GenPlanarComponents}
\label{alg:saig-general}
\begin{algorithmic}[1]
\Require $f\in\mathbb{Q}[x_1,x_2]$, and  
a zero-dimensional parametrization $(\lambda(x_1),\theta_2)$ defining
$\mathcal{P}\subset\calC$, both in
generic coordinates, and $q\in\mathbb{Z}[x_1]$ such that $q$ divides
$Res_{x_2}(f,\partial_{x_2}f)$.
\Ensure A semi-algebraic isotopy graph $\mathcal{H}$ of $(\mathcal{C}_{\mathbb{R}},\mathcal{P})$
together with a sign vector $\sigma_e$ for each edge $e\in\mathcal{E}(\mathcal{H})$.
\State $(f_1,\dots,f_r)\leftarrow \textsc{PreCond}(f)$
\State $(\mathscr{I}_i)_i \;\leftarrow\;
\textsc{ConnectionPoints}(f_1,\dots,f_r)$, 
\For{$i=1,\dots,r$}
  \State $(\mathcal{H}_i,\calV_{\calP_i} \cup \calV_{i}) \;\leftarrow\;
  \PlanarComponents\bigl(f_i,\mathcal{P}\cup\mathcal{I}_i,1\bigr)$
\EndFor
\State $\mathcal{H} \;\leftarrow\;
\textsc{ConnectGraphs}\bigl(\mathcal{H}_1,\dots,\mathcal{H}_r\bigr)$
\State $\calV_q\leftarrow \textsc{IsolatingRealSol}(q(x_1),0)$
\State $\calV_\calP\leftarrow \bigcup_i \calV_{\calP_i}$
\State \Return $(\calH,\calV_\calP,\calV_q)$
\end{algorithmic}
\end{algorithm}
\begin{lemma}\label{connectgraphs}
\textsc{ConnectGraphs} uses
$\widetilde{O}(\delta^4+\delta\delta_\querypoints)$ bit operations.
\end{lemma}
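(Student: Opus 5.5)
The bound in \cref{connectgraphs} is purely combinatorial. Once the graphs $\mathcal{H}_1,\dots,\mathcal{H}_r$ are computed, merging them needs no further algebraic computation, only bookkeeping on vertex labels. So I will bound the total size of the input graphs and show that \textsc{ConnectGraphs} costs quasi-linear time in that size.

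\emph{Step 1: input size.} By \cref{complexityplanarcomponents}, applied to $f_i$ with query set $\calP\cup\mathcal{I}_i$, we get
\[
|\calV(\mathcal{H}_i)| \in O\bigl(\delta_i^4+\delta_i(\delta_\calP+|\mathcal{I}_i|)\bigr), \qquad |\calE(\mathcal{H}_i)| \in O\bigl(\delta_i(\delta_i^3+\delta_\calP+|\mathcal{I}_i|+1)\bigr).
\]
By \cref{lem:precond}, $\sum_i\delta_i=\delta$, so $\sum_i\delta_i^4\le\delta^4$ and $\sum_i\delta_i\delta_\calP=\delta\delta_\calP$.

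The extra term $\sum_i\delta_i|\mathcal{I}_i|$ needs care. Each $\mathcal{I}_{i,j}$ has at most $\delta_i\delta_j$ points by Bézout. Hence
\[
\sum_i \delta_i|\mathcal{I}_i| \le \sum_{i\neq j}\delta_i^2\delta_j \le \delta\sum_i\delta_i^2 \le \delta^3.
\]
Alternatively, the vertices coming from $\mathcal{I}_i$ are counted only once, namely $\sum_i|\mathcal{I}_i|\le\delta^2$, because each such point lies on a single edge of $\mathcal{H}_i$ and splits it once. Either way the total number of vertices and edges across all $\mathcal{H}_i$ is $O(\delta^4+\delta\delta_\calP)$.

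\emph{Step 2: cost of the merge.} The vertices of $\mathcal{H}_i$ coming from $\mathcal{I}_i$ are already flagged by \PlanarComponents, as $\calV_i$. Each such vertex is indexed by the point of the zero-dimensional parametrization $\mathscr{I}_i$ it came from, that is, by the pair $(j,\text{root index})$ produced by \textsc{ConnectionPoints}. Identifying the same point of $\mathcal{I}_{i,j}$ across $\mathcal{H}_i$ and $\mathcal{H}_j$ is then a matching on these discrete labels, which sorting does in quasi-linear time. No isolating boxes need to be compared.

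Merging then consists of three operations:
\begin{itemize}
\item taking the union of the vertex sets and collapsing matched pairs (union-find over all vertices);
\item relabeling edge endpoints, with each edge's sign vector $\sigma_e$ copied as a pointer;
\item forming the union of the edge sets.
\end{itemize}
Each of these costs $\softO{|\calV|+|\calE|}$ word operations on indices of $O(\log\delta)$ bits. This gives $\softO{\delta^4+\delta\delta_\calP}$ bit operations.

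\emph{Main obstacle.} The delicate point is the matching in Step 2. If labels are not carried through, one would have to compare the isolating boxes of $\mathcal{H}_i$ and $\mathcal{H}_j$ numerically, and that would cost precision rather than combinatorics. So I would insist that \textsc{ConnectionPoints} labels each point of $\mathcal{I}_{i,j}$ consistently in both $\mathscr{I}_i$ and $\mathscr{I}_j$. One way is to compute $\mathcal{I}_{i,j}$ once and take the unions afterwards. With consistent labels, the identification is exact and the stated bound follows.
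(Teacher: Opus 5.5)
Your proposal is correct and follows essentially the paper's argument. You bound the sizes of the $\mathcal{H}_i$ by Proposition~\ref{complexityplanarcomponents} with query degree $\delta_\calP+\delta_i\delta$, sum using $\sum_i\delta_i=\delta$ so that the extra $\delta\sum_i\delta_i^2\le\delta^3$ is absorbed into $\delta^4$, and note that merging is bookkeeping linear in the graph sizes; your treatment of consistent labels for the matching is more explicit than the paper, which leaves it implicit. One small slip you do not need: the ``alternatively'' remark is inaccurate, since \PlanarComponents adds the whole real fiber of $f_i$ above each root of $\overline{\lambda}$ (this is where the $\delta\delta_\calP$ term comes from), not a single point splitting one edge.
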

\begin{proof}
The procedure scans the vertex sets of $\mathcal{H}_1,\dots,\mathcal{H}_r$
and merges vertices corresponding to the same intersection points.
From the bound on the number of vertices in 
Proposition~\ref{complexityplanarcomponents},
we conclude that the number of bit operations is in  
$\widetilde{O}(\sum_i(\delta_i^4+\delta_i(\delta_\querypoints+
\de_i\delta))$.
\end{proof}

Observe that the vertex set $\calV_\scrK$ associated to the graph $\calH$ 
satisfies $V_{\mathscr K}
\;=\;
\bigcup_{i=1}^r \bigl( V_{\mathscr K_i} \cup V_i \bigr),
$
where $V_{\mathscr K_i}$ denotes the vertex set associated to $\scrK(\pi_1,\calC_i)$.
This fact is used to identify the subset
$V_q \subset \calV_{\mathscr K}$ once the real roots of $q$ have been isolated.

\begin{proof}[Proof of \cref{thm:saisotopygraph-plane}]
The complexities of \textsc{PreCond}, \textsc{ConnectionPoints}, and
\textsc{ConnectGraphs} follow from Lemma~\ref{lem:precond},
Lemma~\ref{lem:connectionpoints}, and Lemma~\ref{connectgraphs}, respectively.
We then apply the complexity bound of \textsc{PlanarComponents}
(Proposition~\ref{complexityplanarcomponents}) with the substitutions
$(\delta,\tau)\mapsto(\delta_i,\softO{\tau+\delta})$ and
$(\delta_\calP,\tau_\calP)\mapsto(\delta_\calP+\delta_i\delta,\,
\tau_\calP+\softO{\delta(\tau+\delta)})$ for each $i$,
and sum over all indices.
The treatment of the set $\calV_q$ follows the same strategy as in the proof of
Proposition~\ref{complexityplanarcomponents}, with the only additional cost
$\widetilde{O}(\delta_q^2\tau_q)$.
\end{proof}

\section[\NoCaseChange{Space Curves}]{\texorpdfstring{\NoCaseChange{Space Curves}}{Space Curves}}\label{sec:spacecurves}
We denote by $\curve\subset \C^n$ an algebraic curve whose associated
ideal is generated by $\bmf = (f_1,\dots,f_{n-1})\in\mathbb{Q}[\bmx]$
with $f_i$ of magnitude bounded by $(d, h)$. For $1 \le i \le n$, we denote by
$\pi_i$ the projection $(\xi_1, \ldots, \xi_n)\to (\xi_1, \ldots,
\xi_i)$.
Let $\operatorname{app}(\mathcal{C}_2)
:= \operatorname{sing}(\mathcal{C}_2)\setminus
\pi_{2}\bigl(\operatorname{sing}(\mathcal{C})\bigr)$
be the set of apparent singularities of $\mathcal{C}_2$.
Let $\calP \subset \operatorname{reg}(\mathcal{C})$ be a finite set
and $\calP_2 = \pi_2(\calP)$.
%
%
%
%
%
%
%
\textit{We assume that $(\curve, \calP)$ are in generic position},
in the sense that they satisfy properties $(H)$ 
of \cite{IslamPoteauxPrebet2023} which can be recovered up to a
generic linear change of coordinates 
\cite[Prop.~2.5]{IslamPoteauxPrebet2023}.
Then, the apparent singularities of the projected curve
$\mathcal{C}_2$ can be explicitly determined~\cite[Prop.~3.2]{IslamPoteauxPrebet2023}. 
Let $q_\app \in \Q[x_1]$ be the output of
the routine 
$\textsc{ApparentSingularities}$ given in
\cite[Prop. 5.1]{IslamPoteauxPrebet2023} which takes as input the parametrization 
$\curveparam$ of $\calC$ 
and returns a polynomial that defines the $x_1$-projections of the
apparent singularities of $\calC_2$. We denote by $\frakP$ a
zero-dimensional parametrization encoding $\calP$. 

\subsection{Graph constructions}
Let~$\mathcal{H}_w$ be the semi-algebraic isotopy graph of
$\mathcal{C}_2$ returned by
$\PlanarComponents\left(\omega, \frakP ,q_\app
\right) $.  The points at which the paramet\-rization of~$\mathcal{C}$
cannot be lifted to $\mathbb{R}^n$ are those cancelling $\partial w /
\partial x_2 = 0$.  All such points appear as
vertices of the graph~$\mathcal{H}_w$.  Then, the semi-algebraic
description of each edge of~$\mathcal{H}_w$ and of each vertex in
$V(\mathcal{H}_w)\setminus\mathcal{K}(\pi_1,\mathcal{C}_2)$ can be
lifted to~$\mathbb{R}^n$.  Also, since the restriction of $\pi_2$
to $\mathcal{C}$ fails to be injective at $x \in \mathcal{C}$
if and only if $\pi_2(x) \in \operatorname{app}(\mathcal{C}_2)$
\cite[Cor.~2.4]{IslamPoteauxPrebet2023}, the semi-algebraic
description of the vertices in
$\mathcal{K}(\pi_1,\mathcal{C}_2)\setminus
\operatorname{app}(\mathcal{C}_2)$ can be lifted to~$\mathbb{R}^n$ by
adding the equations $ \bmf = 0$ to their description in the
$(x_1,x_2)$-plane.

We need to determine which curve segments corresponding to
edges of~$\mathcal{H}_w$ belong to the same connected component of
$\mathcal{C}_{\mathbb{R}}$ and how to describe the missing points.
The first problem is addressed by the analysis 
in~\cite{IslamPoteauxPrebet2023}. It allows us to assemble the local
descriptions into the global structure of the connected components of
$\mathcal{C}_{\mathbb{R}}$.\\
\noindent
The points in $\operatorname{app}(\mathcal{C}_2)$ 
 are the only ones 
where the connectivity properties differ between $\mathcal{C}_2$ and
$\mathcal{C}$. The behavior of the curve in a neighborhood of such points follows
from~\cite[Lem.~4.5]{IslamPoteauxPrebet2023}.
Let $v_0$ be a vertex of the graph~$\mathcal{H}_w$ corresponding to a real point of
$\operatorname{app}(\mathcal{C}_2)$. Then exactly four edges of $\mathcal{H}_w$
emanate from $v_0$. Let $e_1,\dots,e_4$ be these edges, and let
$v_1,\dots,v_4$ be the vertices reached by following each edge $e_i$ away from
$v_0$.
The edges $e_1,\dots,e_4$ admit a unique cyclic ordering around $v_0$. 
With respect to this ordering, the edges
split into two pairs of opposite edges, $(e_1,e_3)$ and $(e_2,e_4)$. 
\cite[Lem.~4.5]{IslamPoteauxPrebet2023} implies
that, after lifting to $\mathbb{R}^n$, the branches of $\mathcal{C}_{\mathbb{R}}$
corresponding to each pair of opposite edges are connected in
$\mathcal{C}_\R$, while
branches corresponding to edges from different pairs are not.
Let $V_{\mathrm{app}} \subset V(\mathcal{H}_w)$ be the set of vertices
corresponding to
points of $\operatorname{app}(\mathcal{C}_2)$.
The procedure \textsc{SANodeResolution} takes as input
$(\mathcal{H}_w, V_{\mathrm{app}})$ and produces a new graph
$\mathcal{H}'$ gathering the vertices and edges of $\mathcal{H}_w$ lying
in the same connected components of $\curve_\R$, as does
\textsc{NodeResolution} \cite[Def. 4.6]{IslamPoteauxPrebet2023},  so that $\mathcal{H}'$
shares the same connectivity as $\curve_\R$ (see \cite[Prop.
4.7]{IslamPoteauxPrebet2023}). 
  The semi-algebraic data $\sigma_e$ associated with every edge is
  preserved, so that after calling
  \textsc{SANodeResolution}, we obtain, for each connected
  component $D_i$ of $\curve_\R$, semi-algebraic descriptions
  $\Theta_{i,1}, \ldots, \Theta_{i, \ell_i}$ such that their solution
  set coincides with $D_i\setminus \pi_2^{-1}(\app(\curve_2))$.  This
  is what we call a \textit{partial description of the connected
  components of $\curve_\R$}. Algorithm
  \PartialComponents takes as input $\curveparam$ and
  $\mathfrak{P}$ and returns such a partial description by calling
  successively \textsc{ApparentSingularities},
  \GenPlanarComponents and \textsc{SANodeResolution}
  (see the appendix).



\begin{lemma}\label{lemma:SAS3}
 Algorithm
 \PartialComponents is correct and uses 
 $\compsimpplanargraph$ bit
 operations. 
\end{lemma}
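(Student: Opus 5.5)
The proof splits into correctness and complexity, following the three calls made by \PartialComponents: \textsc{ApparentSingularities}, \GenPlanarComponents, and \textsc{SANodeResolution}.

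\emph{Correctness.} Under the genericity assumption $(H)$, \cite[Prop.~3.2 and Prop.~5.1]{IslamPoteauxPrebet2023} give that $q_\app$ vanishes exactly on the $x_1$-coordinates of $\app(\curve_2)$. Since apparent singularities are singular points of $\curve_2$, $q_\app$ divides $\operatorname{Res}_{x_2}(w,\partial w/\partial x_2)$, so the input requirements of \GenPlanarComponents hold. \cref{thm:saisotopygraph-plane} then yields a semi-algebraic real isotopy graph $\calH_w$ of $(\curve_2,\calP_2)$, together with the subset $\calV_q$. By generic coordinates, distinct points of $\scrK(\pi_1,\curve_2)$ have distinct abscissae, so $\calV_q$ is exactly $V_\app$.

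We then lift descriptions to $\R^n$. For every edge $e$, or every vertex not in $V_\app$, we conjoin its planar description with $\bmf=0$. By \cite[Cor.~2.4]{IslamPoteauxPrebet2023}, $\pi_2|_{\curve}$ is injective outside $\pi_2^{-1}(\app(\curve_2))$, so each lifted set is exactly the single real branch or point above it. Finally, \cite[Lem.~4.5]{IslamPoteauxPrebet2023} gives the opposite-edge pairing at each $v_0\in V_\app$, and \cite[Prop.~4.7]{IslamPoteauxPrebet2023} shows that the resolved graph $\calH'$ has the connectivity of $\curve_\R$. Grouping the lifted descriptions by connected component of $\calH'$ therefore gives, for each component $D_i$, descriptions whose union is $D_i\setminus\pi_2^{-1}(\app(\curve_2))$. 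The only thing to check is that removing $V_\app$ loses nothing else: every other vertex and every open edge lifts.

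\emph{Complexity.} \textsc{ApparentSingularities} costs at most $\softO{\delta^6+\delta^5\tau}$ by \cite[Prop.~5.1]{IslamPoteauxPrebet2023}. The same reference bounds the output $q_\app$ by magnitude $(\delta_q,\tau_q)\in(O(\delta^2),\softO{\delta(\tau+\delta)})$. The main term is \cref{thm:saisotopygraph-plane} applied with these magnitudes. Substituting gives $\delta_q^2\tau_q\in\softO{\delta^5(\tau+\delta)}$, which is absorbed. The remaining terms are
$(\delta^6+\delta^4\delta_\calP+\delta\delta_\calP^2)(\tau+\delta)+(\delta^5+\delta^3\delta_\calP+\delta_\calP^2)\tau_\calP+\delta_\calP^3$,
which is the claimed bound.

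\textsc{SANodeResolution} is a graph traversal: it handles four edges per apparent node and then merges components, for example with union-find. This is linear in $|\calV(\calH_w)|+|\calE(\calH_w)|\in O(\delta^4+\delta\delta_\calP)$, up to the bit size of the vertex labels, so it is negligible. Lifting descriptions only adjoins $\bmf$ symbolically and costs nothing asymptotically.

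The main obstacle is bounding the magnitude of $q_\app$ so that the $\delta_q^2\tau_q$ term stays dominated. If \cite{IslamPoteauxPrebet2023} only gives a bound on the total apparent-singularity polynomial, I would instead argue via a divisor of $\operatorname{Res}_{x_2}(w,\partial_{x_2}w)$ and use Mahler's bound \cite{Mahler62}.
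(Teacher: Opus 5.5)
Your proposal is correct and follows the same route as the paper. Correctness is inherited from \textsc{ApparentSingularities}, \cref{thm:saisotopygraph-plane} and \textsc{NodeResolution} through the lifting discussion, and the complexity is the $\softO{\delta^6+\delta^5\tau}$ cost of \textsc{ApparentSingularities} plus the bound obtained by substituting the magnitude $(\delta^2,\softO{\delta^2+\delta\tau})$ of $q_\app$ into \cref{thm:saisotopygraph-plane}, where $\delta_q^2\tau_q\in\softO{\delta^5(\tau+\delta)}$ is absorbed; your extra checks ($q_\app$ dividing the resultant, $\calV_q=V_\app$, and \textsc{SANodeResolution} being a negligible graph traversal) are details the paper leaves implicit.
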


\begin{proof}
By \cite[Prop. 5.1]{IslamPoteauxPrebet2023}, 
\textit{ApparentSingularities} uses $\softO{\delta^6 + \delta^5\tau}$
bit operations and it outputs has magnitude $\left( \delta^2,
\softO{\delta^2 + \delta\tau}\right)$. Substituting this 
in \cref{thm:saisotopygraph-plane} yields a bit cost for Step
\ref{partial:plane} in $\compsimpplanargraph$. 

Correctness follows from the ones of
\textsc{ApparentSingularities}  (see
\cite{IslamPoteauxPrebet2023}), and 
\PlanarComponents 
(\cref{thm:saisotopygraph-plane}) and the
above discussion combined with the correctness of 
\textsc{NodeResolution}.
\end{proof}






\subsection{Algorithm description and correctness}

\noindent
Our main algorithm calls twice
\PartialComponents after applying two distinct linear changes of
coordinates, 
so that the semi-algebraic descriptions for the two
different systems of coordinates cover those points we were missing. 
For
$\bmA\in \GL_n(\C)$, we denote by $\curve_{2, \bmA}$ the Zariski
closure of $\pi_2(\curve^\bmA)$. 
%

\begin{theorem}\label{Twoparametrization}
Assume that the pair $(\mathcal{C},\mathcal{P})$ satisfies
condition~\emph{(H)}.
\noindent
There exists a non-empty Zariski open set
$\mathcal{A} \subset GL_n(\C)$
such that, for all $\bmA \in \mathcal{A}$, 
$(\curve^\bmA, \calP^\bmA)$ satisfies $(H)$
and the sets $\pi_2^{-1}\left(\app(\calC_2)  \right) $ and 
$\pi_2^{-1}\left(\app(\curve_{2,\bmA})\right))^{\bmA^{-1}} $ 
have an empty intersection.

\end{theorem}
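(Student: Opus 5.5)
Since $(H)$ holds for $\bmA$ in a non-empty Zariski open subset of $\GL_n(\C)$ \cite[Prop.~2.5]{IslamPoteauxPrebet2023}, and $\GL_n(\C)$ is irreducible, it suffices to show that the disjointness condition also holds on a non-empty Zariski open set; intersecting the two gives $\mathcal{A}$. Set $F := \pi_2^{-1}(\app(\calC_2))\cap\curve$. Under $(H)$, $\app(\calC_2)$ is finite, and $\pi_2|_\curve$ is finite. Hence $F$ is a finite set of regular points of $\curve$: by \cite[Cor.~2.4]{IslamPoteauxPrebet2023}, it consists of pairs of points $x\neq x'$ of $\reg(\curve)$ with $\pi_2(x)=\pi_2(x')$. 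We may therefore reduce to a pointwise statement. For each fixed $p\in F$, we show that the set $\mathscr{B}_p$ of matrices $\bmA$ for which $p\in \bigl(\pi_2^{-1}(\app(\curve_{2,\bmA}))\bigr)^{\bmA^{-1}}$ is contained in a proper Zariski closed subset of $\GL_n(\C)$. Since $F$ is finite, the union of these closed sets is still proper.

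For fixed $p$, the condition $p\in(\pi_2^{-1}(\app(\curve_{2,\bmA})))^{\bmA^{-1}}$ means the following. After the change of coordinates, the point $\bmA^{-1}p$ of $\curve^\bmA$ projects to an apparent singularity. By \cite[Cor.~2.4]{IslamPoteauxPrebet2023} applied to $\curve^\bmA$ (valid when $(H)$ holds), this means there exists $y\in\curve$, $y\neq p$, such that the first two rows of $\bmA^{-1}$, call them $\ell_1,\ell_2$, satisfy $\ell_1(y-p)=\ell_2(y-p)=0$. In other words, the $(n-2)$-dimensional linear space $K_\bmA=\ker(\ell_1,\ell_2)$ meets the secant cone
\[
S_p=\overline{\{y-p \mid y\in\curve\setminus\{p\}\}}.
\]
This cone has dimension at most $2$, since it is the cone over a curve. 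We then do a dimension count in the Grassmannian $G(n-2,n)$, which has dimension $2(n-2)$. Subspaces of codimension $2$ meeting a fixed cone of dimension $\le 2$ (projectively, a curve in $\mathbb{P}^{n-1}$) form a subvariety of codimension at least $1$. Indeed, the incidence variety $\{(L,[v]) : v\in L,\,[v]\in\mathbb{P}(S_p)\}$ has dimension at most $1+\dim G(n-3,n-1)=1+2(n-3)=2n-5<2n-4$. Its projection to the Grassmannian is therefore a proper closed subset. Pulling back under the dominant morphism $\bmA\mapsto\ker(\ell_1,\ell_2)$ gives a proper Zariski closed $\mathscr{B}'_p\supseteq\mathscr{B}_p$ in $\GL_n(\C)$. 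One technical point: in the closure $S_p$ we also pick up tangent directions at $p$. This only enlarges the bad set, which is harmless. It is even consistent with the definition, since singularities of the projection also arise from vertical tangents, but those lie in $\scrK$ and not necessarily in $\app$.

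The main obstacle is to make the characterization of apparent singularities rigorous uniformly in $\bmA$. Corollary~2.4 of \cite{IslamPoteauxPrebet2023} is stated under $(H)$, so the argument must first restrict to the open set where $(H)$ holds for $(\curve^\bmA,\calP^\bmA)$. There we must also check that $\app(\curve_{2,\bmA})$ is exactly the set of non-injectivity points, with no contribution from singular points of $\curve$; this holds because $p\in\reg(\curve)$ and $\app$ excludes $\pi_2(\sing)$. A further subtlety is that $\bmA$ ranges over an open set, so constructibility of the bad locus should be handled via Chevalley's theorem before taking closures. I would state the bad locus as the closure of a constructible set of dimension $<\dim\GL_n$, which is enough.
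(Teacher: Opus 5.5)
Your argument is correct. Its outer structure is the paper's:
\begin{itemize}
\item reduce to the finitely many points $p$ of $\pi_2^{-1}(\app(\calC_2))\cap\curve$;
\item show that for each such $p$ a generic $\bmA$ leaves no other point of $\curve$ in the fiber of the new projection through $p$;
\item apply \cite[Cor.~2.4]{IslamPoteauxPrebet2023} on the open set where $(H)$ holds;
\item intersect finitely many non-empty open subsets of the irreducible $\GL_n(\C)$.
\end{itemize}

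The difference lies in how you prove the pointwise statement (the paper's Lemma~\ref{fixnode}). The paper writes an explicit incidence system $S_i(\zeta,\Delta)$ in $u,t,\bm{x},\frakc_1,\frakc_2$. There, $u\Delta-1=0$ and $t(x_i-\zeta_i)-1=0$ force $\bm{x}\in\reg(\curve)\setminus\{\zeta\}$. The Jacobian criterion bounds its dimension by $2n-1$, below the $2n$ of the coefficient space, so elimination gives a non-zero polynomial $\Psi_{i,\zeta,\Delta}$ in the coefficients of the two linear forms. Because this system only sees regular points, the points of $\sing(\curve)$ must be excluded separately via the products $\Phi_{z,\zeta}$; this is where $\zeta\in\reg(\curve)$ is used.

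You work coordinate-free instead. Under $(H)$, $p$ can only be bad if $\mathbb{P}(K_{\bmA})$ meets the curve of directions $[y-p]$ in $\mathbb{P}^{n-1}$, and the Grassmannian incidence count $2n-5<2n-4$ shows that a generic codimension-two subspace misses it. This handles regular points, singular points and the limiting tangent direction of $\curve$ in one stroke, and never uses $p\in\reg(\curve)$.

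What each route buys: the paper's elimination gives an explicit certificate whose degree is controlled by B\'ezout-type bounds. That is what lies behind the introduction's remark that $\mathscr{Z}_{\bmA_1}$ should have degree $d^{O(n)}$, which matters for the probabilistic choice of $\bmA_2$. Your route is shorter and more conceptual. Degree control could likely be recovered from it through the Chow form of the curve of directions, but you do not do this.

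Minor points:
\begin{itemize}
\item As written, $S_p$ is the closure of a translate of $\curve$, not a cone. You mean the closure of $\bigcup_y \C(y-p)$, or directly its image in $\mathbb{P}^{n-1}$.
\item Your aside on tangents is off. The tangent direction in the closure corresponds to $T_p\curve\subset K_{\bmA}$, i.e.\ the projection collapsing the tangent line at $p$. That produces a singular image point which, by the definition $\app=\sing(\curve_2)\setminus\pi_2(\sing(\curve))$, would lie in $\app$, not outside it and not merely in $\scrK$. It is excluded by $(H)$, and in any case by your enlarged bad set.
\item Chevalley's theorem is not needed, since you already exhibit a proper closed set containing the bad locus.
\end{itemize}
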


The proof of \cref{Twoparametrization} is postponed to
\cref{sec:twoparams}.
It implies that, 
by choosing a second parametrization of $\mathcal{C}^\bmA$, for 
$\bmA\in \GL_n(\Q)$, we obtain, 
for each connected
  component $D_i$ of $\curve_\R$, semi-algebraic descriptions
  $\Xi_{i,1}, \ldots, \Xi_{i, s_i}$ such that their solution
  set coincides with $D_i\setminus \pi_2^{-1}(\app(\curve_2))^{\bmA^{-1}}$.
Since the failure sets are disjoint, combining the two descriptions
yields a semi-algebraic description of each connected component of
$\mathcal{C}_{\mathbb{R}}$. We explain now how to do that.

\noindent
Let $\mathfrak{T}$ encode a finite set
$\mathcal{T}\subset \reg(\mathcal{C})$ intersecting every connected
component of $\mathcal{C}_{\mathbb{R}}$.
Let $\mathcal{H}_1$ and $\mathcal{H}_2$ be the outputs of
\PartialComponents applied to
$(\curveparam,\mathfrak{T})$ and $(\curveparam^{\bmA},\mathfrak{T}^{\bmA})$,
respectively, where $\bmA\in \GL_n(\Q)$ satisfies 
the assumptions of Theorem~\ref{Twoparametrization}.
The procedure \ConnectParams takes as input
$\mathfrak{T}$, $\mathcal{H}_1$, and $\mathcal{H}_2$, and outputs a
semi-algebraic description of each connected component of
$\mathcal{C}_{\mathbb{R}}$.
It works as follows.
Let $\Theta_{i,1},\dots,\Theta_{i,t_i}$ be the semi-algebraic descriptions obtained
from $\mathcal{H}_1$ describing
$D_i\setminus \pi_2^{-1}(\app(\mathcal{C}_2))$.
If $D_i\setminus \pi_2^{-1}(\app(\mathcal{C}_2))$ reduces to a single singular point, 
then $t_i=1$ and $D_i$ is described by $\Theta_{i,1}$.
Otherwise, there exists $\zeta\in\mathcal{T}$ and
$\alpha\in\{1,\dots,t_i\}$ such that $\zeta\in\Theta_{i,\alpha}$.
We then select the semi-algebraic descriptions
$\Xi_{j,1},\dots,\Xi_{j,s_j}$ obtained from $\mathcal{H}_2$ whose descriptions
correspond to pieces of the same connected component of
$\mathcal{C}_{\mathbb{R}}$ and for which there exists
$\beta\in\{1,\dots,s_j\}$ such that $\zeta\in\Xi_{j,\beta}$.
The semi-algebraic description of $D_i$ is given by
$D_i \;=\;
\Theta_{i,1} \cup \cdots \cup \Theta_{i,t_i}
\cup \Xi_{j,1} \cup \cdots \cup \Xi_{j,s_j}$.
\begin{proposition}\label{lem:ConnectParametrizations}
Algorithm \ConnectParams is correct and performs
$\compconnectparametrizations$ bit operations.
Here, $\delta$ denotes a bound on the degrees of $\curveparam$ and $\curveparam^\bmA$,
and $\delta_{\mathcal{T}}$ for that of $\mathfrak{T}$.
\end{proposition}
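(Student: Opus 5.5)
The argument has two parts: correctness, which rests on \cref{Twoparametrization} (the two failure sets are disjoint) together with the guarantee that each piece returned by \PartialComponents is assigned to the right connected component; and complexity, which is a counting argument on the sizes of $\mathcal{H}_1$ and $\mathcal{H}_2$ from \cref{thm:saisotopygraph-plane}.

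\emph{Correctness.} Fix a connected component $D_i$ of $\curve_\R$. By \cref{lemma:SAS3}, the union of the $\Theta_{i,\cdot}$ equals $D_i\setminus \pi_2^{-1}(\app(\curve_2))$. Likewise, the union of the $\Xi_{j,\cdot}$ from $\mathcal{H}_2$ equals $D'\setminus (\pi_2^{-1}(\app(\curve_{2,\bmA})))^{\bmA^{-1}}$ for some connected component $D'$, once the descriptions are pulled back through $\bmA$.

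Suppose first that $D_i\setminus\pi_2^{-1}(\app(\curve_2))$ is not a single point. Since $\mathcal{T}$ meets every component and lies in $\reg(\curve)$, I will pick $\zeta\in\mathcal{T}\cap D_i$. I must check that $\zeta$ avoids the first failure set. Under (H), points of $\calP=\mathcal{T}$ do not project onto apparent singularities, because the graph of \PartialComponents is built with $\mathcal{T}$ among its vertices. The same holds for the second chart, since $(\curve^\bmA,\mathcal{T}^\bmA)$ satisfies (H) by \cref{Twoparametrization}. Hence $\zeta$ lies in some $\Theta_{i,\alpha}$ and in some $\Xi_{j,\beta}$. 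The component $D'$ containing $\zeta$ is then $D_i$, so the selected $\Xi_{j,\cdot}$ describe exactly $D_i\setminus$ (second failure set). Because the two failure sets are disjoint, the union of all selected descriptions is $D_i$.

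The degenerate case is when $D_i$ is an isolated point. I would argue that such a point is a singular point of $\curve$. Singular points are not apparent singularities, so the point is not removed and $\Theta_{i,1}$ already describes it. The delicate point here is justifying that the matching is well defined via $\zeta$: I need the membership tests $\zeta\in\Theta$ to be read off combinatorially from $\calV_\calP$, rather than by evaluating signs.

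\emph{Complexity.} I expect this to be routine. By \cref{thm:saisotopygraph-plane}, each graph has $O(\delta^4+\delta\delta_\mathcal{T})$ vertices and edges. The graphs output by \textsc{SANodeResolution} already group edges by component, and each point of $\mathcal{T}$ is tagged as a vertex in $\calV_\calP$ of both graphs. Producing the matching therefore amounts to one scan of $\calV_{\calP}$ in $\mathcal{H}_1$ and in $\mathcal{H}_2$, recording for each $\zeta$ its component labels in both graphs, followed by merging the label lists. Each step touches every vertex and edge a constant number of times, which gives $O(\delta^4+\delta\delta_\mathcal{T})$ operations. For the count to stay at this size rather than growing with the bit size of the descriptions, the output should be by reference: lists of pointers to existing $\sigma_e$, not rewritten formulas.
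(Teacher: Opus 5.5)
Your proposal is correct and follows the same route as the paper. Correctness comes from \PartialComponents together with the disjointness of the two failure sets given by \cref{Twoparametrization}, and the cost comes from the $O(\delta^4+\delta\delta_{\mathcal{T}})$ bound on the sizes of $\mathcal{H}_1$ and $\mathcal{H}_2$ in \cref{thm:saisotopygraph-plane}; you spell out the isolated-singular-point case and the choice of $\zeta$ more carefully than the paper does.

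One small correction: you say $\zeta$ avoids both failure sets because $\mathcal{T}$ appears among the vertices, but that cannot be the reason, since apparent singularities are vertices as well. The correct reason is that (H) for $(\curve,\mathcal{T})$ and for $(\curve^{\bmA},\mathcal{T}^{\bmA})$ makes $\pi_2$ injective at the query points, so by \cite[Cor.~2.4]{IslamPoteauxPrebet2023} their projections are not apparent singularities.
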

\begin{proof}
Correctness follows from the correctness of
\PartialComponents and from Theorem~\ref{Twoparametrization},
which ensures the existence of $\zeta\in\cap D_i\setminus \pi^{-1}_2(\app(\calC_2))$ and the disjointness
of the failure sets, so that the union of the selected descriptions yields $D_i$.
Since the bulk of \ConnectParams consists in going
through the vertices and edges of $\mathcal{H}_1$ and
$\mathcal{H}_2$ (whose cardinalities lie in $O\left(\delta^4+\delta\delta_\mathcal{T} \right)$, 
see Theorem~\ref{thm:saisotopygraph-plane}), the number of bit operations 
is in $O\left(\delta^4+\delta\delta_\mathcal{T}\right)$.
\end{proof}
By \cite[Prop.~2.5]{IslamPoteauxPrebet2023}, there exists a non-empty
Zariski open subset
$\mathcal{U} \subset \mathrm{GL}_n(\mathbb{C})$
such that, for all $\bmA \in \mathcal{U}$,
$(\calC^\bmA,\querypoints^\bmA)$ satisfies~\emph{(H)}.
\begin{algorithm}[!htb]
\caption{\ComputeRealComponents}
\label{alg:compute-real-components}
\begin{algorithmic}[1]
\Require
$\bmf\in\mathbb{Q}[\bmx]$, $0 < \epsilon < 1$, 
$\bmA_1\in \mathcal{U}$ and $\bmA_2\in \mathcal{A}$
\Ensure
semi-algebraic
descriptions of the components of $\mathcal{C}_{\mathbb{R}}$, 

\State\label{connected:P} 
$\mathfrak{P} \leftarrow \RegularPoints(\bmf, \epsilon)$
\State\label{connected:R} 
$\curveparam \leftarrow \textsc{OneDimParam}(\bmf,
\epsilon)$ 
\State
Compute 
$(\mathcal{C}^{\bmA_1},\mathcal{P}^{\bmA_1})$

\State\label{connected:SAIG3}
$\mathcal{H}'_w \leftarrow
\PartialComponents(\mathcal R^{\bmA_1},\calP^{\bmA_1})$



\State
Compute
$(\mathcal{C}^{\bmA_2},\mathcal{P}^{\bmA_2})$

\State\label{connected:SAIG3bis}
$\mathcal{H}'_{w'} \leftarrow
\PartialComponents({\mathcal R}^{\bmA_2},\calP^{\bmA_2})$

\State\label{connected:connectparam} \Return
$(D_i)_{i=1}^r \leftarrow
\ConnectParams(\mathcal{H}'_w,\mathcal{H}'_{w'})$
\end{algorithmic}
\end{algorithm}

\begin{proof}[Proof of correctness Theorem \ref{thm:main}]
By~\cite[Prop.~2.5]{IslamPoteauxPrebet2023}, 
choosing $\bmA_1\in\mathcal{U}=GL_n(\Q)\setminus\mathscr{Z}$
ensures that the pair $(\mathcal{C}^{\bmA_1},\mathcal{P}^{\bmA_1})$
satisfies hypothesis~\textup{(H)}.
Then, using the notation in Theorem~\ref{Twoparametrization}, choosing
$\bmA_2\in\mathcal{A}=GL_n(\Q)\setminus\mathscr{Z}_{\bmA_1}$
guarantees that the two parametrizations have disjoint failure sets 
and that $(\mathcal{C}^{\bmA_2},\mathcal{P}^{\bmA_2})$
satisfies~\textup{(H)}.
Correctness of \PartialComponents and
\ConnectParams follows from Lemma \ref{lemma:SAS3} and 
Lemma \ref{lem:ConnectParametrizations}.
\end{proof}

\subsection{Subroutines}

\begin{lemma}\label{lemma:P}
  There exists an algorithm $\RegularPoints$ 
  whi\-ch, on input $\bmf$ and $0<\epsilon<1$
  computes 
  a zero-dimensional parametri\-zation for $\calP$ of magnitude
  $\left( O\left( n^3 d^n \right),\heightP \right) $ using $\totalcompP$
 bit operations with probability of
  success at least $1-\epsilon$.  
\end{lemma}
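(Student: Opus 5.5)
Since $\calP$ must meet every connected component of $\curve_\R$ and lie in $\reg(\curve)$ (needed for \ConnectParams), $\RegularPoints$ will compute critical points of a generic distance-like or linear function restricted to $\curve$, minus the singular points. Concretely, I would pick a random point $\bma\in\Q^n$ and consider the critical points of the squared Euclidean distance $\|\bmx-\bma\|^2$ on $\reg(\curve)$. Each bounded connected component of $\curve_\R$ is compact, and each unbounded one meets a large sphere, so one also takes intersections with a generic hyperplane. Either way, every component of $\curve_\R$ meets the resulting finite set. These points are encoded by the polynomial system
\[
\bmf=0,\qquad \text{all } n\times n \text{ minors of } \begin{pmatrix} \mathrm{Jac}(\bmf)\\ (\bmx-\bma)^T\end{pmatrix}=0.
\]
Since there are $n-1$ equations in $n$ variables, this is a single determinant $g$ of degree $O(nd)$.

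To guarantee regularity I would saturate by the singular locus. The singular locus is defined by the $(n-1)$-minors of $\mathrm{Jac}(\bmf)$ because the ideal is radical. Alternatively, I would use a Lagrange formulation $\bmf=0$, $\sum_i \ell_i\nabla f_i = \bmx-\bma$, with a generic normalisation of the multipliers. Its solution set projects to the critical points of $\reg(\curve)$ and, by genericity of $\bma$ (an algebraic Sard argument), is finite and avoids singular points except those forced by components reduced to points. Those contribute singular points, which are handled separately, since \ConnectParams treats the isolated singular point case via $t_i=1$. So $\calP$ is the regular part of this system.

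For the complexity I would invoke a probabilistic geometric resolution algorithm, of Giusti--Lecerf--Salvy or Schost type with a Kronecker-type representation, for the system $(f_1,\dots,f_{n-1},g)$. The intrinsic degree is bounded by the Bézout bound $d^{n-1}\cdot O(nd)$. With multi-homogeneous Lagrange bounds one instead gets $O(n^3 d^n)$, matching the claimed degree $O(n^3d^n)$. The cost is governed by the evaluation complexity, roughly $\binom{n+d}{n}$ for dense inputs, times a polynomial in $n$ and the degree squared, times the height. I would take the height bound $\softO{n^6d^{n+1}+n^8d^n(h+\log(1/\epsilon))}$ from arithmetic Bézout-type estimates for the parametrization, as in Schost's or Safey El Din--Schost's height bounds for critical point systems. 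The $\log(1/\epsilon)$ term comes from choosing random primes and random $\bma$, and from the lifting points of bit size $O(\log(\mathfrak{D}/\epsilon))$. Multiplying the degree squared ($n^6d^{2n}$), the evaluation cost, and the height ($n^8 d^n$-ish) is not quite consistent. Rather, the Hensel lifting cost is $\softO{L\,\cdot\text{poly}(n)\,\delta^2\,\cdot\text{height}/\delta}$, and this gives the stated $\softO{n^{14}d^{2n+2}\binom{n+d}{n}(h+\log(1/\epsilon))}$, with the extra $n$ in \totalcompP coming from the $n$ coordinate polynomials.

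The main obstacle I expect is controlling the probability and the degree simultaneously. I must show that the degeneracy locus for $\bma$ (and for the random linear form and primes) is contained in a hypersurface of degree $d^{O(n)}$, so that Schwartz--Zippel with entries of bit size $\softO{\log(1/\epsilon)+n\log d}$ gives success at least $1-\epsilon$. I also need the height bound to survive the removal of singular solutions, which I would handle by a gcd with the resultant encoding $\sing(\curve)$, at negligible cost by \cite{Mahler62}-type bounds.
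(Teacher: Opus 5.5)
Your sampling step has a genuine gap. You take the regular part of the critical locus of $\|\bmx-\bma\|^2$ on $\curve$, plus a generic hyperplane section, and claim it meets every connected component of $\curve_\R$ that contains regular points. Compactness only guarantees that the distance attains its extrema on a component. It may attain them \emph{only at singular points}, with the distance strictly monotone along every regular arc. Once singular solutions are discarded, such a component gets no point of $\calP$.

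Concretely, take $n=2$ and $f=y^2-\frac{1}{16}(1-x^2)^3$. Its real locus is one compact loop, made of the graphs $y=\pm\frac14(1-x^2)^{3/2}$, $|x|<1$, glued at the cusps $(\pm1,0)$; these cusps are the only singular points. Take $\bma=(-R,\eta)$ with $R\ge 2$ and $|\eta|\le 1$. Along either arc, the derivative of the squared distance is $2(x+R)+2(y-\eta)y'$, with $|y'|\le 3/8$ and $|y-\eta|\le 5/4$, so it is positive. Hence there is no regular critical point, and a generic line need not meet this small loop.

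This bad set of $\bma$ has non-empty interior in $\R^2$, so no Zariski-genericity or Schwartz--Zippel argument excludes it. The component is not reduced to a point, so the $t_i=1$ case of \ConnectParams does not rescue it. Keeping the singular critical points instead would contradict $\calP\subset\reg(\curve)$.

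The missing idea is to make the regular locus closed and smooth before sampling, which is what the paper does. Since $\langle\bmf\rangle$ is radical, every regular point has some non-vanishing maximal minor $\Delta$ of $J(\bmf)$. Then $\bmg=(\bmf,T\Delta-1)$ defines a smooth curve in $\C^{n+1}$ with radical ideal. Its real connected components correspond to those of $\curve_\R\cap\{\Delta\neq 0\}$, because the singular points are pushed to infinity in $T$. On such a set, apply a generic $\bmM\in\GL_{n+1}(\Q)$; the critical points of the projection on the first coordinate, together with a generic fiber $x_1=\eta$, meet every real connected component by the properness argument of \cite{SaSc03}. All these sample points are regular, and the union over the minors covers every component of $\curve_\R$ that has a regular point.

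Your quantitative part is also only asserted. The paper works with the tri-graded Lagrange system in $(\bmx,z,\bm{\lambda})$. It gets degree $O(n^2d^n)$ per minor, hence $O(n^3d^n)$ after the union over the minors, from multi-homogeneous degree estimates on that system; the height bound comes from the matching height estimates. The cost comes from the symbolic homotopy of \cite{SaSc18} with the straight-line program of \cite{EGGSS2025}. These bounds are specific to that system and cannot simply be carried over to your distance system. Your own Hensel-lifting estimate does not reproduce $\totalcompP$.
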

\begin{proof}
  Since $\langle \bmf\rangle $ is radical, at all points of
  $\reg(\curve)$, one of the $\left( n-1 \right)$-minors $\Delta_1, \ldots,
  \Delta_{n-1}$ of the Jacobian matrix $J(\bmf)$ associated to $ \bmf
  $ is non-zero.  Let $\Delta \in \mathscr{D} = \{\Delta_1, \ldots,
  \Delta_{n-1}\} $ and $T$ be a new variable. We compute one
  point in each connected component of the real algebraic set defined
  by $\bmf = T\Delta-1 = 0$ and iterate this when $\Delta$
  ranges over $\mathscr{D}$. Let $\bmg = (\bmf, \Delta)$. Note that
  the Jacobian matrix $J(\bmg)$ is full rank at
  any point of $V(\bmg)$. Hence, $\bmg$ generates a radical ideal and
  defines a smooth algebraic set. We use the variant of the
  probabilistic algorithm of \cite{SaSc03} presented in
  \cite{EGGSS2025}. It picks a sufficiently generic
  matrix $\bmM\in \GL_{n+1}(\Q)$ and solve the zero-dimensional
  systems $(1)$ $\bmg^\bmM = 0,\bm{\lambda} J( \bmg^\bmM ) = (1, 0,
  \ldots, 0)$ (where $\bm{\lambda} = (\lambda_1, \ldots, \lambda_n)$
  is a sequence of Lagrange multipliers) and $(2)$ $\bmg^\bmM = x_1 -
  \eta = 0$ for $\eta$ chosen sufficiently generic also. Both generate
  a radical ideal.  Using \cite{EGS23} as in \cite{EGGSS2025}), to
  ensure a probability of success greater than $\frac{1}{\epsilon}$,
  the bit sizes of $\eta$ and the entries of $\bmM$ are taken in
  $\softO{n+\log\left( \frac{1}{\epsilon} \right) }$.\\
  By the chain rule formula, solving $(1)$ is equivalent to
  solve $\bmg = 0$ and $\bm{\lambda} J(\bmg) = \bmm$ where $\bmm$ is
  the first column of $\bmM^{-1}$. As in \cite{EGGSS2025}, we use the
  symbolic homotopy algorithm of \cite{SaSc18}. Partitioning the
  variables as $\bmx, z, \bm{\lambda}$, one associates to each
  equation of $(1)$ their partial degrees in $\bmx, z, \bm{\lambda}$.
  It results that there are $(n-1)$ (resp. $n$) polynomials
  of tri-degree $\left( d, 0, 0 \right)$ (resp. $\left( n(d-1), 1, 1
  \right) $), $1$ of tri-degree $\left( n(d-1), 1, 0 \right)  $ and $1$
  of tri-degree $\left(n( d-1), 0, 1 \right)$. The bit size of their
  coefficients is bounded by $\tau \in h +
n(n+\log(\frac{1}{\epsilon}))$. Technical but easy
  calculations of multi-homogeneous B\'ezout degree and height bounds
  \cite[Prop. 3 \& 4]{SaSc18} show that this set has degree and
  height in $O\left( n^2 d^{n} \right) $ and $\heightP$  (see the appendix). 
Also, by
\cite[Lemma 5]{EGGSS2025}, (1) can be evaluated by a straight-line
program of length in $O\left( n^2 + n\binom{n+d}{d} \right)$ with
integer coefficients bounded by  $b\in \softO{h+n(
n+\log(\frac{1}{\epsilon}) )}$.  Let $\tau \in \softO{h+n(
n+\log(\frac{1}{\epsilon}) ) +d}$.  Applying \cite[Thm. 1]{SaSc18} as
in the proof of \cite[Thm.  1]{EGGSS2025}, one can solve $(1)$ using
$\compP$ bit operations.  \\ 
  Degree and height
bounds for $(2)$ are from B\'ezout-type bounds (still applying
\cite{SaSc18}) and smaller than the ones for $(1)$.
  Iterating this
  when $\Delta$ ranges over $\mathscr{D}$ multiplies by $n$ the 
  above estimates. Taking the union does not increase the complexity
  (see \cite[Lem.~J.3]{SaSc17}).
\end{proof}

\begin{lemma}\label{lemma:R}
  There exists an algorithm \textsc{OneDimParam} which, on input
  $\bmf$ and $0 <\varepsilon<1$ computes a one-dimensional
  parametrization of magnitude bounded by $\left( d^{n-1},
  \heightonedimparam \right)$ for $\curve$ using $\componedimparam$
  bit operations with probability $>1-\epsilon$. 
\end{lemma}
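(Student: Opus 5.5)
We compute the parametrization of $\curve$ by reducing it to solving a zero-dimensional system in two extra ``parameter'' variables, in the same way $\RegularPoints$ was handled in \cref{lemma:P}. We pick random rational linear forms $\lambda,\mu$. Then $\langle \bmf\rangle$ is radical of dimension $1$ and $\deg \curve \le d^{n-1}$ by B\'ezout, so for generic choices the projection $(\lambda,\mu)$ restricted to $\curve$ is birational onto a plane curve of degree $\deg\curve$. We then use the standard specialization/lifting strategy. Specializing $\lambda = \theta_0$ for a random rational $\theta_0$ gives a zero-dimensional radical system $\bmf=0$, $\lambda-\theta_0=0$ of degree at most $d^{n-1}$. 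We solve it with the symbolic homotopy algorithm of \cite{SaSc18}, as in the proof of \cref{lemma:P}, obtaining a zero-dimensional parametrization in the variable $u$ associated with $\mu$. We then lift it to a parametrization with coefficients in $\Q[t]$, with $t$ standing for $\lambda$, by Newton--Hensel lifting in $t-\theta_0$. Since $\bmf$ is radical and the specialization is generic, the Jacobian is invertible at the fibre. The $t$-adic precision needed is governed by the degree of $w$ and of the $v_i$ in $t$, which is bounded by $d^{n-1}$. Finally, we reconstruct $w,v_1,\dots,v_n\in\Q[t,u]$. The $v_i$ are taken with respect to $\partial w/\partial u$, which keeps their degrees and heights small.

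\emph{Height bound.} We apply arithmetic B\'ezout bounds for the curve, through the height of the Chow form, e.g.\ \cite[Prop. 3 \& 4]{SaSc18}. These give a height of $\softO{d^{n-1}(nh+n)}$ for $w$ and the $v_i$ written in the shape $v_i/(\partial w/\partial u)$. This matches the stated $\heightonedimparam$. The extra bit size coming from the random forms and from $\theta_0$ is in $\softO{n+\log(1/\epsilon)}$ and is absorbed in the $\softO{\cdot}$.

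\emph{Cost.} Two steps dominate. The first is solving the specialized system: the straight-line program for $\bmf$ has length $O(n\binom{n+d}{n})$, the degree is at most $d^{n-1}$, and the output height is as above. Feeding these into \cite[Thm. 1]{SaSc18} gives a cost of $\softO{\binom{n+d}{n}n^{O(1)}d^{2n}(h+\log\frac1\epsilon)}$. The second is the lifting and reconstruction phase, which we implement multimodularly. We choose enough random primes to recover coefficients of height $\softO{d^{n-1}(nh+n)}$. Each modular lift costs about $\softO{n^{O(1)}\binom{n+d}{n}d^{2n}}$ operations, since the bivariate output has size $d^{2(n-1)}$ and the Newton steps cost one evaluation of the straight-line program per output coefficient. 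Multiplying by the number of primes, $\softO{d^{n-1}(h+\log\frac1\epsilon)}$, and tracking the power of $n$ gives $\componedimparam$.

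\emph{Probability.} The failure events are bad choices of $\lambda,\mu,\theta_0$, the homotopy randomness, and unlucky primes. Each is contained in a hypersurface of degree $d^{O(n)}$, or corresponds to a prime dividing a nonzero integer of height bounded as above. Taking the random entries of bit size $\softO{n\log d+\log\frac1\epsilon}$, Schwartz--Zippel and the prime density bound give total failure probability below $\epsilon$.

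\emph{Main obstacle.} I expect the hard part to be pinning down the exact exponent $d^{3n}$ with the $n^4$ factor. This requires careful accounting of the lifting precision in both $t$ and $u$, and of the number of primes. If the direct lifting overshoots, my fallback is to compute the specialized parametrization modulo primes at $d^{n-1}+1$ values of $t$ and interpolate. The specialized solves would each go through \cite{SaSc18}, and the rational reconstruction would be controlled by the height bound.
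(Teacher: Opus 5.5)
Your proposal is correct in substance and uses the same Schost-style skeleton as the paper: a generic hyperplane section, Newton--Hensel lifting in the parameter $t$, and recovery of rational coefficients from an a priori height bound. It organizes the arithmetic differently. The paper works modulo a single prime throughout. It solves the section over $\mathbb{F}_p$ with \cite[Prop.~5]{SaSc18}, lifts in $t$ with \cite[Lemma~3]{GiLeSa01} at cost $\softO{n(\binom{n+d}{n}+n^3)d^{2n-1}}$, and only then lifts $p$-adically (again \cite[Lemma~3]{GiLeSa01}) up to the precision $\softO{d^{n-1}(nh+n)}$ given by \cite[Thm.~1]{DahanKadriSchost2018}. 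You instead solve the section over $\Q$ with \cite[Thm.~1]{SaSc18}, then run independent $t$-liftings modulo many primes, followed by CRT and rational reconstruction. Both routes cost essentially (one $t$-lift) $\times$ (output bit size), i.e.\ $\softO{n^2(\binom{n+d}{n}+n^3)d^{3n-2}(h+1)}$ up to the $\log\frac1\epsilon$ dependence. This lies in $\componedimparam$ because $n\le\binom{n+d}{n}$, so the ``main obstacle'' you flag closes with room to spare, and your rational fibre solve, of order $d^{2n}$, is lower order. The $p$-adic route needs only one modular solve, one modular $t$-lift and a single luckiness condition on the prime. Yours is more elementary and parallelizable, but it needs the rational fibre first and a luckiness argument for every prime. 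Two points to tighten. First, \cite[Props.~3 and~4]{SaSc18} bound the degree and height of the isolated solutions of square multi-homogeneous systems, so they do not by themselves bound the coefficients of the bivariate $w,v_i$. You need arithmetic B\'ezout for the curve plus the Chow-form-to-Kronecker-representation step, or directly \cite[Thm.~1]{DahanKadriSchost2018} as the paper does. Second, the output height does not depend on $\theta_0$ at all. The random forms $\lambda,\mu$, however, contribute roughly $\deg\curve\cdot\log\frac1\epsilon$ bits, which is not literally absorbed in $\softO{d^{n-1}(nh+n)}$; the statement itself shares this looseness.
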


\begin{proof}
This is folklore, with close results stated in various places
(see \cite[Chap. 15, Thm. 18]{SchostPhD} for the first result of that
kind up to our knowledge). We follow the strategy introduced in
\cite{SchostPhD}. The degree bound is immediate by the
B\'ezout bound. By \cite[Prop. 5]{SaSc18}, computing a
zero-dimensional parametrization in some prime field (of sufficiently 
large characteristic)
for $V(\bmf)\cap \mathcal{H}$ where
$\mathcal{H}$ is a generic hypersurface is done in
$\softO{\log(\frac{1}{\epsilon})nd^{2(n-1)}\left(\binom{n+d}{n}+d+n^2\right)}$
arithmetic operations. Lifting this parametrization 
to obtain a one-dimensional
parametrization in the considered prime field requires 
$\softO{n\left( \binom{n+d}{n} +n^3 \right) d^{2n-1}}$
\cite[Lemma 3]{GiLeSa01}. 
It remains to lift this one-dimensional parametrization over the
rational numbers, using the bound $\tau = \softO{d^{n-1}(nh+n)}$, 
on the bit size of the output, 
given in \cite[Thm.~1]{DahanKadriSchost2018}. The conclusion follows
using again \cite[Lem. 3]{GiLeSa01}. 
\end{proof}

\subsection{Proof of \cref{thm:main} (Complexity)}
  The cost of Steps \ref{connected:P} and \ref{connected:R} are given
  by \cref{lemma:P,lemma:R}. 
  Let $\overline{\tau}$ be the maximum bit size of the entries of 
  the matrix $\bmA_1$ and $\bmA_2$. 
  By \cite[Lem.~J.1]{SaSc17} and \cite[Lem.~J.7]{SaSc17},
the linear change of variables induced by $\bmA_1$ and $\bmA_2$ 
on the 
parametrizations $\queryparam$ and $\curveparam$ use 
$\widetilde{O}\!\bigl((n^2 d_{\mathcal P}+n^2\delta^2+n^3)(\tau_{\mathcal P}+n\overline{\tau})\bigr)$ bit operations, without increasing
their degrees and with coefficients of bit size bounded by
$\tau_{\mathcal P}+n\overline{\tau}$ and $\tau+n\overline{\tau}$.
The cost of Steps \ref{connected:SAIG3} and \ref{connected:SAIG3bis}
are given by \cref{lemma:SAS3}. 

\noindent
Putting together all the previous steps, the number of bit operations 
performed by the algorithm 
\ComputeRealComponents is in {\small \totalcomplspace}
For $n\geq 3$, this is simplified to 
{\small $$\simplifiedtotalcompspace.$$}
\section[\NoCaseChange{Proof of \cref{Twoparametrization}}]{\texorpdfstring{\NoCaseChange{Proof of \cref{Twoparametrization}}}{Proof of \cref{Twoparametrization}}}\label{sec:twoparams}

By \cite[Prop.~2.5]{IslamPoteauxPrebet2023}, there exists a non-empty
Zariski open subset
$\mathcal{U} \subset \mathrm{GL}_n(\mathbb{C})$
such that, for all $A \in \mathcal{U}$, $(\calC^A,\mathcal P^A)$ satisfies~\emph{(H)}.
Let $\bmA \bm{x}=(\bm{a}_1, \ldots, \bm{a}_n)$ 
and $\pi_{\bma_1, \bma_2}$ be the projection $\xi\to \bma_1(\xi),
\bma_2(\xi)$. 

Let $\mathcal{C} \subset \mathbb{C}^n$ be an algebraic curve, such
that its associated ideal is generated by 
$\bm{f}=(f_1,\ldots,f_{n-1}) \subset \mathbb{Q}[\bmx]$.
Let $\curve_{2, \bmA}$ be the Zariski closure of
$\pi_{\bma_1, \bma_2}(\curve)$ and $w_\bmA$ be its defining polynomial. 
Let 
  $\frakA = \left( \fraka_{i,j} \right) $ be a $n \times n$ matrix
  with unknown entries and $\fraka_i$ be the $i$-th entry of $\frakA
  \bmx$ and 
$\zeta = (\zeta_1, \ldots, \zeta_n) \in
\operatorname{reg}(\mathcal{C})$.
\begin{lemma}\label{fixnode}
There exists a non-empty Zariski open set
$\mathcal{A}_\zeta \subset GL_n(\mathbb{C})$
such that, for all $A \in \mathcal{A}_\zeta$,
$\pi_{\bm{a}_1,\bm{a}_2}(\zeta)$ is not a node of
$\mathcal{C}_{2,\bmA}$.
\end{lemma}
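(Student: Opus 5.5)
Fix the regular point $\zeta\in\reg(\curve)$. We show that ``$\pi_{\bma_1,\bma_2}(\zeta)$ is a node of $\curve_{2,\bmA}$'' only happens on a proper Zariski closed subset of $\GL_n(\C)$. The route is through the fibre: for a generic projection, $\pi_{\bma_1,\bma_2}(\zeta)$ can be singular on $\curve_{2,\bmA}$ only if some other point of $\curve$ projects onto it. Indeed, at a regular point $\zeta$ with tangent line $T_\zeta\curve$, the projection restricted to a neighbourhood of $\zeta$ is an immersion as soon as $\pi_{\bma_1,\bma_2}$ is injective on $T_\zeta\curve$. That is a nonempty Zariski open condition on the entries of $\frakA$, namely that $(\fraka_1,\fraka_2)$ does not vanish identically on a tangent vector $v$. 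So it suffices to ensure the fibre $\pi_{\bma_1,\bma_2}^{-1}(\pi_{\bma_1,\bma_2}(\zeta))\cap\curve$ is reduced to $\{\zeta\}$. In that case, near $\pi_{\bma_1,\bma_2}(\zeta)$ the curve $\curve_{2,\bmA}$ is the image of a single smooth immersed branch, hence smooth there and in particular not a node.

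To control the fibre, consider the incidence variety
\[
W=\{(\frakA,\xi)\in \GL_n(\C)\times(\curve\setminus\{\zeta\}) \mid \fraka_1(\xi-\zeta)=\fraka_2(\xi-\zeta)=0\}.
\]
For each fixed $\xi\neq\zeta$, the vector $\xi-\zeta$ is nonzero. The two conditions are then independent linear equations on the first two rows of $\frakA$, so the fibre of $W$ over $\xi$ has codimension $2$ in $\GL_n(\C)$. Since $\curve$ has dimension $1$, $\dim W\le n^2-1$. By the theorem on dimension of images, the Zariski closure of the projection of $W$ onto $\GL_n(\C)$ is a proper closed subset $\mathscr{B}_\zeta$. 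Points at infinity cause no trouble because we only need the image to avoid a dense open set. Any $\bmA\notin\mathscr{B}_\zeta$ gives a fibre over $\pi_{\bma_1,\bma_2}(\zeta)$ equal to $\{\zeta\}$.

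Set $\mathcal{A}_\zeta$ to be the intersection of the complement of $\mathscr{B}_\zeta$ with the open set where $\pi_{\bma_1,\bma_2}|_{T_\zeta\curve}$ is injective. This is a nonempty Zariski open set, as an intersection of two nonempty opens in the irreducible $\GL_n(\C)$.

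The main obstacle is the local step: passing from ``unique preimage plus immersion at $\zeta$'' to ``$\pi_{\bma_1,\bma_2}(\zeta)$ is a smooth point of the Zariski closure $\curve_{2,\bmA}$''. Here one must exclude extra branches of $\curve_{2,\bmA}$ through that point coming from points of $\curve$ whose projections accumulate there, such as branches escaping to infinity. I would handle this by also requiring properness of the projection on $\curve$ near the fibre, for instance by a generic Noether position condition on $(\fraka_1,\fraka_2)$, which again holds on a nonempty open set. I would check this against the genericity conditions $(H)$ of \cite{IslamPoteauxPrebet2023}, which contain such a properness assumption. With properness, every branch of $\curve_{2,\bmA}$ through the image point lifts to a branch of $\curve$ through a point of the fibre, that is through $\zeta$. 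This yields a single smooth branch and contradicts the point being a node.
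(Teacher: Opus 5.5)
Your proposal is correct and has the same skeleton as the paper's proof: force the fibre of $\pi_{\bma_1,\bma_2}$ over $\pi_{\bma_1,\bma_2}(\zeta)$ to be $\{\zeta\}$ by a dimension count on an incidence variety, then intersect with the genericity open set $\mathcal{U}$. The difference lies in how the dimension count is organised.

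The paper applies it only to regular points $z\neq\zeta$. It localises with $u\Delta(\bmx)=1$ and $t(x_i-\zeta_i)=1$, and checks via the Jacobian criterion that $S_i(\zeta,\Delta)$ has dimension $2n-1$ in the variables $(u,t,\bmx,\frakc_1,\frakc_2)$. Hence its projection to the $2n$-dimensional $(\frakc_1,\frakc_2)$-space is not dominant. The finitely many points of $\sing(\curve)$ are handled separately through the conditions $\Phi_{z,\zeta}\neq 0$.

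You instead fibre $W$ over $\curve\setminus\{\zeta\}$. Each fibre is cut out by two independent linear conditions on the first two rows of $\frakA$, so $\dim W\le n^2-1$. This needs no rank computation and treats regular and singular points uniformly. The paper's version has the merit of producing explicit eliminating polynomials $\Psi_{i,\zeta,\Delta}$ from a concrete system, which is the form one would use to bound the degree of $\mathscr{Z}_{\bmA_1}$; your $W$ could, however, be written just as explicitly.

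Your tangent-line condition is not needed to exclude a node, since a single branch, even a cuspidal one, is never a node. It does yield the stronger conclusion that $\pi_{\bma_1,\bma_2}(\zeta)$ is a smooth point of $\curve_{2,\bmA}$.

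Finally, your properness requirement makes explicit what the paper delegates to $(H)$ through the choice $\bmA\in\mathcal{U}$. Closedness of the image together with a singleton fibre does not by itself exclude a second branch through $\pi_{\bma_1,\bma_2}(\zeta)$ coming from infinity. Finiteness of the projection restricted to $\curve$ is what rules this out.
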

\begin{proof}
By the Jacobian criterion \cite[Thm. 16.19]{Eisenbud1995}, 
the Jacobian matrix
$J_f$ of $\bm{f}$ has rank $n-1$ at any point of $\reg(\curve)$.
Let $\Delta$ be a $(n-1)\times(n-1)$ minor $\Delta$ of $J_f$. 
\noindent
We consider the system: 
{\footnotesize 
\[
  S_i(\zeta, \Delta)  \left\{
\begin{array}{l}
  u\cdot \Delta(\bm{x})-1=0,t\cdot(x_i-\zeta_i)-1=0,\\[2pt]
  f_1(\bm{x})=\cdots=f_{n-1}(\bm{x})=0,\\[2pt]
  \fraka_1(\bm{x}-\zeta)=0,
  \fraka_2(\bm{x}-\zeta)=0. 
\end{array}
\right.
\]}
where the variables $t$ and $u$ are new. Let 
$\frakc_1$ and $\frakc_2$ the sequence
of (unknown) coefficients of the linear forms $\fraka_1$ and
$\fraka_2$. 
%
%
We now consider the Jacobian matrix $J_{\bm{f}}$ of these $n+3$ equations with respect to
the variables ordered as $u, t, \bm{x}, \frakc_1, \frakc_2$. 
It has the block form
{\footnotesize 
\begin{equation*}
\begin{aligned}
&\left(
\begin{array}{c|c|c|c|c}
\Delta(x) & 0 & u\,\nabla\Delta(x) & 0 & 0\\ \hline
0 & x_i-\zeta_i & t & 0 & 0\\ \hline
0 & 0 & J_f(x) & 0 & 0\\ \hline
0 & 0 & {\frakc_1}^{\mathsf T} & (x-\zeta)^{\mathsf T} & 0\\ \hline
0 & 0 & {\frakc_2}^{\mathsf T} & 0 & (x-\zeta)^{\mathsf T}
\end{array}
\right)
\end{aligned}
\end{equation*}}
\noindent
where $\nabla(\Delta(x))$ denotes the gradient of the polynomial $\Delta$ with
respect to the variables $\bm{x}$.
A routine check shows that this Jacobian matrix has rank $n+3$. 
By the Jacobian criterion \cite[Thm.~16.19]{Eisenbud1995}, we deduce
that $V(S_i(\zeta, \Delta))$ is either empty or has co-dimension $n+3$ (and then
dimension $2n-1$). Then, its projection $\pi_{\frakc}$ on the $\frakc_1, \frakc_2$
space has co-dimension at least $1$. In other words, there exists a
non-zero 
polynomial $\Psi_{i, \zeta, \Delta}\in \Q[\frakc_1, \frakc_2]$ such that if
$\Psi_{i, \zeta, \Delta}(\bmc_1, \bmc_2)\neq 0$, then
$\pi_{\frakc}^{-1}\left( \bmc_1, \bmc_2 \right) \cap V\left(
S_{i}(\zeta, \Delta) \right) = \emptyset$. We denote by
$\mathcal{B}_{i,\zeta,\Delta}\subset \GL_n(\C)$ the non-empty Zariski open set defined
by $\Psi_{i,\zeta,\Delta}\neq 0$ and by $\mathcal{B}_{\zeta}$ the
intersection of all $\mathcal{B}_{i,\zeta,\Delta}$ when $i$ ranges
over $\{1, \ldots, n\}$ and $\Delta$ ranges over the $\left( n-1
\right)\times \left( n-1 \right)$ minors of $J_{\bm{f}}$. Note that
when $\bmA\in \mathcal{B}_{\zeta}$, there is no point $z$ in $\reg\left(
\curve \right)\setminus \{\zeta\} $ such that $\bma_1(z) =
\bma_1(\zeta)$ and $\bma_2(z) = \bma_2(\zeta)$ ($\bma_1$ and $\bma_2$
denote the first and second entry of $\bmA\bmx$). \\
Now, take $z\in \sing(\curve)$. Since $\zeta\in \reg(\curve)$ by
assumption, we have $z\neq \zeta$. Hence, 
$\Phi_{z, \zeta} = \left( \fraka_1(z)-\fraka_1(\zeta)
 \right) \left( \fraka_2(z)-\fraka_2(\zeta) \right) \in \Q[\frakc_1,
 \frakc_2]$ is not identically zero. Let $\mathcal{B}_{z,\zeta}$ be
 the non-empty Zariski open set in $\GL_n(\C)$ defined by $\Phi_{z,
 \zeta}\neq 0$. Let $\mathcal{B}_{\zeta,\sing} = \cap_{z\in \sing\left( \curve
 \right) } \mathcal{B}_{z,\zeta}$.\\
 Take $\bmA\in \mathcal{A}_{\zeta} = \mathcal{U}\cap \mathcal{B}_{\zeta}\cap
 \mathcal{B}_{\zeta, \sing}$. 
 Note that $\curve_{2,\bmA} = \pi_{\bma_1, \bma_2}(\curve^\bmA)$ (by
 \emph{(H)}) and
 for all $z\in \curve$, there is no point such that $\bma_1(z) =
 \bma_1(\zeta)$ and $\bma_2(z) = \bma_2(\zeta)$, i.e.
 $\pi_{\bma_1, \bma_2}(\zeta)$ is not a node of $\curve_{2,\bmA}$. 
\end{proof}

 \begin{proof}[Proof of \cref{Twoparametrization}]
 Consider the finite set $\pi_2^{-1}(\app(\mathcal C_2))\subset \reg(\mathcal C)$.
Since the condition of being an apparent singularity is stronger than that of
being a node, it suffices to define
$\mathcal A := \bigcap_{\zeta}\mathcal A_{\zeta}$,
where each $\mathcal A_{\zeta}$ is given by Lemma~\ref{fixnode}.
The resulting set $\mathcal A$ is a non-empty Zariski open set.
\end{proof}

\balance
\bibliographystyle{plain}
\bibliography{bibliography}

\appendix 

\section[\NoCaseChange{Subroutines}]{\texorpdfstring{\NoCaseChange{Subroutines}}{Subroutines}}

\begin{algorithm}[H]
\caption{\textsc{SignSamples}}
\label{alg:compute-sign-samples}
\begin{algorithmic}[1]
\Require $f\in\mathbb{Q}[x_1, x_2]$ and polynomials
$w_1,\dots,w_{d_2}\in\mathbb{Q}[x_1]$.
\Ensure For each $k\in\{1,\dots,d\}$, a finite set $\mathcal{Q}_k\subset\mathcal{C}_{\mathbb R}$
together with a sign map $s_k:\mathcal{Q}_k\to\{-1,0,+1\}$.

\State $w_1' \;\leftarrow\; \mathrm{d}w_1/ \mathrm{d}x_1$

\State Compute $B>0$ bounding the real roots of $w_1,\dots,w_{d_2}$.

\State Choose $\theta_1,\theta_2\in\mathbb{Q}$ such that $\theta_1<-B$ and $\theta_2>B$

\State ($\mathcal{Q}_1,s_1) \;\leftarrow\; \textsc{AssignSigns}(1,w'_1)$

\For{\textbf{each} $k\in\{2,\dots,d_2\}$}
    \State $P_k' \;\leftarrow\; \mathrm{d}(w_1\,w_k)/\mathrm{d} x_1$
    \State ($\mathcal{Q}_k,s_k) \;\leftarrow\; \textsc{AssignSigns}(k,P'_k(x-\theta_1)(x-\theta_2))$

\EndFor

\State \textbf{return} $\bigl(\{\mathcal{Q}_k\}_{k=1}^{d_2},\ \{s_k\}_{k=1}^{d_2}\bigr)$
\end{algorithmic}
\end{algorithm}

\begin{algorithm}[!htb]
\caption{\textsc{PreCond}}
\label{alg:pre-conditioning}
\begin{algorithmic}[1]
\Require
$\bmf\in\mathbb{Q}[x_1, x_2]$ 
\Ensure $(f_1, \ldots, f_r) \subset \Q[x_1, x_2]$ such that $f = f_1
\cdots f_r$ and for all $1\leq i \leq r$, $f_i$ satisfies $\calS$. 

\For{$k = 1,\ldots,d_2 - 1$}
  \State $g \leftarrow \gcd\left( f, {\partial^k f} / {\partial
  x_2^k} \right)$
  \If{$\deg(g) >0$}
    \State \Return $\textsc{PreCond}(g) \cup
    \textsc{Precond}(f / g)$
  \EndIf
\EndFor
\State \Return $f$
\end{algorithmic}
\end{algorithm}

\begin{algorithm}[!htb]
\caption{\PartialComponents}
\begin{algorithmic}[1]
\Require
$\curveparam=(w,\rho_2,...,\rho_n)$ and
$\frakP$ a one- and zero-dimensional parametrization.
\Ensure
A partial description of the connected components of $\curve_\R$, outside
$\app(\curve_2)$.

\State
$q_{\mathrm{app}} \leftarrow
\textsc{ApparentSingularities}(\curveparam)$

\State\label{partial:plane}
$[\mathcal{H}_w, V_{\mathrm{app}}, V_{\mathcal{P}}]
\leftarrow
\GenPlanarComponents(w,\mathcal{P}_2,q_{\mathrm{app}})$

\State \Return
$\mathcal{H}'_w \leftarrow
\textsc{SANodeResolution}(\mathcal{H}_w,V_{\mathrm{app}})$

\end{algorithmic}
\end{algorithm}

\section[\NoCaseChange{Degree, height and complexity for \cref{lemma:P}}]{\texorpdfstring{\NoCaseChange{Degree, height and complexity for \cref{lemma:P}}}{Degree, height and complexity for Lemma~\ref{lemma:P}}}
We provide here the technical details
yielding the degree and height bounds in the proof of
\cref{lemma:P}. The system of polynomials in $\Q[\bmx, z,
\bm{\lambda}]$ with $\bmx = (x_1, \ldots, x_n)$ and $\bm{\lambda} =
\left( \lambda_1, \ldots, \lambda_n \right)$ we study is 
  $\bmg = 0, z\Delta - 1 = 0, \bm{\lambda}J(\bm{g}) = \bmm$
With respect to $\bmx, z, \bm{\lambda}$, the $n$ 
polynomials in $\bmg$ have tri-degree bounded by $(d, 0, 0)$ and the
next one has tri-degree bounded by $(n(d-1), 1, 0)$. The first $n$
equations from $\bm{\lambda}J(\bm{g}) = \bmm$ have tri-degree bounded
by $\left( n(d-1), 1, 1 \right) $ while the last one has tri-degree
bounded by $\left( n(d-1), 0, 1 \right) $. Recall that the bit size of
the coefficients of these polynomials is bounded by $b\in \softO{h+n(
n+\log(\frac{1}{\epsilon}) ) }$.  Let $\tau \in \softO{h+n(
n+\log(\frac{1}{\epsilon}) ) +d}$.

To estimate the degree and the height of the algebraic set 
defined by this system, we follow
\cite[Sec. 2]{SaSc18} and  
consider: 
{\small 
  \begin{align*}
    P_1 & =  \left(\tau \zeta + d\vartheta_1 \right)^{n-1}, \quad 
        P_2 = \left(\tau \zeta + n(d-1)\vartheta_1 +\vartheta_2 +\vartheta_3
    \right)^n &\\
    P_3 & = \left(\tau \zeta + n(d-1) \vartheta_1 + \vartheta_2
      \right),\quad 
         P_4  = \left(\tau \zeta + n(d-1) \vartheta_1 + \vartheta_3 \right) &
     \end{align*}}
and 
$\mathbf{P} = P_1 P_2 P_3 P_4 \mod \langle \zeta^2, \vartheta_1^{n+1}, \vartheta_2^2,
  \theta_3^{n+1} \rangle \in \Z[\zeta, \vartheta_1, \vartheta_2,
  \vartheta_3]$.
 
By \cite[Prop. 3]{SaSc18}, the degree of $V$ is bounded
by the sum of the coefficients of $\mathbf{P}(0, \vartheta_1,
\vartheta_2, \vartheta_3)$. Note that $P_1 P_2 P_3 P_4$ is homogeneous 
of degree $2n+1$. The only monomial of degree $2n+1$ which does not
lie in $\mathfrak{m} = \langle \vartheta_1^{n+1}, \vartheta_2^2,
  \theta_3^{n+1} \rangle$ is $\vartheta_1^n \vartheta_2
  \vartheta_3^n$. \\
  We deduce that $\mathbf{P}(0, \vartheta_1,
  \vartheta_2, \vartheta_3) = C \vartheta_1^n \vartheta_2
  \vartheta_3^n$ with $C$ to be determined. \\
  It holds that $P_1(0, \vartheta_1, \vartheta_2, \vartheta_3) =
  d^{n-1}\vartheta_1^{n-1}$. The contributing terms of 
  $P_2(0, \vartheta_1, \vartheta_2, \vartheta_3)$ are of the
  form {\small $\binom{n}{a_2, b_2, c_2}\left(n(d-1)\right
    )^{a_2}\vartheta_1^{a_2} \vartheta_2^{b_2} \vartheta_3^{c_2}$}
  with $0 \le a_2 \le 1$, $0 \le b_2 \le 1$ and $0 \le c_2 \le n$ and
  $a_2+b_2+c_2 = n$. \\
  Taking into account $P_3$ and $P_4$, we deduce that $\mathbf{P}(0,
\vartheta_1, \vartheta_2, \vartheta_3)$ equals
{\small 
\begin{align*}
  d^{n-1}\sum_{}\binom{n}{a_2, b_2, c_2}\left(n(d-1)\right
    )^{a_2 + 2}\vartheta_1^{a_2 + n + 1} \vartheta_2^{b_2} \vartheta_3^{c_2}
    +\\[-0.5em]
  d^{n-1}\sum_{}\binom{n}{a_2, b_2, c_2}\left(n(d-1)\right
    )^{a_2 + 1}\vartheta_1^{a_2 + n} \vartheta_2^{b_2}
    \vartheta_3^{c_2 + 1}
    +\\[-0.5em]
  d^{n-1}\sum_{}\binom{n}{a_2, b_2, c_2}\left(n(d-1)\right
    )^{a_2 + 1}\vartheta_1^{a_2 + n} \vartheta_2^{b_2 + 1} \vartheta_3^{c_2}
    +\\[-0.5em]
  d^{n-1}\sum_{}\binom{n}{a_2, b_2, c_2}\left(n(d-1)\right
    )^{a_2}\vartheta_1^{a_2 + n-1} \vartheta_2^{b_2 + 1} \vartheta_3^{c_2
    + 1}
\end{align*} 
}
$\mod \mathfrak{m}$, 
  again with $0 \le a_2 \le 1$, $0 \le b_2 \le 1$ and $0 \le c_2 \le
  n$ and $a_2+b_2+c_2 = n$. The first sum is $0$. In the second one,
  we necessarily have $a_2 = 0$, $b_2=1$ and $c_2=n-1$. In the third
  one, we have $a_2=0$, $b_2=0$ and $c_2=n$. 
  In the fourth one, we necessarily have
  $a_2=1$, $b_2=0$ and $c_2=n-1$. Hence, all in all, we have 
  {\small \begin{align*}
    C = & d^{n-1} n(d-1)\left( \binom{n}{0,1,n-1} +
    \binom{n}{0,0,n} + \binom{n}{1,0,n-1} \right) \\[-0.1em]
      = &
    d^{n-1}n(d-1)\left( 2n+1 \right) \in O(n^2 d^n)
  .
  \end{align*}}
To obtain the height, we compute the sum of the coefficients
of $\mathbf{P}$ \cite[Prop. 4]{SaSc18}. Since $P_1 P_2
P_3 P_4$ is homogeneous of degree $2n+1$, the terms contributing to
this sum are those whose associated monomial is one of the following:
$\vartheta_1^n\vartheta_2\vartheta_3^n$, 
$\zeta \vartheta_1^{n-1}\vartheta_2\vartheta_3^n$, 
$\zeta \vartheta_1^n\vartheta_3^n$, 
$\zeta \vartheta_1^{n}\vartheta_2\vartheta_3^{n-1}$. 

Note that $P_1 \mod \mathfrak{m} = d^{n-1} \vartheta_1^{n-1} + \tau
(n-1)d^{n-2} \zeta\vartheta_1^{n-2} $.  We also have $P_2 \mod
\mathfrak{m} = \sum_{} \tau^{e_2} \left (n(d-1)\right
  )^{a_2}\binom{n}{e_2, a_2, b_2, c_2}
  \zeta^{e_2}\vartheta_1^{a_2}\vartheta_2^{b_2}\vartheta_3^{c_2}$ 
  with $0\le e_2 \le  1$, $0 \le a_2 \le 2$, $0\le b_2 \le 1$, $0\le
  c_2 \le n$ and $e_2+a_2+b_2+c_2=n$. Moreover, since the degree of
  $P_3 P_4$ is at most one in $\vartheta_3$ and all monomials in
  $\mathbf{P}$ have degree $2n+1$, we deduce that $n-1 \le
  c_2 \le n$. 
  The coefficients of those with $e_2=0$ are bounded
  by $n^4 d^{a_2}$. The coefficients of those with $e_2=1$ are bounded
  by $\tau n^4d^{a_2} $. We deduce that the sum of the coefficients
  of $\mathbf{P}$ is bounded by the sum of the coefficients of 
  \begin{align*}
    &n^4d^{n-1}\left( d\vartheta_1^{n-1}+ \tau n \zeta
    \vartheta_1^{n-2}\right ) \\[-0.5em]
    &    \left(\sum 
    \left(\tau d^{a_2}\zeta\vartheta_1^{a_2}\vartheta_2^{b_2}\vartheta_3^{c_2} + 
    d^{a'_2}\vartheta_1^{a'_2}\vartheta_2^{b'_2}\vartheta_3^{c'_2} \right)
    \right)P_3P_4\mod \mathfrak{m}
  \end{align*}
  with $a_2+b_2+c_2=n-1$, $0 \le a_2 \le 2$, $0\le b_2 \le 1$, $n-1\le
  c_2 \le n$ and $a'_2+b'_2+c'_2=n-1$, $0 \le a'_2 \le 2$, $0\le b'_2
  \le 1$, $n-1\le c'_2 \le n$. In order to bound the sum of the coefficients of the polynomial above, we may
replace $P_3$ and $P_4$ by
$\bigl(\zeta + nd\,\vartheta_1 + \vartheta_2\bigr)$ and
$\bigl(\zeta + nd\,\vartheta_1 + \vartheta_3\bigr)$, respectively.
We then analyze the coefficients of the resulting polynomial by splitting the
contribution coming from the sum
$d\vartheta_1^{n-1} + \tau n \zeta \vartheta_1^{n-2}$.
We first consider the term $d\vartheta_1^{n-1}$.
In this case, we must have $a_2 \le 1$ and $a'_2 \le 1$.
Moreover, the product with the terms $nd\,\vartheta_1$ coming from $P_3$ or $P_4$
does not vanish modulo $\mathfrak{m}$ if and only if $a_2 = 0$ or $a'_2 = 0$.
It follows that all the coefficients arising from this contribution are bounded by
$\tau n^5 d^{n+1}$.
Applying an analogous reasoning to the polynomial obtained from the term
$\tau n \zeta \vartheta_1^{n-2}$, we see that all the coefficients of this second
polynomial are bounded by $\tau n^6 d^{n}$.
Finally, since the constraints on
$a_2, b_2, c_2, a'_2, b'_2, c'_2$ imply that the number of terms in
$\sum \bigl(
\zeta \vartheta_1^{a_2}\vartheta_2^{b_2}\vartheta_3^{c_2}
+
\vartheta_1^{a'_2}\vartheta_2^{b'_2}\vartheta_3^{c'_2}
\bigr)$
is bounded by $24$, we conclude that the sum of the coefficients of $P$ lies in
$O\bigl(n^5 d^{n+1} + \tau n^6 d^{n}\bigr)\subset \softO{n^6d^{n+1} +
n^8d^n(h+\log\left( \frac{1}{\epsilon} \right) }$.

\noindent
  Finally, we apply \cite[Thm. 1]{SaSc18} (reusing the notation
  therein) with 
  {\small 
  \begin{align*}
    L \in O\left( n^2 +n\binom{n+d}{n} \right), \quad b\in \softO{
    h+n\left( n+\log(\frac{1}{\epsilon}) \right) }\\[-0.5em]
    \mathscr{C}_{\bm{n}}(\bm{d}) \in O\left( n^2d^n \right), \quad 
    \mathscr{H}_{\bm{n}}(\bm{\eta},\bm{d})\in
    \heightP\\[-0.5em]
    N = n+1, \quad \mathfrak{d} \le nd, \quad \bm{\eta} =
    (\tau,\ldots,\tau) \text{ and }
    s \leq \tau.  
\end{align*}}
  Since for $d\geq  2$, it holds that 
  $\binom{n+d}{n} \ge n^2$, we have $L\in O\left(
  n\binom{n+d}{n}\right)$. 
  We deduce that 
  $Lb \in n^2 h + n^3(n+\log(\frac{1}{\epsilon}))$ and that 
  $ \mathscr{C}_{\bm{n}}(\bm{d})\mathscr{H}_{\bm{n}}(\bm{\eta},\bm{d})$
  lies in $O\left(n^7 d^{2n+1} +\tau n^8 d^{2n}\right) \subset
  \softO{ n^{10} d^{2n+1} \left( h+\log(\frac{1}{\epsilon}) \right) }
  $.
  All in all, we obtain a number of bit operations lying in 
  $$\softO{n^{14} d^{2n+2} ( h+\log(\frac{1}{\epsilon})
  )  \binom{n+d}{n} }.$$
